\renewcommand{\baselinestretch}{1.5} 
\newtheorem{theorem}{Theorem}
\newtheorem{assumption}{Assumption}
\newtheorem{lemma}{Lemma}
\newtheorem{remark}{Remark}
\titleformat{\section}{\large\bfseries}{\thesection}{1em}{}
\titleformat{\subsection}{\normalsize\bfseries}{\thesubsection}{1em}{}
\titleformat{\subsubsection}{\normalsize\bfseries}{\thesubsubsection}{1em}{}
\newtheorem{prop}{Proposition}[section]
\tikzset{
    -Latex,auto,node distance =1 cm and 1 cm,semithick,
    state/.style ={ellipse, draw, minimum width = 0.7 cm},
    point/.style = {circle, draw, inner sep=0.04cm,fill,node contents={}},
    bidirected/.style={Latex-Latex,dashed},
    el/.style = {inner sep=2pt, align=left, sloped}
}
\DeclareMathOperator{\E}{\textnormal{\mbox{E}}}
\def\bSig\mathbf{\Sigma}
\newcommand*{\indep}{%
  \mathbin{%
    \mathpalette{\@indep}{}%
  }%
}
\newcommand*{\nindep}{%
  \mathbin{
    \mathpalette{\@indep}{\not}
  }%
}
\newcommand*{\@indep}[2]{%
  \sbox0{$#1\perp\m@th$}
  \sbox2{$#1=$}
  \sbox4{$#1\vcenter{}$}
  \rlap{\copy0}
  \dimen@=\dimexpr\ht2-\ht4-.2pt\relax
  \kern\dimen@
  {#2}%
  \kern\dimen@
  \copy0 
} 
\begin{document}

\def\spacingset#1{\renewcommand{\baselinestretch}%
{#1}\small\normalsize} \spacingset{1}



\begin{center}
\Large{\textbf{Causal inference under transportability assumptions for conditional relative effect measures}}\\
\vspace{1cm}
\small{Guanbo Wang$^ {1, 2,*}$
Alexander W. Levis$^{3}$, Jon A. Steingrimsson$^{4}$, and Issa J. Dahabreh$^ {1,2,5}$ }\\
\vspace{0.5cm}
\footnotesize{$^{1}$CAUSALab, Harvard T.H. Chan School of Public Health, Boston, MA }\\
\footnotesize{$^{2}$Department of Epidemiology, Harvard T.H. Chan School of Public Health, Boston, MA }\\
\footnotesize{$^{3}$Department of Statistics \& Data Science, Carnegie Mellon University, Pittsburgh, PA}\\
\footnotesize{$^{4}$Department of Biostatistics, Brown University, Providence, RI}\\
\footnotesize{$^{5}$Department of Biostatistics, Harvard T.H. Chan School of Public Health, Boston, MA  \\
}
\end{center}
\bigskip
\begin{abstract}
When extending inferences from a randomized trial to a new target population, the transportability condition for conditional difference effect measures is invoked to identify the marginal causal mean difference in the target population. However, many clinical investigators believe that conditional relative effect measures are more likely to be ``transportable'' between populations. Here, we examine the identification and estimation of the marginal counterfactual mean difference and ratio under the transportability condition for conditional relative effect measures. We obtain identification results for two scenarios that often arise in practice when individuals in the target population (1) only have access to the control treatment, and (2) have access to the control and other treatments but not necessarily the experimental treatment evaluated in the trial. We then propose model and rate multiply robust and nonparametric efficient estimators that allow for the use of data-adaptive methods to model the nuisance functions. We examine the performance of the methods in simulation studies and illustrate their use with data from two trials of paliperidone for patients with schizophrenia. We conclude that the proposed methods are attractive when background knowledge suggests that the transportability condition for conditional relative effect measures is more plausible than alternative conditions.
\end{abstract}

\noindent%
{\it Keywords:} multiply robust; non-parametric efficiency; randomized trials; relative effect measures; transportability condition
\vfill

\newpage
\spacingset{1.45} 

\section{Introduction}\label{sec:introduction}

In the growing literature on extending causal inferences from a randomized trial to a target population, conditions about the generalizability or transportability of difference effect measures (e.g., equality of the conditional average treatment effect between populations), or even stronger conditions of transportability in expectation or distribution of potential outcomes across populations, are invoked to identify the marginal causal mean difference \citep{pearl2015, westreich2017transportability, dahabreh2018generalizing,  westreich2019target, dahabreh2020extending, degtiar2023review}. These conditions are strong and untestable and require scrutiny on the basis of substantive knowledge. 

In many clinical contexts, investigators believe that relative effect measures, such as conditional risk and mean ratios, are more likely to be transportable between populations compared to absolute effect measures \citep{glasziou1995evidence, schwartz2006ratio, spiegelman2017evaluating,deeks2019analysing}. Based on these clinical intuitions, \citet{huitfeldt2019collapsibility, huitfeldt2019effect} described a formal transportability condition for relative effect measures. Under a non-nested study design where investigators can access individual-level data from a randomized trial and a sample of separately target population data \citep{dahabreh2021study}, \citet{dahabreh2024learning} used this condition to identify the marginal counterfactual mean difference and ratio in the target population. They also showed that the transportability condition for relative effect measures is incompatible with the transportability condition for difference effect measures unless the treatment has no effect on average or stronger transportability conditions hold. 

Here, we examine how the transportability condition of relative effect measures can be used to estimate the marginal counterfactual mean difference and ratio in two scenarios that often arise in practice when individuals in the target population (1) only have access to the control treatment, or (2) have access to the control and other treatments but not necessarily the experimental treatment evaluated in the trial. In the latter scenario, we consider the possibility of having to use covariates in the target population to control confounding, in addition to those needed for transportability. In both scenarios, it is not feasible to identify the target parameter using data only from the target population, because we do not assume that the experimental treatment is accessible within that population, nor do we assume that the treatment's association with the outcome is unconfounded. In this context, learning about the target population using information about the treatment effect from the trial may be useful both to inform decisions and to plan future studies. 

Under the transportability condition of relative effect measures, we develop model and rate multiply robust and non-parametric efficient estimators for the marginal counterfactual mean difference and ratio. The estimators allow for the use of data-adaptive methods (e.g., machine-learning techniques) to model the nuisance parameters. We also present the results of simulation studies that examine the finite sample performance of the estimators and an application of the methods using data from two trials that compare treatments for schizophrenia.

\section{Uniform use of the control treatment in the target population} \label{sec:scenario1}

We begin by considering the scenario where control treatment is the only option in the target population. This situation arises, for example, when a trial compares a new treatment that is unavailable outside the experimental setting versus a control treatment that is uniformly adopted in the target population \citep{huitfeldt2018choice}.

\subsection{Data and estimands} 

Denote the covariates, the binary population source indicator, binary treatment, and outcome by $X, S, A$, and $Y$, respectively, where $S = 1$ indexes the trial participants, and $S = 0$ indexes the target population; $A=1$ refers to the experimental treatment and $A = 0$ the control treatment. The observed data are represented by $O_i=(X_i, S_i, A_i, Y_i)$, $i = 1, \ldots, n_0+n_1$, where $n_1$ and $n_0$ are the sample sizes of the trial and target population, respectively. We assume that the data are the independent and identically distributed draws of the populations, and the baseline covariates may distribute differently in the two populations. See the data structure in the left panel of Figure \ref{fig:RR_Cases}.
\begin{figure}[h]
\centering
\includegraphics[width=\textwidth]{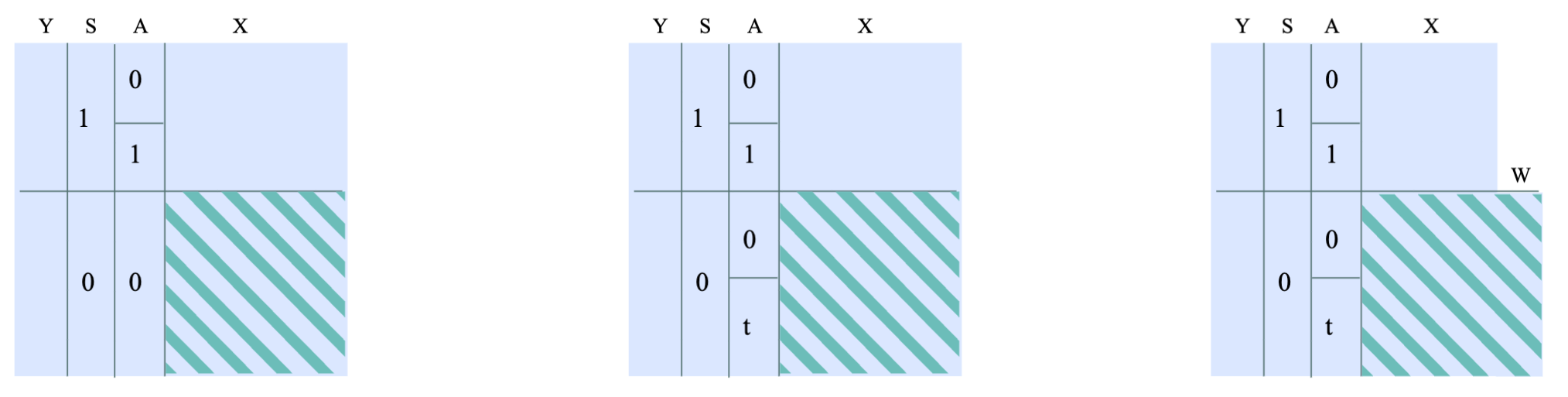}
\caption{The left, middle, and right panel represent the data structure of the case described in Section \ref{sec:scenario1}, \ref{sec:scenario2} and \ref{sec:scenario3} respectively. The shaded dataset represents the target population. The treatment $t$ can be any treatment other than the control treatment.}
\label{fig:RR_Cases}
\end{figure}

Denote the counterfactual outcome under intervention to set treatment $A$ to $a$ by $Y^a$ \citep{neyman1923application, rubin1974estimating, robins2000d}, possibly contrary to the observed outcome $Y$. Throughout this work, our target parameters are the marginal counterfactual means in the target populations $\E[Y^1 | S = 0]$ and $\E[Y^0 | S = 0]$, and their ratio and difference, $\E[Y^1/Y^0 | S = 0]$ and $\E[Y^1-Y^0 | S = 0]$. 

\subsection{Identification analysis}

Consider the following identifiability condition:\\ 
\noindent
\emph{(A1) Consistency:} For each $a \in \{0, 1\}$ and every individual $i$, if $A_i = a$, then $Y_i = Y^{a}_i$.\\
\noindent
\emph{(A2) Exchangeability over $A$ in the trial:} for each $a \in \{0, 1\}$ every $x$ with positive density in the trial $f(x, S = 1) \neq 0$, $\E[Y^a|X=x, S=1, A=a]=\E[Y^a|X=x, S=1]$.\\
\noindent
\emph{(A3) Positivity of treatment assignment in the trial:} $\Pr[A=a | X = x, S=1]>0$ for each $a \in \{0, 1\}$ and every $x$ with positive density in the trial $f(x, S = 1) \neq 0$.\\
\noindent
\emph{(A4) Conditional transportability of relative effect measures over $S$:}  $\E[Y^0 | X = x , S = 1] \neq 0$, and $\E[ Y^0 | X = x , S = 0] \neq 0$, and
\begin{equation*}
        \dfrac{\E [Y^1 | X = x , S = 1]}{\E[Y^0 | X = x , S = 1]} = \dfrac{\E[Y^1  | X = x , S = 0]}{\E[Y^0 | X = x , S = 0]},
\end{equation*} 
for every $x$ with positive density in the target population $f(x, S = 0) \neq 0$.\\
\noindent
\emph{(A5) Positivity of trial participation:} $\Pr[S=1 | X = x]>0$ for every $x$ with positive density in the target population $f(x, S = 0) \neq 0$.\\
\noindent
\emph{(A6) Uniform use of the control treatment in the target
population:} $S=0\implies A=0$, that is, if $S_i=0$, then $A_i=0$
for every individual $i$.\\
\noindent
In particular, condition (A4) connects conditional relative effect measures in the trial and the target population. It can be understood as a formalization of the common intuition that relative treatment effects are reasonably stable across populations, conditional on covariates. The relative common transportability condition is \\
\noindent
\emph{(A4*) Conditional transportability of difference effect measures over $S$:} for each $a\in\{0,1\}$ and every $x$ such that $f(x, S = 0) \neq 0$, $\E[Y^a|X, S=1]= \E[Y^a|X, S=0]$ (to identify marginal counterfactual means and their ratio) and $\E[Y^1-Y^0|X, S=1]= \E[Y^1-Y^0|X, S=0]$ (to identify the difference of the marginal counterfactual means). 

Unless the treatment has no effect on average, condition (A4*) implies but is not necessarily implied by condition (A4) \citep{dahabreh2024learning}. More specifically, conditional on $X$, condition (A4) allows $S$ to be associated with $Y$. That is, in a causally directed acyclic graph, condition (A4) allows a causal path from $S$ to $Y$ not through $X$. See more discussions in Supplementary Materials A (\ref{appendix:CausalStructure}). Such situations may occur, for example, when trial participants have more favored outcomes due to the potential higher adherence or medical advice acquired during follow-up \citep{van2011participation, dahabreh2019generalizing}, or recruiters selectively enroll patients into the trial based on what the next treatment allocation is likely to be \citep{kahan2015risk}. 

Conditions (A1)-(A5) will be assumed throughout this paper; in this section, we consider the special case where condition (A6) holds. The next theorem gives the identification results; the proof is given in \ref{appendix:identification1}, along with other identification strategies.
\begin{theorem}
\label{thm:identification1}
Under conditions (A1) through (A6), $\E[Y^1|S=0]$ and $\E[Y^0| S = 0]$ can be identified by 
$$
\alpha_1 \equiv \E \left[\dfrac{\E[Y|X, S=1, A=1]}{\E[Y|X, S=1, A=1]}Y \Big\vert S = 0\right]
$$
and $\beta_{1}\equiv\E(Y| S = 0)$ respectively. Then $\E[Y^1| S = 0]/\E[Y^0| S = 0]$ and $\E[Y^1-Y^0| S = 0]$ can be identified by  $\phi_{1}\equiv\alpha_{1}/\beta_{1}$ (assuming $\beta_1\neq 0$) and $\psi_{1}\equiv\alpha_{1}-\beta_{1}$, respectively.
\end{theorem}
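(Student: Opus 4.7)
The plan is to verify the identification of $\alpha_1$ and $\beta_1$ separately, after which the identifications of $\phi_1$ and $\psi_1$ follow by algebraic combination.

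Identifying $\beta_1$ is immediate: by (A6), $S=0$ forces $A=0$, so consistency (A1) gives $Y=Y^0$ almost surely on $\{S=0\}$, and taking the conditional expectation yields $\E[Y^0\mid S=0]=\E[Y\mid S=0]=\beta_1$. The substantive work is in $\alpha_1$, and I would proceed by a four-step substitution chain. First, by iterated expectations (well-defined under positivity (A5) on the support of $X\mid S=0$),
\begin{equation*}
\E[Y^1 \mid S=0] \;=\; \E\bigl[\E[Y^1 \mid X, S=0] \,\big\vert\, S=0\bigr].
\end{equation*}
Second, rearrange the transportability condition (A4) to isolate the target-population conditional counterfactual mean,
\begin{equation*}
\E[Y^1 \mid X, S=0] \;=\; \dfrac{\E[Y^1 \mid X, S=1]}{\E[Y^0 \mid X, S=1]} \cdot \E[Y^0 \mid X, S=0],
\end{equation*}
using the nonvanishing denominators supplied by (A4). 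Third, reduce the trial-side counterfactual means: by (A1), (A2), and (A3), $\E[Y^a \mid X, S=1] = \E[Y \mid X, S=1, A=a]$ for $a\in\{0,1\}$. Fourth, reduce the target-side factor: by (A6) and (A1), $\E[Y^0 \mid X, S=0] = \E[Y \mid X, S=0]$. A final iterated-expectation step folds $\E[Y \mid X, S=0]$ back into $Y$ inside the outer expectation over $S=0$, yielding the stated expression — where the denominator in the displayed $\alpha_1$ should read $\E[Y \mid X, S=1, A=0]$ rather than $A=1$ (an evident typo in the statement). The ratio $\phi_1=\alpha_1/\beta_1$ and difference $\psi_1=\alpha_1-\beta_1$ then follow, the former requiring the explicit nondegeneracy $\beta_1\neq 0$ already noted in the theorem.

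There is no real obstacle beyond bookkeeping: the only population-bridging step is (A4), and every other manipulation happens either wholly inside the trial (using (A1)--(A3)) or wholly inside the target population (using (A1) and (A6)). The main care needed is tracking supports so that the objects are well-defined — (A5) for the outer iterated expectation, (A3) for the trial-side conditional means, the nonvanishing clause of (A4) for the ratio, and the nontriviality $\beta_1\neq 0$ for the final ratio $\phi_1$.
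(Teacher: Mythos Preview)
Your proposal is correct and follows essentially the same approach as the paper's proof: iterated expectations over $X$, the multiply-and-divide by $\E[Y^0\mid X,S=0]$ step to invoke (A4), reduction of trial-side counterfactuals via (A1)--(A3), reduction of the target-side factor via (A1) and (A6), and a final fold of $\E[Y\mid X,S=0]$ back into $Y$. You also correctly flag the typo in the displayed $\alpha_1$ (the denominator should condition on $A=0$).
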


Due to condition (A6), the outcome and treatment in $\{S=0\}$ are not confounded. Therefore, it is not necessary to model the outcomes in $\{S=0\}$ to identify the target parameters. Because $\beta_1$ can be estimated by the empirical mean of the outcomes in $\{S=0\}$, the challenging component to estimate is $\alpha_1$ in the estimation of $\phi_1$ and $\psi_1$.

\subsection{Estimation}\label{sec:estimation}
To construct asymptotically normal and non-parametric efficient estimators while allowing flexible estimation of nuisance functions, we will base the estimation of target parameters on their influence functions \citep{bickel1993efficient}. For brevity, we use ``the influence function'' to indicate ``the non-parametric influence function'' in this paper without further notice. 
Define the following nuisance functions, $\kappa\equiv\Pr[S=0]^{-1}, \mu_{s,a}(X)\equiv\E[Y|X, S=s, A=a], q(X)\equiv\Pr[A=1|X, S=1]$, and $ \tau(X)\equiv\Pr[S=0|X]/\Pr[S=1|X]$. In addition, denote the true values of all these nuisance functions by $\boldsymbol{\eta_{1}}=\{\kappa, \mu_{1, 1} (X), \mu_{1, 0} (X), \mu_{0, 0}(X), q(X), \tau(X)\}$, and their estimates by $\boldsymbol{\widehat \eta_{1}}=\{\widehat \kappa, \widehat \mu_{1, 1} (X), \widehat \mu_{1, 0} (X), \widehat \mu_{0, 0}(X), \widehat q(X), \widehat \tau(X)\}$. The following lemma gives the influence functions of the target parameters; the proof is given in \ref{appendix:IF1}. 
\begin{lemma}
\label{thm:IF1}
The influence functions of $\alpha_{1}$ and $\beta_{1}$ are
\begin{align*}
    A_{1}(O, \boldsymbol{\eta_{1}})=&\kappa\Bigg(
    S\tau(X)\dfrac{\mu_{0,0}(X)}{\mu_{1,0}(X)}
    \Big[
    \dfrac{A}{q(X)}\{Y-\mu_{1, 1}(X)\}-\dfrac{1-A}{1-q(X)}\dfrac{\mu_{1,1}(X)}{\mu_{1,0}(X)}\{Y-\mu_{1, 0}(X)\}
    \Big]\\
    &\hspace{0.25in}+(1-S)
    \left\{\dfrac{\mu_{1,1}(X)}{\mu_{1,0}(X)}Y-\alpha_{1}\right\}
    \Bigg),
\end{align*}
\noindent
and $B_{1}(O)=Y(1-S)-\beta_{1}$ respectively. The influence functions of $\phi_{1}$ and $\psi_{1}$ are $\Phi_{1}(O, \boldsymbol{\eta_{1}})=\frac{1}{\beta_{1}}\{A_{1}(O, \boldsymbol{\eta_{1}})-\phi_{1} B_{1}(O)\}$ and $\Psi_{1}(O, \boldsymbol{\eta_{1}})=A_{1}(O, \boldsymbol{\eta_{1}})-B_{1}(O)-\psi_1$ respectively.
\end{lemma}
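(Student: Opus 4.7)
The plan is to derive each stated influence function by computing the Gateaux derivative of the corresponding target parameter along a generic regular parametric submodel $\{P_\epsilon\}$ with score $g(O)$, and identifying the unique mean-zero element $\varphi(O)$ of the nonparametric tangent space for which $(d/d\epsilon)\,\Psi(P_\epsilon)|_{\epsilon=0} = \E[\varphi(O)\,g(O)]$. The case of $\beta_1$ is immediate: writing $\beta_1 = \E[(1-S)Y]/\Pr[S=0]$ and differentiating the ratio yields the stated expression (up to the $\kappa$ normalization), since both the numerator and the $\Pr[S=0]$ in the denominator have elementary influence functions.

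For $\alpha_1$, I would first rewrite
\[
\alpha_1 \;=\; \kappa\,\E\!\left[(1-S)\,\frac{\mu_{1,1}(X)}{\mu_{1,0}(X)}\,Y\right],
\]
and then, using the quotient rule together with the chain rule, decompose $(d/d\epsilon)\alpha_1(P_\epsilon)|_{\epsilon=0}$ into four contributions: (i) variation of $\Pr[S=0]$; (ii) variation of the outer joint law of $(X,S,Y)$; (iii) variation of $\mu_{1,1}$; and (iv) variation of $\mu_{1,0}$. Parts (i)--(ii) combine to give the ``plug-in'' piece $\kappa(1-S)\{\mu_{1,1}(X)Y/\mu_{1,0}(X) - \alpha_1\}$, which is exactly the $(1-S)$ term in $A_1$.

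Parts (iii) and (iv) require substituting the standard pathwise derivative of a conditional mean, $(d/d\epsilon)\mu_{1,a,\epsilon}(x)|_{0} = \E[(Y-\mu_{1,a}(X))g(O)\mid X=x, S=1, A=a]$, and then rewriting the resulting expression in the form $\E[\varphi(O)g(O)]$. Two successive Bayes-type conversions are the key algebraic moves. First, converting $f(x\mid S=0)$ into $f(x\mid S=1)$ via $f(x\mid S=0) = \kappa\,\Pr[S=1]\,\tau(x)\,f(x\mid S=1)$ introduces the selection odds $\tau(X)$ and the indicator $S$; second, converting the within-trial conditional expectation given $(X,S=1,A=a)$ into one given $(X,S=1)$ introduces $A/q(X)$ or $(1-A)/\{1-q(X)\}$ through the identity $\E[h(O)\mid X,S=1,A=1] = \E[A\,h(O)\mid X,S=1]/q(X)$. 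The factor $\mu_{0,0}(X)/\mu_{1,0}(X)$ emerges naturally because, after integrating $Y$ against $(Y\mid X,S=0)$ in the outer expectation, Assumption (A6) forces $\E[Y\mid X,S=0]=\mu_{0,0}(X)$. The minus sign and the extra factor $\mu_{1,1}(X)/\mu_{1,0}(X)$ on the $A=0$ piece are produced by differentiating $1/\mu_{1,0}$ in contribution (iv).

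The influence functions for $\phi_1$ and $\psi_1$ then follow from the functional delta method applied to the pair $(\alpha_1,\beta_1)$: $\mathrm{IF}(\phi_1) = \beta_1^{-1}\{\mathrm{IF}(\alpha_1) - \phi_1\,\mathrm{IF}(\beta_1)\}$ and $\mathrm{IF}(\psi_1) = \mathrm{IF}(\alpha_1) - \mathrm{IF}(\beta_1)$, which yields the stated $\Phi_1$ and $\Psi_1$ (after the usual centering at $\phi_1$ and $\psi_1$). I expect the main obstacle to be not conceptual but algebraic: carrying the four product-rule terms through the two density-ratio conversions without sign or indexing slips, and recognizing the cancellation between the $\kappa$ from the quotient rule and the $\Pr[S=0]$ in the outer integral, which is what collapses the expression into the compact form appearing in $A_1$.
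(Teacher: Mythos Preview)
Your proposal is correct and follows essentially the same route as the paper: pathwise differentiation of $\alpha_1$ along a regular submodel, a product/chain-rule decomposition, and the standard conversions that turn conditional expectations into inverse-weighted unconditional ones (introducing $\tau(X)$ for the $S$-shift and $A/q(X)$, $(1-A)/\{1-q(X)\}$ for the within-trial $A$-shift). The only cosmetic difference is the starting representation: you work from $\alpha_1=\kappa\,\E\big[(1-S)\,\mu_{1,1}(X)Y/\mu_{1,0}(X)\big]$ with $Y$ left explicit, so your pieces (i)+(ii) directly produce the $(1-S)\{\mu_{1,1}Y/\mu_{1,0}-\alpha_1\}$ term, whereas the paper starts from $\alpha_1=\E\big[\mu_{1,1}(X)\mu_{0,0}(X)/\mu_{1,0}(X)\mid S=0\big]$ and obtains the same term only after combining its pieces (1) and (4); the treatment of the $\mu_{1,1}$ and $\mu_{1,0}$ derivatives is identical in both.
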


We propose estimating the target parameters by solving an estimating equation based on an estimated version of the influence functions. For example, we can
use the empirical average of $A_{1}(O, \boldsymbol{\widehat \eta_{1}})$ as an estimating function. Throughout, we assume the use of sample splitting and cross-fitting \citep{schick1986, pfanzagl2012, chernozhukov2018double, kennedy2020sharp} for the estimation, so that the nuisance functions do not rely on empirical process conditions (e.g., Donsker-type conditions) and are allowed to be modeled by data-adaptive methods (e.g., machine learning methods). For example, one can split the data into two halves (each half consists of stratified samples according to the data source), use one to estimate nuisance functions, and the other to compute $\widehat \phi_{s, a}(\widetilde{x})$. The samples can then be swapped, the same procedure is performed, and then averaged to obtain the final estimate.

\subsection{Inference}\label{sec:inference}
We next give the asymptotic properties of the proposed estimator. In particular, we show that, with certain conditions, our estimator is model and rate doubly robust, asymptotically normal, and nonparametrically efficient. 

Since $\beta_1$ can be estimated by the sample mean of $Y(1-S)$, by central limit theory, we know that $\widehat \beta_1$ is $\sqrt{n}$-consistent and asymptotically normal. In the estimation of $\alpha_1$, and thus $\phi$ and $\psi$, we first note that the marginal probability $\kappa$ can be estimated by $\widehat \kappa = \{\mathbb{P}_{n}(1-S)\}^{-1}$, and the estimator satisfies i) $I(\widehat \kappa^{-1}=0)=o_p(n^{-1/2})$, and ii) $\widehat \kappa\xrightarrow{P} \kappa$; see proof in \ref{appendix: kappa}. Since $q(X)$ is the propensity score in the trial population, it is typically known. It is thus reasonable to assume that  $q(X)$ can be estimated at a $\sqrt{n}$-rate. 
To establish the asymptotic properties of $\widehat \alpha_{1}$, thus $\widehat \phi_{1}$ and $\widehat \psi_{1}$, we make the following conditions:
\begin{enumerate}
\item[(a1)] $\exists \varepsilon>0, \quad s.t. \quad \Pr[\varepsilon\leqslant \kappa\leqslant1-\varepsilon]=
\Pr[\varepsilon\leqslant \widehat \kappa\leqslant1-\varepsilon]=\Pr[\varepsilon\leqslant q(X)\leqslant1-\varepsilon]=
\Pr[\varepsilon\leqslant \widehat q(X)\leqslant1-\varepsilon]=
\Pr[\varepsilon\leqslant \tau(X)]=
\Pr[\varepsilon\leqslant \widehat \tau(X)]=1$,
\item[(a2)] $ \E[Y^{2}]<\infty$,
\item[(a3)] $\left\lVert \boldsymbol{\widehat \eta_1}-\boldsymbol{\eta_1} \right\rVert =o_p(1)$ 
\end{enumerate}
Denote $L_2$ norm by $||\cdot||$. The following theorem gives the asymptotic properties of $\widehat \alpha_1$; a detailed proof is given in ~\ref{appendix:inference1}. 
\begin{theorem}
\label{thm:inference1}
If conditions \emph{(a1)} through \emph{(a3)} hold, then $\widehat \alpha_1$ is consistent with rate of convergence $O_p(R_{1, n}+n^{-1/2})$, where  
$
    R_{1, n}=
    \left\lVert \widehat \mu_{1, 1}(X)/\widehat \mu_{1, 0}(X)-\mu_{0, 0}(X)/ \mu_{1, 0}(X)\right\rVert 
    (
    \big\lVert \widehat \mu_{0, 0}(X)$- $\mu_{0, 0}(X) \big\rVert+
    \left\lVert \widehat \mu_{1, 0}(X)-\mu_{1, 0}(X) \right\rVert
    +
    \left\lVert \widehat \tau(X)-\tau(X) \right\rVert
    ).
$
When $R_{1, n}= o_p(n^{-1/2})$, $\widehat \alpha_1$ is asymptotically normal and non-parametric efficient.
\end{theorem}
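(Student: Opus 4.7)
The plan is to use a standard one-step / influence-function-based analysis with cross-fitting. The key first observation is that $\alpha_1$ enters $A_1(O,\boldsymbol{\eta_1})$ only linearly, through the term $-\kappa(1-S)\alpha_1$, and that $\widehat\kappa\,\mathbb{P}_n(1-S)=1$ by definition of $\widehat\kappa$; hence the defining equation $\mathbb{P}_n A_1(O,\boldsymbol{\widehat\eta_1};\widehat\alpha_1)=0$ yields the exact identity
\[
\widehat\alpha_1 - \alpha_1 \;=\; \mathbb{P}_n A_1(O,\boldsymbol{\widehat\eta_1};\alpha_1).
\]
I would then split the right-hand side into four standard pieces,
\[
\widehat\alpha_1 - \alpha_1 \;=\; P A_1(O,\boldsymbol{\eta_1};\alpha_1) \;+\; (\mathbb{P}_n-P) A_1(O,\boldsymbol{\eta_1};\alpha_1) \;+\; T_{\mathrm{emp}} \;+\; T_{\mathrm{bias}},
\]
where $T_{\mathrm{emp}} = (\mathbb{P}_n-P)\{A_1(O,\boldsymbol{\widehat\eta_1};\alpha_1)-A_1(O,\boldsymbol{\eta_1};\alpha_1)\}$ and $T_{\mathrm{bias}} = P\{A_1(O,\boldsymbol{\widehat\eta_1};\alpha_1)-A_1(O,\boldsymbol{\eta_1};\alpha_1)\}$.

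Three of these pieces are routine. The first vanishes because $A_1$ is an unbiased IF at the truth: iterated expectations kill both trial residuals via $\E[Y-\mu_{1,a}(X)\mid X,S=1,A=a]=0$, while the target piece evaluates to $\kappa\Pr[S=0]\alpha_1$ after substituting $\E[Y\mid X,S=0]=\mu_{0,0}(X)$ from condition (A6) and invoking the identification formula of Theorem~\ref{thm:identification1}. The second term is $O_p(n^{-1/2})$ by the CLT, with variance $\mathrm{Var}\{A_1(O,\boldsymbol{\eta_1};\alpha_1)\}$ finite under (a1)--(a2), which bound $1/\kappa$, $1/q(X)$, $1/(1-q(X))$, and $\tau(X)$, and impose $\E[Y^2]<\infty$. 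The empirical-process piece $T_{\mathrm{emp}}$ is $o_p(n^{-1/2})$ by the standard cross-fitting argument: conditionally on the auxiliary fold, its variance is bounded by $\|A_1(O,\boldsymbol{\widehat\eta_1};\alpha_1)-A_1(O,\boldsymbol{\eta_1};\alpha_1)\|^2/n$, which is $o_p(1/n)$ by (a1) and (a3), so Chebyshev's inequality suffices and no Donsker-type control of the nuisance classes is needed.

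The heart of the proof, and the main obstacle, is the drift term $T_{\mathrm{bias}}$. I would expand $P A_1(O,\boldsymbol{\widehat\eta_1};\alpha_1)$ by iterated expectations, using $\E[A\mid X,S=1]=q(X)$, $\E[Y\mid X,S=1,A=a]=\mu_{1,a}(X)$, $\E[Y\mid X,S=0]=\mu_{0,0}(X)$, and $\Pr[S=1\mid X]\tau(X)=\Pr[S=0\mid X]$. Because $q$ is assumed known or estimated at parametric rate, contributions involving $\widehat q-q$ are absorbed into $o_p(n^{-1/2})$. After regrouping, I expect the three IF components to combine so that all first-order errors cancel and only a single bilinear expression survives, of the form
\[
T_{\mathrm{bias}} \;=\; \kappa\,\E\!\left[\Pr[S=0\mid X]\left(\frac{\widehat\mu_{1,1}(X)}{\widehat\mu_{1,0}(X)} - \frac{\mu_{0,0}(X)}{\mu_{1,0}(X)}\right) g(X;\boldsymbol{\widehat\eta_1},\boldsymbol{\eta_1})\right] + o_p(n^{-1/2}),
\]
where $g$ depends linearly on $\widehat\mu_{0,0}-\mu_{0,0}$, $\widehat\mu_{1,0}-\mu_{1,0}$, and $\widehat\tau-\tau$ with coefficients bounded under (a1). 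Cauchy--Schwarz then bounds $T_{\mathrm{bias}}$ by a constant times $R_{1,n}$. The delicate step will be the algebraic bookkeeping that confirms these cancellations: each of the three IF components contributes linear errors individually, and it is only through condition (A4) and the precise structure of Lemma~\ref{thm:IF1} that these errors realign into a product rather than survive as additive linear terms.

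Combining the four pieces yields $\widehat\alpha_1-\alpha_1 = O_p(R_{1,n}+n^{-1/2})$, giving the stated rate of convergence. When $R_{1,n}=o_p(n^{-1/2})$ the decomposition reduces to $\sqrt{n}(\widehat\alpha_1-\alpha_1)=\sqrt{n}(\mathbb{P}_n-P)A_1(O,\boldsymbol{\eta_1};\alpha_1)+o_p(1)$, so Slutsky's theorem yields $\sqrt{n}(\widehat\alpha_1-\alpha_1) \cd N\bigl(0,\mathrm{Var}\{A_1(O,\boldsymbol{\eta_1};\alpha_1)\}\bigr)$. Since $A_1$ is the non-parametric efficient influence function by Lemma~\ref{thm:IF1}, this limiting variance equals the semiparametric efficiency bound, establishing non-parametric efficiency of $\widehat\alpha_1$.
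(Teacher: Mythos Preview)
Your proposal is correct and follows essentially the same three-term decomposition as the paper: an empirical-process remainder controlled by cross-fitting (the paper invokes Lemma~2 of Kennedy, 2020), a CLT term, and a drift/bias term whose second-order product structure is established by iterated expectations. Your exact identity $\widehat\alpha_1-\alpha_1=\mathbb{P}_n A_1(O,\boldsymbol{\widehat\eta_1};\alpha_1)$, obtained from $\widehat\kappa\,\mathbb{P}_n(1-S)=1$, is a mild streamlining of the paper's argument: the paper works instead with the uncentered function $H(\boldsymbol{\eta_1})=A_1+\kappa(1-S)\alpha_1$ and must track the $(\widehat\kappa-\kappa)$ contribution explicitly, eventually showing that the residual $(\widehat\kappa/\kappa-1)\alpha_1+(\kappa/\widehat\kappa-1)\alpha_1=(\widehat\kappa-\kappa)^2\alpha_1/(\kappa\widehat\kappa)=o_p(n^{-1/2})$; in your formulation this cancellation happens automatically inside $T_{\mathrm{bias}}$. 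The only part you leave as an ``I expect'' is the algebraic verification that $T_{\mathrm{bias}}$ reduces to the stated bilinear form; the paper carries this out in detail and it does go through as you anticipate (note the first factor in $R_{1,n}$ should read $\mu_{1,1}/\mu_{1,0}$, not $\mu_{0,0}/\mu_{1,0}$---a typo in the theorem statement that you reproduced).
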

\noindent
Because $\widehat \beta_1$ is $\sqrt{n}$-consistent and asymptotically normal, the estimators $\widehat \phi_1$ and $\widehat \psi_1$ share the same properties as $\widehat \alpha_1$.
Theorem \ref{thm:inference1} states that to obtain  $o_p(1)$-consistent estimators, we only require that all nuisance functions are consistently estimated---though potentially quite slowly---at an $o_p(1)$ convergence rate. Many flexible and data-adaptive machine learning methods satisfy the requirement, which creates opportunities to flexibly adjust for high-dimensional covariates. More importantly, the theorem lays out the condition for the estimator to be $\sqrt{n}$-consistent, asymptotically normal, and nonparametric efficient--it is sufficient that $R_{1, n} = o_p(n^{-1/2})$. Next, we will analyze this condition in several aspects.

Firstly, through this condition, we characterize the convergence rate requirements of nuisance functions estimation for the estimators to be $\sqrt{n}$-consistent and asymptotically normal. Because $R_{1, n}$ is a product of the errors of two sets of nuisance functions, $\widehat \alpha_1$ doubly robust. More specifically, depending on whether the nuisance functions are estimated parametrically or non-parametrically, the estimator can be said to be \textit{model doubly robust} or \textit{rate doubly robust} \citep{chernozhukov2018double, smucler2019unifying, rotnitzky2021characterization, ying2024geometric}. Formally, $\widehat \alpha_1$ is model double robust in terms of $\{\widehat \mu_{1, 1}(X)/\widehat \mu_{1, 0}(X)\}$ and $\{ \widehat \mu_{0, 0}(X), \widehat \mu_{1, 0}(X), \widehat \tau(X)\}$ because if either set of nuisance functions can be correctly specified by parametric models and converge at a $\sqrt{n}$-rate, the condition $R_{1, n} = o_p(n^{-1/2})$ holds, and $\widehat \alpha_1$ is $\sqrt{n}$-consistent and asymptotically normal. On the other hand, $\widehat \alpha_1$ is rate double robust in terms of the two sets of nuisance functions because when the product of the error rates of the (non-parametric) estimates of the two sets of the nuisance functions converges faster than $\sqrt{n}$, the condition $R_{1, n} = o_p(n^{-1/2})$ holds, and $\widehat \alpha_1$ is $\sqrt{n}$-consistent and asymptotically normal. For example, when $\widehat \mu_{1, 1}(X), \widehat \mu_{1, 0}(X), \widehat \mu_{0, 0}(X), \widehat \tau(X)$ can be estimated by $n^{1/4}$-rates,
$R_{1, n} = o_p(n^{-1/2})$. This is possible under sparsity, smoothness, or other assumptions about the underlying models. For instance, under some conditions \citep{horowitz2009semiparametric}, generalized additive model estimators can attain rates of $O_p(n^{-2/5})$, which is $o_p(n^{-1/4})$. In general, the convergence rate for $\widehat \alpha_1$ can be much faster than that of any component nuisance function \citep{kennedy2019robust}. If the condition $R_n = o_p(n^{-1/2})$ is satisfied, then confidence intervals can be constructed using either bootstrap or a direct estimate of the asymptotic variance, which can be computed by the sample variance of the influence functions. 

Secondly, we examine the first component of $R_{1, n}$, $\left\lVert \widehat \mu_{1, 1}(X)/\widehat \mu_{1, 0}(X)-\mu_{1, 1}(X)/ \mu_{1, 0}(X)\right\rVert$. Two caveats need our attention. The first is its relationship to $\big\lVert \widehat \mu_{1, 1}(X)-\mu_{1, 1}(X) \big\rVert+\left\lVert \widehat \mu_{1, 0}(X)-\mu_{1, 0}(X) \right\rVert$. Note that, $\forall a\leqslant 0$,
\begin{align}
    &\big\lVert \widehat \mu_{1, 1}(X)-\mu_{1, 1}(X) \big\rVert+\left\lVert \widehat \mu_{1, 0}(X)-\mu_{1, 0}(X) \right\rVert=o_p(n^{a}) \label{stronger condition}\\
    \Rightarrow
    &\left\lVert \widehat \mu_{1, 1}(X)/\widehat \mu_{1, 0}(X)-\mu_{1, 1}(X)/ \mu_{1, 0}(X)\right\rVert=o_p(n^{a}) \label{original condition}\\
    \Leftrightarrow
    &\exists\delta(X)\neq 0 \text{ s.t. }\left\lVert \widehat \mu_{1, 1}(X)-\mu_{1, 1}(X)\delta(X)\right\rVert+ \left\lVert \widehat \mu_{1, 1}(X)-\mu_{1, 1}(X)\delta(X)\right\rVert=o_p(n^{a}),\label{equivalent condition}
\end{align}
where the arrow means ``implies''. Conditions (\ref{original condition}) and (\ref{equivalent condition}) are weaker than condition (\ref{stronger condition}) because they do not require $\widehat \mu_{1, 1}(X)$ and $\widehat \mu_{1, 0}(X)$ to converge to their true values. However, they require that the two estimates have the same amount of bias. This may happen, for instance, when the observed outcomes (with measurement error) in the trial are systematically larger than the true values. 
\begin{assumption}\label{assumption}
    For two functions of $X$, $f(X)$ and $g(X)\neq 0$, and their estimates $\widehat f(X)$ and $\widehat g(X)\neq 0$, $||\widehat f(X)/\widehat g(X)-f(X)/g(X)||=o_p(n^{a})\Leftrightarrow||\widehat f(X)-f(X)||+||\widehat g(X)-g(X)||=o_p(n^{a})$ for any $a\leqslant 0$.
\end{assumption}
When such nuance is ignored, i.e., conditions (\ref{stronger condition}) and (\ref{original condition}) are exchangeable, or when assumption \ref{assumption} holds, $R_{1, n}$ can be written as
\begin{align}\label{equ:alternativeR1n}
    \left\lVert \widehat \mu_{1, 0}(X)- \mu_{1, 0}(X)\right\rVert
    +\left\lVert \widehat \mu_{1, 1}(X)- \mu_{1, 1}(X)\right\rVert
    (
    \big\lVert \widehat \mu_{0, 0}(X)-\mu_{0, 0}(X) \big\rVert
    +
    \left\lVert \widehat \tau(X)-\tau(X) \right\rVert
    ).
\end{align}
The change in $R_{1, n}$ will change the convergence rate requirements for the nuisance functions estimation. For example, when nuisance functions are estimated via parametric models, to obtain $\sqrt{n}$-consistent estimators, $\widehat \mu_{1, 0}(X)$ must converge at a $\sqrt{n}$-rate. In addition, $\widehat \mu_{1, 1}(X)$ or $\{\widehat \mu_{0, 0}(X), \tau(X)\}$ has to converge at a $\sqrt{n}$-rate. The quantity $\widehat \mu_{1, 0}(X)$ is critical to the estimation of $\alpha_{1}$ because we do not invoke the condition (A4*) but (A4) to identify $\alpha_{1}$. This results in the fact that $\widehat \mu_{1, 0}(X)$ appears in each of the terms in the influence function, which means that $\widehat \mu_{1, 0}(X)$ heavily influences the estimation.

Another caveat to pay attention to is that 
$$
\left\lVert \widehat \mu_{1, 1}(X)/\widehat \mu_{1, 0}(X)-\mu_{1, 1}(X)/ \mu_{1, 0}(X)\right\rVert=o_p(n^{a}) 
    \nLeftrightarrow
    \left\lVert \widehat r(X)-\mu_{1, 1}(X)/ \mu_{1, 0}(X)\right\rVert=o_p(n^{a}), 
$$
where $\widehat r(X)$ is a direct estimate of $\mu_{1, 1}(X)/ \mu_{1, 0}(X)$  (e.g., using a Poisson model when the outcome is count or more flexible methods \citep{yadlowsky2021estimation}) rather than the estimate that obtained as a ratio of two outcome model estimates $\widehat \mu_{1, 1}(X)/\widehat \mu_{1, 0}(X)$. That is, even if $\mu_{1, 1}(X)/ \mu_{1, 0}(X)$ can be estimated using a correctly specified parametric model and converge at a $\sqrt{n}$-rate, it does not imply that $\widehat \mu_{1, 1}(X)/\widehat \mu_{1, 0}(X)$ can converge at a $\sqrt{n}$-rate, and thus $\widehat \alpha_1$ may not be $\sqrt{n}$-consistent. Because $R_{1, n}$ cannot be written as a function of $\left\lVert \widehat r(X)-\mu_{1, 1}(X)/ \mu_{1, 0}(X)\right\rVert$, the estimator that involves estimating $\mu_{1, 1}(X)/ \mu_{1, 0}(X)$ directly does not share the properties stated in Theorem \ref{thm:inference1}.

Thirdly, the $R_{1, n}$ in Theorem \ref{thm:inference1} can be also written as $\big(
    || \widehat \mu_{1, 1}(X)-\mu_{1, 1}(X) ||+
    || \widehat \mu_{1, 0}(X)-\mu_{1, 0}(X) ||\big)
    \big(
    || \widehat \mu_{0, 0}(X)/\widehat \mu_{1, 0}(X)-\mu_{0, 0}(X)/\mu_{1, 0}(X) ||+
    \left\lVert \widehat \tau(X)-\tau(X) \right\rVert\big)$; see the proof in \ref{appendix:inference1}. It implies that, if $\{\mu_{0, 0}(X)/\mu_{1, 0}(X), \mu_{1, 1}(X), \mu_{1, 0}(X), \tau(X)\}$ (instead of \\$\{\mu_{1, 1}(X)/\mu_{1, 0}(X),\mu_{0, 0}(X), \mu_{1, 0}(X), \tau(X)\}$) can be estimated at, for example, $n^{1/4}$ rates, then $R_{1, n}=o_p(n^{-1/2})$ as well. However, when assumption \ref{assumption} holds, this alternative representation of $R_{1, n}$ equals (\ref{equ:alternativeR1n}). That is, the model and the rate double robustness property will not change due to this alternative representation when assumption \ref{assumption} holds.

\subsection{Estimators under the condition (A4*)}\label{sec:A4*}
We provide the identification results, the influence function, and the corresponding estimators of the target parameters under the relatively common condition (A4*) (instead of condition (A4)); see details in \ref{appendix:A4*1}. Applying the same estimation strategy, we denote the estimator under the condition (A4*) by $\widehat \alpha'_{1}$. The corresponding bias term $R'_{1n}$ is $\left\lVert \widehat \mu_{1, 1}(X)-\mu_{1, 1}(X) \right\rVert \left\lVert \widehat \tau(X)-\tau(X) \right\rVert$. The estimator is rate and model doubly robust in terms of $\widehat \mu_{1, 1}(X)$ and $\widehat \tau(X)$. However, when condition (A4) holds but condition (A4*) not, $\widehat \alpha'_{1}$ is biased even if all the nuisance functions are correctly specified. We further verify this claim in the simulation study.

\section{Treatment variation in the target population} \label{sec:scenario2and3}

In this section, we examine the scenario where the sample from the target population has access to treatments beyond the control group. To illustrate, consider a trial comparing the effects between an experimental treatment and control treatment, alongside another previous study (target population) investigating the effect comparison between another now-existing treatment (or multiple treatments) and control treatment. In this particular context, our objective is to investigate the effect comparison between experimental treatment and control treatment within this specific target population, which may be of interest in the drug approval process. 


\subsection{Sufficient confounding control}\label{sec:scenario2}
In this subsection, we assume that the covariates collected in both the trial and the target population are sufficient to control for confounding. See the corresponding data structure depicted in the middle panel of Figure \ref{fig:RR_Cases}. In such a data structure, unlike the data structure described in Section \ref{sec:scenario1}, the outcome in the target population is confounded.

To identify the target parameters, we make the following additional conditions.

\noindent
\emph{(B1) Partial exchangeability:} for every $x$ with positive density in the target population $f(x, S=0)\neq 0$, $\E [ Y^0 | X = x , S = 0, A = 0] = \E[Y^0 | X = x, S = 0]$.

\vspace{0.1in}
\noindent
\emph{(B2) Positivity of receiving control:}  for every $x$ with positive density in the target population $f(x, S=0)\neq 0$, $0<\Pr[A = 0 | X = x, S = 0]<1$.

\begin{remark}
   We do not require the full exchangeability over $A$ in $\{S=0\}$ (i.e., $\E[Y^a | X = x , S = 0, A = 0]= \E [Y^a | X = x, S = 0]$ for all possible $a$ in the target population). Instead, we only require a weaker condition--Condition (B1) to identify the target parameters; interested readers can refer to \citep{sarvet2020graphical}. 
\end{remark}
Condition (B1) is untestable and requires a case-by-case examination based on domain knowledge. Condition (B2) is testable in principle, but formal evaluation may be challenging if $X$ is high-dimensional \citep{petersen2012diagnosing}.

The next theorem gives the identification results; the proof is given in \ref{appendix:identification2}, along with other identification strategies.
\begin{theorem}
\label{thm:identification2}
Under conditions (A1) through (A5), (B1), and (B2), $\E[Y^1|S=0]$ and $\E[Y^0|S=0]$ can be identified by $\alpha_2\equiv\E[\mu_{1, 1}(X) \mu_{0, 0}(X) /\mu_{1, 0}(X)| S = 0]$ and 
and $\beta_{2}\equiv\E[\mu_{0, 0}(X) | S = 0]$, respectively. Then $\E[Y^1| S = 0]/\E[Y^0| S = 0]$ and $\E[Y^1-Y^0| S = 0]$ can be identified by  $\phi_{2}\equiv\alpha_{2}/\beta_{2}$ (assuming $\beta_2\neq 0$) and $\psi_{2}\equiv\alpha_{2}-\beta_{2}$, respectively.
\end{theorem}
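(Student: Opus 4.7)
The plan is to identify $\beta_2$ and $\alpha_2$ separately by first writing each marginal counterfactual mean as an iterated expectation $\E[\E[Y^a\mid X, S=0]\mid S=0]$, and then expressing the inner conditional expectation in terms of observable nuisance functions using the identifying conditions. Once $\alpha_2$ and $\beta_2$ are identified, the ratio $\phi_2$ and difference $\psi_2$ follow immediately: by the linearity of expectation $\E[Y^1 - Y^0 \mid S=0] = \alpha_2 - \beta_2 = \psi_2$, and (provided $\beta_2 \neq 0$) the ratio of identified quantities identifies the ratio of the means.

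For $\beta_2$, I would first write $\E[Y^0 \mid S=0] = \E[\E[Y^0\mid X, S=0]\mid S=0]$ by iterated expectations. Apply (B1) to collapse the conditioning on $A=0$: $\E[Y^0 \mid X, S=0] = \E[Y^0\mid X, S=0, A=0]$. Then (A1) consistency replaces $Y^0$ with $Y$ inside that conditional, giving $\mu_{0,0}(X)$, where condition (B2) ensures this conditional mean is well-defined on the target support. Taking the outer expectation yields $\beta_2$.

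For $\alpha_2$, the key step is using (A4) to export the trial's conditional relative effect into the target population:
\begin{equation*}
\E[Y^1 \mid X, S=0] \;=\; \frac{\E[Y^1\mid X, S=1]}{\E[Y^0\mid X, S=1]}\cdot \E[Y^0\mid X, S=0],
\end{equation*}
which is valid for $x$ with $f(x, S=0)\neq 0$ because (A4) guarantees $\E[Y^0\mid X=x, S=1]\neq 0$ there, so the division is legitimate. I would then replace $\E[Y^1\mid X, S=1]$ and $\E[Y^0\mid X, S=1]$ by $\mu_{1,1}(X)$ and $\mu_{1,0}(X)$ using (A2) exchangeability and (A1) consistency in the trial, with (A3) and (A5) ensuring these quantities are well-defined wherever the target distribution places mass. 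The factor $\E[Y^0\mid X, S=0]$ is identified as $\mu_{0,0}(X)$ by the same argument used for $\beta_2$. Substituting and taking the outer expectation over $X \mid S=0$ gives $\alpha_2$.

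The main obstacle is keeping track of where each conditional expectation is actually defined and nonzero. Specifically, (A4) requires $\mu_{1,0}(X) \neq 0$ on the target support so that the rearranged display above is a valid manipulation (and so that $\alpha_2$ is not an indeterminate form); (A5) is what lets us invoke trial-based quantities at points $x$ drawn from the target population; and (A3) in conjunction with (A5) is what ensures $\mu_{1,1}(X)$ and $\mu_{1,0}(X)$ are defined on that support. Once these support checks are dispatched, the identification proof is a straightforward chain of substitutions, and the results for $\phi_2$ and $\psi_2$ are immediate consequences.
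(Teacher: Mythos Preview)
Your proposal is correct and follows essentially the same approach as the paper: iterated expectation, then (A4) to import the trial's conditional relative effect, (A2)+(A1) to reduce the trial ratio to $\mu_{1,1}/\mu_{1,0}$, and (B1)+(A1) to reduce $\E[Y^0\mid X,S=0]$ to $\mu_{0,0}$, with (A3), (A5), (B2) and the non-vanishing clauses of (A4) handling the support and division checks. The paper's write-up differs only cosmetically in that it first multiplies and divides by $\E[Y^0\mid X,S=0]$ before invoking the transportability equality, whereas you state the rearranged identity directly.
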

Compared to the identification results showed in Theorem \ref{thm:identification1}, because the outcome in the target population are confounded, the outcome model in the target population $\mu_{0, 0}(X)$ is necessary for identifying the target parameters. 

Denote a new nuisance function $\Pr[A=0|X, S=0]$ by $\pi(X)$. Furthermore, for the estimation of $\alpha_2$, let $\boldsymbol{\eta_{2}}$ be the new collection of nuisance functions $\{\boldsymbol{\eta_{1}}, \pi(X)\}$, and let $\boldsymbol{\widehat \eta_{2}}=\{\boldsymbol{\widehat \eta_{1}}, \widehat \pi(X)\}$ be its estimates; for the estimation of $\beta_1$, denote $\boldsymbol{\eta'_{2}}=\{\kappa, \mu_{0,0} (X), \pi(X)\}$ and $\boldsymbol{\widehat \eta_{2}}=\{\widehat \kappa, \widehat \mu_{0,0} (X), \widehat \pi(X)\}$. Similar to Section \ref{sec:estimation}, we will base the estimation of target parameters on their influence functions. The following lemma gives the influence functions of the target parameters; the proof is given in \ref{appendix:IF2}. 
\begin{lemma}
\label{thm:IF2}
The influence functions of $\alpha_{2}$, $\beta_{2}$, $\phi_{2}$, and $\psi_{2}$ are
\begin{align*}
    A_{2}(O, \boldsymbol{\eta_{2}})=&\kappa\Bigg(
    (1-S)
    \dfrac{\mu_{1, 1}(X)}{\mu_{1, 0}(X)}\Bigg[\mu_{0, 0}(X)-\alpha_{2}  +\dfrac{1-A}{\pi(X)}\big\{Y-\mu_{0, 0}(X)\big\}
    \Bigg]\\
    &+
    S\tau(X)\dfrac{\mu_{0, 0}(X)}{\mu_{1, 0}(X)}\Bigg[
    \dfrac{A}{q(X)}\big\{Y-\mu_{1, 1}(X)\big\}-\dfrac{1-A}{1-q(X)}\dfrac{\mu_{1, 1}(X)}{\mu_{1, 0}(X)}\big\{Y-\mu_{1, 0}(X)\big\}
    \Bigg]
    \Bigg),\\
    B_{2}(O, \boldsymbol{\eta'_{2}}) =&\kappa(1-S)  \Bigg[
    \mu_{0, 0}(X)-\beta_{2}  +\dfrac{1-A}{\pi(X)}\big\{Y-\mu_{0, 0}(X)\big\} 
    \Bigg],\\
    \Phi_{2}(O; \boldsymbol{\eta_{2}})=&\dfrac{1}{\beta_{2}}\{A_{2}(O; \boldsymbol{\eta_{2}})-\phi_{2} B_{2}(O; \boldsymbol{\eta'_{2}})\},
    \Psi_{2}(O; \boldsymbol{\eta_{2}})=A_{2}(O; \boldsymbol{\eta_{2}})-B_{2}(O; \boldsymbol{\eta'_{2}}).
\end{align*}
These are also the semiparametric efficient influence functions of the target parameters when $\pi(X)$ is known.
\end{lemma}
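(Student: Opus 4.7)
I will derive both influence functions by pathwise (Gateaux) differentiation along a regular one-dimensional parametric submodel $\{P_\varepsilon\}$ with score $s(O)$, matching each pathwise derivative $\partial_\varepsilon \theta(P_\varepsilon)|_{0}$ to $\E[\mathrm{IF}(O)\,s(O)]$, and then obtain $\Phi_2$ and $\Psi_2$ by the delta method. For $\beta_2 = \kappa\,\E[(1-S)\mu_{0,0}(X)]$, the derivation is the familiar one for a g-formula-style estimand in a single-arm confounded subpopulation: the normalizer $\kappa$ contributes a $-\kappa^2\beta_2\cdot(1-S)$-type term, the marginal derivative over $(X,S)$ contributes $\kappa(1-S)\mu_{0,0}(X)$, and differentiating the conditional mean $\mu_{0,0}$ contributes the IPW residual $\kappa(1-S)(1-A)/\pi(X)\{Y-\mu_{0,0}(X)\}$. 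Collecting these pieces yields $B_2$ directly.

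\textbf{EIF for $\alpha_2$.} Writing $g(X)\equiv\mu_{1,1}(X)\mu_{0,0}(X)/\mu_{1,0}(X)$ and $\alpha_2=\kappa\,\E[(1-S)g(X)]$, I differentiate term by term. The $\kappa$ and marginal-over-$(X,S)$ derivatives together produce the expected $(1-S)\{g(X)-\alpha_2\}$ normalization. For each conditional expectation $\mu_{s,a}$, I will use the identity
\begin{equation*}
\frac{\partial \mu_{s,a}(X;P_\varepsilon)}{\partial \varepsilon}\bigg|_{0} \;=\; \E\!\left[\frac{\mathbb{I}(S=s,A=a)}{\Pr(S=s,A=a\mid X)}\,\{Y-\mu_{s,a}(X)\}\,s(O)\,\Big|\,X\right],
\end{equation*}
combined with the quotient and product rules on $g$:
\begin{equation*}
\partial g \;=\; \frac{\mu_{0,0}}{\mu_{1,0}}\,\partial \mu_{1,1} \;+\; \frac{\mu_{1,1}}{\mu_{1,0}}\,\partial \mu_{0,0} \;-\; \frac{\mu_{1,1}\mu_{0,0}}{\mu_{1,0}^{2}}\,\partial \mu_{1,0}.
\end{equation*}
Taking the outer expectation with weight $(1-S)$ converts $\Pr(S=0\mid X)/\Pr(S=s,A=a\mid X)$ into $\tau(X)/q(X)$ when $(s,a)=(1,1)$, $\tau(X)/\{1-q(X)\}$ when $(s,a)=(1,0)$, and $1/\pi(X)$ when $(s,a)=(0,0)$. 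With the signs inherited from the quotient rule, the three contributions assemble into the trial-side bracket $S\tau(X)(\mu_{0,0}/\mu_{1,0})[A/q\{Y-\mu_{1,1}\}-(1-A)/(1-q)(\mu_{1,1}/\mu_{1,0})\{Y-\mu_{1,0}\}]$ and the target-side residual $(1-S)(\mu_{1,1}/\mu_{1,0})(1-A)/\pi\{Y-\mu_{0,0}\}$ in the claimed $A_2$.

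\textbf{Delta method and efficiency with known $\pi$.} The IFs for $\phi_2=\alpha_2/\beta_2$ and $\psi_2=\alpha_2-\beta_2$ then follow from the delta method applied to $(A_2,B_2)$, giving $\Phi_2=\beta_2^{-1}(A_2-\phi_2 B_2)$ and $\Psi_2=A_2-B_2$. To establish the final efficiency claim, I will show that $A_2$ and $B_2$ are orthogonal to every score of the propensity submodel, which lies in the span of $(1-S)a(X)\{(1-A)-\pi(X)\}$. By iterated expectations conditioning on $(X,S=0,A=0)$ and using the definition $\mu_{0,0}(X)=\E[Y\mid X,S=0,A=0]$, the $(1-A)/\pi\{Y-\mu_{0,0}\}$ factors in $A_2$ and $B_2$ integrate to zero against such scores, so the proposed IFs coincide with the efficient IFs under the reduced tangent space where $\pi$ is known.

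\textbf{Main obstacle.} The most error-prone step is the bookkeeping in the $\mu_{1,0}$-derivative: the minus sign from the quotient rule, the additional factor $\mu_{1,1}/\mu_{1,0}$, and the $(1-A)/(1-q)$ IPW weight must combine so that the trial residual appears as $(\mu_{1,1}/\mu_{1,0})\{Y-\mu_{1,0}\}$ rather than simply $\{Y-\mu_{1,0}\}$, and likewise the outer weight $\mu_{0,0}/\mu_{1,0}$ (rather than $\mu_{1,1}\mu_{0,0}/\mu_{1,0}^2$) must survive after one factor of $\mu_{1,1}/\mu_{1,0}$ is pulled inside. I will cross-verify the resulting expression by checking the von Mises expansion $\alpha_2(\bar P)-\alpha_2(P)+\E_P[A_2(O;\bar{\boldsymbol{\eta}}_2)]$ is second order in $\bar{\boldsymbol{\eta}}_2-\boldsymbol{\eta}_2$, an internal consistency check that also sets up the remainder analysis $R_{2,n}$ in the subsequent inference section.
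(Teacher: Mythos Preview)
Your proposal is correct and follows essentially the same approach as the paper: both derive $A_2$ by pathwise differentiation along a one-dimensional submodel, applying the product/quotient rule to $g(X)=\mu_{1,1}\mu_{0,0}/\mu_{1,0}$ and using the standard identity for the derivative of a conditional mean to produce the three IPW residuals, then obtain $\Phi_2,\Psi_2$ by the delta method. The only cosmetic difference is in the efficiency argument when $\pi$ is known: the paper verifies that each piece of $A_2$ lies in $\Lambda_S\oplus\Lambda_{X|S}\oplus\Lambda_{Y|X,A,S}$ (hence in the reduced tangent space), whereas you check orthogonality to the removed propensity scores $(1-S)a(X)\{(1-A)-\pi(X)\}$ directly---these are equivalent, and your added von Mises cross-check is a useful bonus the paper defers to the next theorem.
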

The lemma shows that when the target population is a trial population, the estimators based on the above influence functions are also the efficient estimators that share the same properties as described below.

Once the influence functions are known, we apply the same estimation strategy as described in Section \ref{sec:estimation}, and obtain the estimators $\widehat \alpha_2, \widehat \beta_2, \widehat \phi_2$ and $\widehat \psi_2$. 
The next theorem gives the asymptotic properties of the proposed estimators; the proof is given in \ref{appendix:inference2}.
\begin{enumerate}
\item[(b1)] $\exists \varepsilon>0, \quad s.t. \quad \Pr[\varepsilon\leqslant \kappa\leqslant1-\varepsilon]=
\Pr[\varepsilon\leqslant \widehat \kappa\leqslant1-\varepsilon]=\Pr[\varepsilon\leqslant q(X)\leqslant1-\varepsilon]=
\Pr[\varepsilon\leqslant \widehat q(X)\leqslant1-\varepsilon]=
\Pr[\varepsilon\leqslant \tau(X)]=
\Pr[\varepsilon\leqslant \widehat \tau(X)]=
\Pr\{\varepsilon\leqslant \pi(X)\leqslant1-\varepsilon\}=
\Pr\{\varepsilon\leqslant \widehat \pi(X)\leqslant1-\varepsilon\}=1$,
\item[(b2)] $ \E(Y^{2})<\infty$,
\item[(b3)] $\left\lVert \boldsymbol{\widehat \eta_2}-\boldsymbol{\eta_2} \right\rVert =o_p(1)$ 
\end{enumerate}
\begin{theorem}\label{thm:inference2}
If conditions \emph{(b1)} through \emph{(b3)} hold, then $\widehat \beta_2$ is consistent with rate of convergence $O_p(R_{2, n}^{\beta}+n^{-1/2})$, where  
$
    R_{2, n}^{\beta}=\left\lVert \widehat\mu_{0, 0}(X)-\mu_{0, 0}(X)\right\rVert
     \left\lVert \widehat\pi(X)-\pi(X)\right\rVert.
$
When $R_{2, n}^{\beta}= o_p(n^{-1/2})$, $\widehat \beta_2$ is asymptotically normal and non-parametric efficient. In addition, 
$\widehat \alpha_2, \widehat \phi_2, \widehat \psi_2$ are consistent with rate of convergence $O_p(R_{2, n}^{\alpha}+n^{-1/2})$, where  
$
    R_{2, n}^{\alpha}=R_{1, n}+R_{2, n}^{\beta}.
$
When $R_{2, n}^{\alpha}= o_p(n^{-1/2})$, $\widehat \alpha_2, \widehat \phi_2, \widehat \psi_2$ are asymptotically normal and non-parametric efficient.
\end{theorem}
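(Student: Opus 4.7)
\textbf{Proof proposal for Theorem \ref{thm:inference2}.}
The plan is to reduce the analysis of each estimator to the standard von Mises / second-order expansion:
for any influence-function-based estimator $\widehat\theta = \mathbb{P}_n[\mathrm{IF}(O;\widehat{\boldsymbol{\eta}})] + \theta(\widehat{\boldsymbol{\eta}})$, one writes
$\widehat\theta - \theta = (\mathbb{P}_n - P)[\mathrm{IF}(O;\boldsymbol{\eta})] + (\mathbb{P}_n - P)[\mathrm{IF}(O;\widehat{\boldsymbol{\eta}}) - \mathrm{IF}(O;\boldsymbol{\eta})] + R_n(\widehat{\boldsymbol{\eta}},\boldsymbol{\eta})$,
where $R_n = P[\mathrm{IF}(O;\widehat{\boldsymbol{\eta}})] + \theta(\widehat{\boldsymbol{\eta}}) - \theta$ is the second-order remainder. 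Cross-fitting together with condition (b3) forces the empirical-process term to be $o_p(n^{-1/2})$ (a standard argument, e.g.\ as in \citet{chernozhukov2018double} or \citet{kennedy2020sharp}), so the rate of $\widehat\theta - \theta$ is governed by $\|R_n\| + n^{-1/2}$.

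First I would handle $\widehat\beta_2$. Taking expectations in $B_2(O;\widehat{\boldsymbol{\eta}_2'})$ and using iterated expectations conditional on $X$ in $\{S=0\}$ yields
\begin{equation*}
R_{2,n}^{\beta} = \E_{S=0}\!\left[ \{\widehat\mu_{0,0}(X) - \mu_{0,0}(X)\}\Big\{1 - \tfrac{\pi(X)}{\widehat\pi(X)}\Big\} \right],
\end{equation*}
and Cauchy--Schwarz together with condition (b1) (bounded $\widehat\pi$ away from $0$) gives $|R_{2,n}^{\beta}| \lesssim \|\widehat\mu_{0,0} - \mu_{0,0}\|\,\|\widehat\pi - \pi\|$. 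The consistency of $\widehat\kappa$ and condition (b1) on $\kappa$ and $\widehat\kappa$ let me absorb the $\kappa$ factor at $\sqrt{n}$-rate without altering the leading term.

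Next I would attack $\widehat\alpha_2$, which is the main obstacle because its influence function mixes the trial-side construction analyzed in Theorem \ref{thm:inference1} with a target-side AIPW-type construction like $\beta_2$'s. I would split $A_2(O;\boldsymbol\eta_2)$ into its $(1-S)$ part and its $S$ part and compute each second-order remainder separately. For the $(1-S)$ part, conditioning on $X$ in $\{S=0\}$ and adding/subtracting $\mu_{0,0}(X)\,\mu_{1,1}(X)/\mu_{1,0}(X)$ decomposes the bias into a product
$\{\widehat\mu_{1,1}/\widehat\mu_{1,0} - \mu_{1,1}/\mu_{1,0}\}$ terms plus a $\{\widehat\mu_{0,0}-\mu_{0,0}\}\{1-\pi/\widehat\pi\}$ term, yielding an $R_{2,n}^{\beta}$-type contribution. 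For the $S$ part, I would replay essentially verbatim the calculation used for Theorem \ref{thm:inference1}, which after conditioning on $X$ in $\{S=1\}$, iterating, and applying Cauchy--Schwarz delivers exactly $R_{1,n}$. Summing the two pieces and applying the triangle inequality gives $|R_{2,n}^{\alpha}| \lesssim R_{1,n} + R_{2,n}^{\beta}$ as claimed.

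Finally, the $\widehat\phi_2$ and $\widehat\psi_2$ results follow from the delta method: $\Phi_2$ and $\Psi_2$ are smooth functions of $(\alpha_2,\beta_2)$, so the bias of the plug-in is controlled by $R_{2,n}^{\alpha} + R_{2,n}^{\beta} \lesssim R_{2,n}^{\alpha}$ (since $R_{2,n}^{\beta} \le R_{2,n}^{\alpha}$ by definition). When $R_{2,n}^{\alpha} = o_p(n^{-1/2})$, the leading term is $(\mathbb{P}_n - P)[\mathrm{IF}(O;\boldsymbol{\eta})]$; condition (a2)/(b2) together with boundedness of the nuisance functions from (b1) yields finite variance, the CLT gives asymptotic normality, and because Lemma \ref{thm:IF2} identifies $A_2,B_2,\Phi_2,\Psi_2$ as the non-parametric efficient influence functions, the resulting estimators attain the semiparametric efficiency bound. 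The most delicate step is the bias decomposition for $\widehat\alpha_2$, since it requires choosing the correct add-and-subtract terms so that each piece of the remainder factors into a product that admits a Cauchy--Schwarz bound matching $R_{1,n} + R_{2,n}^{\beta}$.
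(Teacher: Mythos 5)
Your proposal matches the paper's proof in all essentials: the same three-term decomposition (an empirical-process term killed by cross-fitting and condition (b3) via Lemma 2 of \citet{kennedy2020sharp}, a CLT term, and a second-order bias term bounded by Cauchy--Schwarz under (b1)), the same computation giving $R_{2,n}^{\beta}=\E[\{\widehat\mu_{0,0}(X)-\mu_{0,0}(X)\}\{1-\pi(X)/\widehat\pi(X)\}\mid S=0]$ for $\widehat\beta_2$, the same add-and-subtract bias analysis yielding $R_{1,n}+R_{2,n}^{\beta}$ for $\widehat\alpha_2$, the same treatment of the $\widehat\kappa$ discrepancy, and the delta method for $\widehat\phi_2,\widehat\psi_2$. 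The one caution is that the $(1-S)$ and $S$ parts of $A_2$ are \emph{not} individually second-order --- the first-order piece $\{\widehat\mu_{1,1}/\widehat\mu_{1,0}-\mu_{1,1}/\mu_{1,0}\}\mu_{0,0}$ left over from the $(1-S)$ part must cancel against the nonzero means of the trial-side augmentation terms before the product structure emerges --- but you explicitly flag this as the delicate step, and the paper's computation confirms the cancellation goes through.
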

\noindent
Unlike $\widehat \beta_1$, which can be estimated by the empirical average of $Y(1-S)$, the estimation of $\beta_2$ involves nuisance functions estimation. In observation of $R_{2, n}^{\beta}$, $\beta_2$ is rate and model doubly robust in terms of $\widehat \mu_{0, 0}$ and $\widehat \pi(X)$. On the other hand, $R_{2, n}^{\alpha}$ consists of two product terms, which results in multiple robustness property of $\widehat \alpha_2$. The estimator $\widehat \alpha_2$ is \textit{model multiply robust} in the sense that if 
\begin{align*}
    \Big[\widehat \mu_{1, 1}(X)/\widehat \mu_{1, 0}(X) \text{ or } \{ \widehat \mu_{0, 0}(X), \widehat \mu_{1, 0}(X), \widehat \tau(X)\}\Big] \text{ and } \Big[\widehat \mu_{0, 0} \text{ or } \widehat \pi(X)\Big]
\end{align*}
are correctly specified and converge at $\sqrt{n}$-rates, then $\widehat \alpha_2$ is $\sqrt{n}$-consistent and asymptotically normal. In other words, when one of the following sets of nuisance functions 
\begin{align*}
    (i) \{\widehat \mu_{1, 1}(X)/\widehat \mu_{1, 0}(X),  \widehat \mu_{0, 0}(X)\}, \quad
    (ii) \{\widehat \mu_{1, 1}(X)/\widehat \mu_{1, 0}(X),  \widehat\pi(X)\}, \quad
    (iii) \{\widehat \mu_{0, 0}(X), \widehat \mu_{1, 0}(X), \widehat \tau(X)\}
\end{align*}
is correctly specified and converges at $\sqrt{n}$-rate, $\widehat \alpha_2$ is $\sqrt{n}$-consistent and asymptotically normal. When assumption \ref{assumption} holds and $\widehat \mu_{1, 0}(X)$ can be correctly specified and converge at a $\sqrt{n}$-rate, $\widehat \alpha_2$ is multiply robust in terms of
\begin{align*}
    (i) \{\widehat \mu_{1, 1}(X),  \widehat \mu_{0, 0}(X)\}, \quad
    (ii) \{\widehat \mu_{1, 1}(X),  \widehat\pi(X)\}, \quad
    (iii) \{\widehat \mu_{0, 0}(X), \widehat \tau(X)\}.
\end{align*}
Heuristically, when the three outcome models are correctly specified, then $\widehat \alpha_{2}$ is consistent; when $\widehat \mu_{1,1}(X)$ cannot be correctly specified, $\widehat \tau(X)$ can be used to debias the residual if it can be correctly specified; when $\widehat \mu_{0,0}(X)$ cannot be correctly specified, $\pi(X)$ can be used to debias the residual if it can be correctly specified. In addition, $\widehat \alpha_2$ is \textit{rate multiply robust} in the sense that if both the products of the error rates of the nuisance function estimators converge faster than $\sqrt{n}$, then $\widehat \alpha_2$ is $\sqrt{n}$-consistent and asymptotically normal. In other words, when 
$\widehat \mu_{1, 1}(X), \widehat \mu_{1, 0}(X), \widehat \mu_{0, 0}(X), \widehat \tau(X), \widehat \pi(X)$ can be estimated by $n^{1/4}$-rates (e.g., generalized additive models), $\widehat \alpha_2$ is $\sqrt{n}$-consistent and asymptotically normal. We also present other forms of the decomposition of $R_{2, n}$ in \ref{appendix:inference2}, which admits the same conclusion for robustness and efficiency when assumption \ref{assumption} holds. Because the double robustness of $\widehat \beta_{2}  $ is ``nested'' in the multiple robustness of $\widehat \alpha_{2}$ (when $\widehat \alpha_{2}$ is $\sqrt{n}$-consistent, $\widehat \beta_{2}$ is $\sqrt{n}$-consistent), $\widehat \phi_{2}$ and $\widehat \psi_{2}$ are also multiply robust.

In summary, compared to the scenarios outlined in Section \ref{sec:scenario1}, when alternative treatment options exist within the target population, an additional layer of complexity emerges as we seek to adjust for confounding of the outcomes in the target population. This complexity necessitates modeling either $\mu_{0, 0}(X)$ or $\pi(X)$ with a required (though relatively slow) convergence rate to ensure the consistency and efficiency of the estimators.

\subsection{Additional confounders in the target population}\label{sec:scenario3}
Many applied statisticians and epidemiologists believe that statistical interaction on the multiplicative scale is uncommon \citep{spiegelman2017evaluating}. Therefore, it is motivated to assume that the set of covariates to render the transportability between the trial and target population (the covariates in condition (A4)) is a subset of the covariates to adjust for confounders in the target population (the covariates in condition (B1)).

Denote the additional confounder by $W$, we assume the data structure to be $(X_i, S_i = 1, A_i, Y_i)$, $i = 1, \ldots, n_1$, and $(X_i, W_i, S_i = 0, A_i, Y_i)$, $i = 1, \ldots, n_0$. See the data structure in the right panel of Figure \ref{fig:RR_Cases}. Conditions (B1) and (B2) can be replaced with the following two conditions.

\noindent
\emph{(C1) Partial exchangeability:} for every $x$ with positive density in the target population $f(x, S=0)\neq 0$, $\E [ Y^0 | X = x , W=w, S = 0, A = 0] = \E[Y^0 | X = x, W=w, S = 0]$.

\vspace{0.1in}
\noindent
\emph{(C2) Positivity of receiving control:}  for every $x$ with positive density in the target population $f(x, S=0)\neq 0$, $0<\Pr[A = 0 | X = x, W=w, S = 0]<1$.

When conditions (A1) through (A5) and (C1) through (C2) hold, the target parameters are identifiable and can be estimated using the same strategy as described in previous sections; the resulting estimators are also multiply robust. Theorem S J.1 gives the identification results, Lemma S J.1 gives the influence functions, and Theorem S J.2 gives the asymptotic properties of the estimators. The theorems and the lemma are given in \ref{appendix:scenario3}; their proofs are given in \ref{appendix:identification3}-\ref{appendix:inference3}. We highlight the interesting results below.

Denote the statistics for identifying $\E[Y^1|S=0], \E[Y^0|S=0], \E[Y^1/Y^0|S=0]$ and $\E[Y^1-Y^0|S=0]$ by $\alpha_{3}, \beta_{3}, \phi_3$ and $\psi_3$, respectively. Because the outcome in the target population is confounded and needs additional covariates to adjust for confounding, estimating these target parameters requires estimating new nuisance functions: $\mu'_{0, 0}(X, W)=\E[Y | X, W, S = 0, A = 0]$, 
$M(X)=\E \left[ \E[Y | X, W, S = 0, A = 0] \big | X, S = 0 \right]$, and 
$\pi'(X, W)=\Pr[A=0|X, W, S=0]$. 

The asymptotic properties of the estimators are heavily based on the bias terms of the estimators; similar to Theorems \ref{thm:inference1} and \ref{thm:inference2}, we present the bias terms in estimating $\beta_3$ and $\alpha_3$ below.
\begin{align*}
     &R_{3, n}^{\beta}=\left\lVert \widehat\mu'_{0, 0}(X, W)-\mu'_{0, 0}(X, W)\right\rVert
     \left\lVert \widehat\pi'(X, W)-\pi'(X, W)\right\rVert,\\
     &R_{3, n}^{\alpha}=\left\lVert \widehat \mu_{1, 1}(X)/\widehat \mu_{1, 0}(X)-\mu_{1, 1}(X)/ \mu_{1, 0}(X)\right\rVert
    (
    || \widehat M(X)-M(X) ||+\\
    &\hspace{2.5cm}\left\lVert \widehat \mu_{1, 0}(X)-\mu_{1, 0}(X) \right\rVert+
    || \widehat \tau(X)-\tau(X) ||
    )+R_{3, n}^{\beta}.
\end{align*}
Analyzing $R_{3, n}^{\beta}$, we see that $\widehat \beta_3$ is doubly robust in terms of $\widehat\mu'_{0, 0}(X)$ and $\widehat\pi'(X)$. Compared to $\widehat \beta_2$, the nuisance functions for estimating $\widehat \beta_3$ are functions conditional on $W$. Next, we analyze $R_{3, n}^{\alpha}$.We first observe that $\mu'_{0, 0}(X, W)$ is nested in $M(X)$ in the sense that $\mu'_{0, 0}(X, W)$ is the outcome of the model to estimate $M(X)$. Therefore, the following assumption is reasonable for any $a\leqslant 0$. 
\begin{assumption}\label{assumption2}
    $||\widehat M(X)-M(X)||=o_p(n^{a})\Rightarrow||\widehat \mu'_{0, 0}(X, W)-\mu'_{0, 0}(X, W)||=o_p(n^{a})$.
\end{assumption}
The estimator $\widehat \alpha_3$ is model multiply robust in the sense that when assumptions \ref{assumption} and \ref{assumption2} hold, and $\widehat \mu_{1, 0}(X)$ can be correctly specified and converge at a $\sqrt{n}$-rate, if one of the following sets of nuisance functions 
\begin{align*}
    (i) \{\widehat \mu_{1, 1}(X),  \widehat \mu'_{0, 0}(X, W)\}, \quad
    (ii) \{\widehat \mu_{1, 1}(X),  \widehat \pi'(X, W)\}, \quad
    (iii) \{\widehat M(X), \widehat \tau(X)\}.
\end{align*}
is correctly specified and converges at $\sqrt{n}$-rate, $\widehat \alpha_3$ is $\sqrt{n}$-consistent and asymptotically normal. We see that a correctly specified $\widehat \pi'(X, W)$ can debias the bias caused by a misspecified model for $\widehat \mu'_{0, 0}(X, W)$. However, to debias the bias caused by a misspecified model for $\widehat \mu_{1, 1}(X)$, a correctly specified $\widehat \tau(X)$ alone is not enough--$\widehat M(X)$ has also to be correctly specified. This is because, again, the covariates to adjust for confounders between two populations are not sufficient to control for confounders in the target population. Similar to $\widehat \alpha_2$, $\widehat \alpha_3$ is also rate multiply robust.

\section{Simulation}\label{sec:Simulations}

Since the estimators proposed in Sections \ref{sec:scenario2and3} are extensions of the estimator described in Section \ref{sec:scenario1}, $\widehat \alpha_{1}$; for illustration purposes, we focused on the performance of $\widehat \alpha_{1}$ in this simulation study. We conducted simulations to evaluate the finite sample performance of $\widehat \alpha_{1}$ and compared it with alternative estimators (described below). The simulation code can be found in \url{https://github.com/Guanbo-W/Transportability_Weakercondition}.

\subsection{Assessment of finite-sample bias, variance, and coverage}
We generated data consisting of samples from a trial (denoted by $S=1$) and a target population (denoted by $S=0$). We evaluated the performance of the estimator with different sample sizes: the total sample size of the trial $n_{1}$ was 250, 500 or 1000, and the sample size of the target population $n_{0}$ was 5000. 

We generated five uniformly distributed variables $X_{j}\sim U(0, 1), j=1,\dots,5$.  We assumed the data source indicator followed a Bernoulli distribution, $S\sim \text{Bernoulli}\{\Pr[S=1|X]\}$ with $\Pr[S=1|X]=\text{exp}(\lambda X^T) /\{1+ \text{exp}(\lambda X^T)\}$, $X = (1, X_{1}, \ldots, X_{5})$ and $\lambda$ was a vector length of six containing $\lambda_0$ and all others $\ln(1.05)$, where we solved for $\lambda_0$ to result (on average) in the desired total sample size \citep{robertson2021intercept}. We assumed the propensity score $\Pr[A=1|X, S=1]=0.5$ in the trial and $\Pr[A=0|X, S=0] = 1$ in the target population. For the outcomes, we first generated $Y^0$ and $Y^1$, which followed Bernoulli distributions with probabilities $p_0$ and $p_1$, respectively, where $p_0=\exp{\{m(X)+Sg(X)\}}$ and $p_1=\exp{\{m(X)+r(X)+Sg(X)\}}$ (both $p_0$ and $p_1$ were bounded by 0 and 1). Then we generated $Y$ as $AY^1+(1-A)Y^0$. That is, the data-generating mechanism satisfied condition (A4) but not (A4*) (corresponding to case 2 in Table \ref{tab:DGM} in \ref{appendix:CausalStructure}). In all settings, the true values were generated by averaging $10^{5}$ empirical means of $Y^1$ in the subset of $\{S=1\}$. All simulation results were an average of 5000 runs, unless otherwise stated, $\widehat \mu_{1, 0}(X)$ and $\widehat q(X)$ were correctly specified. See \ref{appendix:Sim} for more detailed model specification and misspecification. 
\begin{figure}[h]
    \centering
    \includegraphics[width=0.95\textwidth]{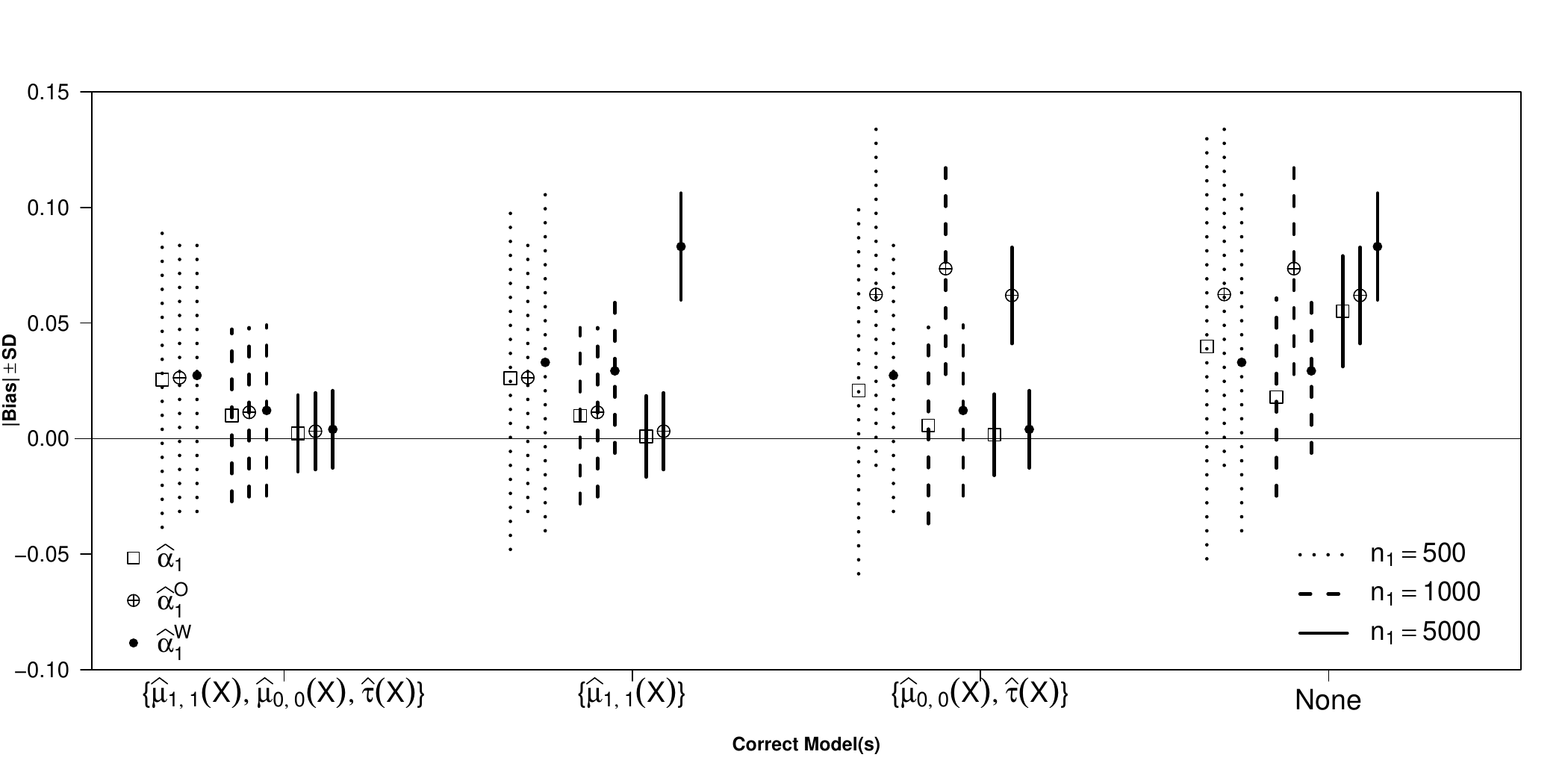}
    \caption{Absolute biases and standard deviations (SD) of the three estimators $\widehat \alpha_{1}, \widehat \alpha_{1}^O$ and $\widehat \alpha_{1}^W$ under different correctly specified nuisance functions and sample sizes.}
    \label{fig:DR_Sim}
\end{figure}

Figure \ref{fig:DR_Sim} shows the averaged absolute biases and standard deviations of the proposed estimator $\widehat \alpha_{1}$, the outcome regression estimator $\widehat \alpha_{1}^O$, and inverse probability weighting (IPW) estimator $\widehat \alpha_{1}^W$ under different sets of correctly specified nuisance functions and sample sizes, where
\begin{align*}
    \widehat \alpha_{1}^O=\widehat\kappa\mathbb{P}_{n}\Big\{\dfrac{\widehat\mu_{1, 1}(X)}{\widehat\mu_{1, 0}(X)}(1-S)Y\Big\}, 
    \qquad\text{and}\qquad
\widehat \alpha_{1}^W=\widehat\kappa\mathbb{P}_{n}\Big\{S\widehat\tau(X)\dfrac{A}{\widehat q(X)}\dfrac{\widehat\mu_{0, 0}(X)}{\widehat\mu_{1, 0}(X)}Y\Big\}.
\end{align*}
The figure shows that, with the increasing sample size of the trial, the biases of $\widehat \alpha_{1}$ converged to zero if either $\{\widehat \mu_{1, 1}(X)\}$ or $\{\widehat \mu_{0, 0}(X), \widehat \tau(X)\}$ was correctly specified. That is, $\widehat \alpha_{1}$ was model doubly robust. In contrast, when $\{\widehat \mu_{1, 1}(X)\}$ is misspecified, $\widehat \alpha_{1}^O$ is not consistent; when $\{\widehat \mu_{0, 0}(X), \widehat \tau(X)\}$ is misspecified, $\widehat \alpha_{1}^W$ is not consistent. As expected, when all the nuisance functions were misspecified, all three estimators were inconsistent.
\begin{figure}[h]
  \begin{subfigure}{0.5\textwidth}
    \centering
    \includegraphics[width=\linewidth]{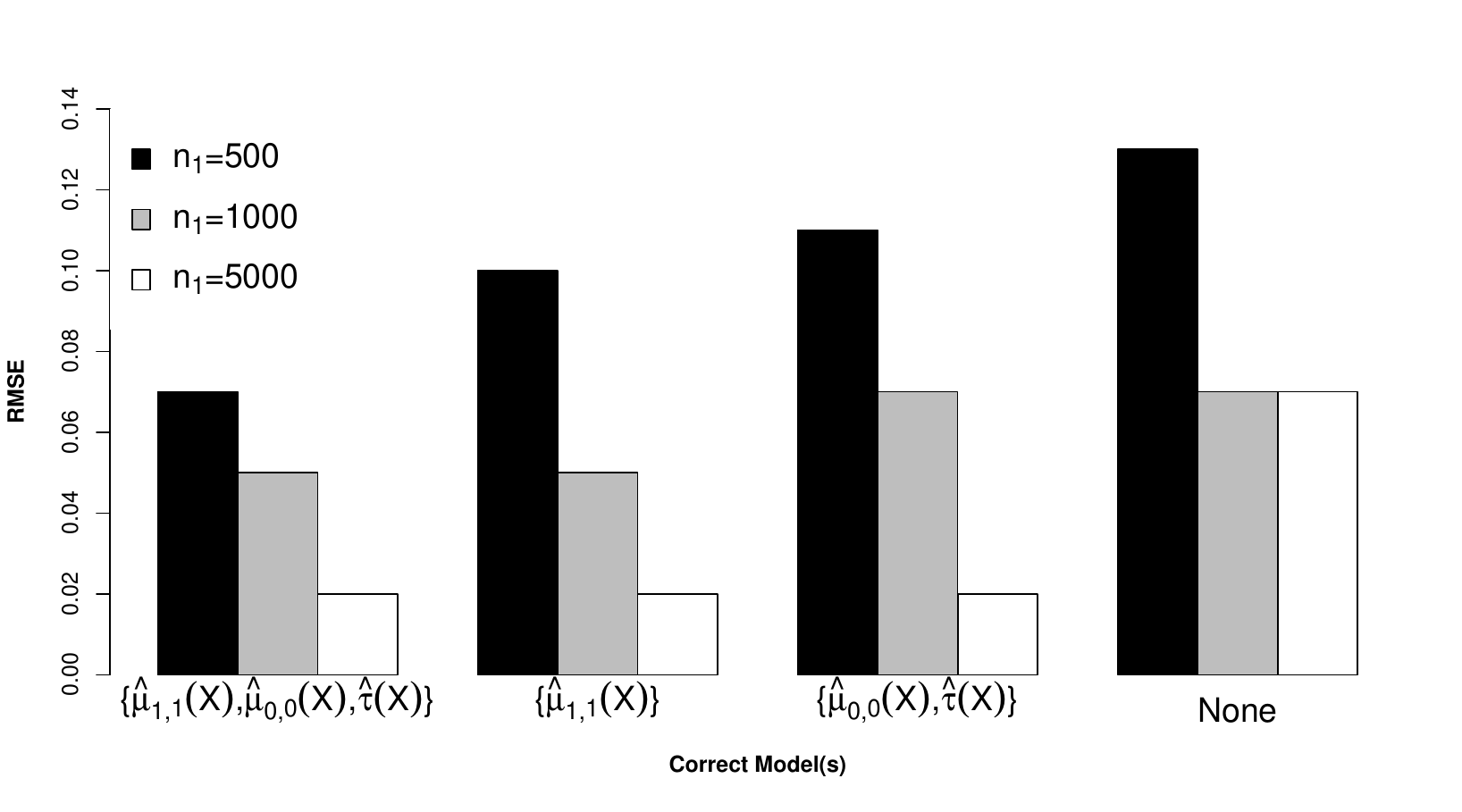}
  \end{subfigure}%
  \begin{subfigure}{0.5\textwidth}
    \centering
    \includegraphics[width=\linewidth]{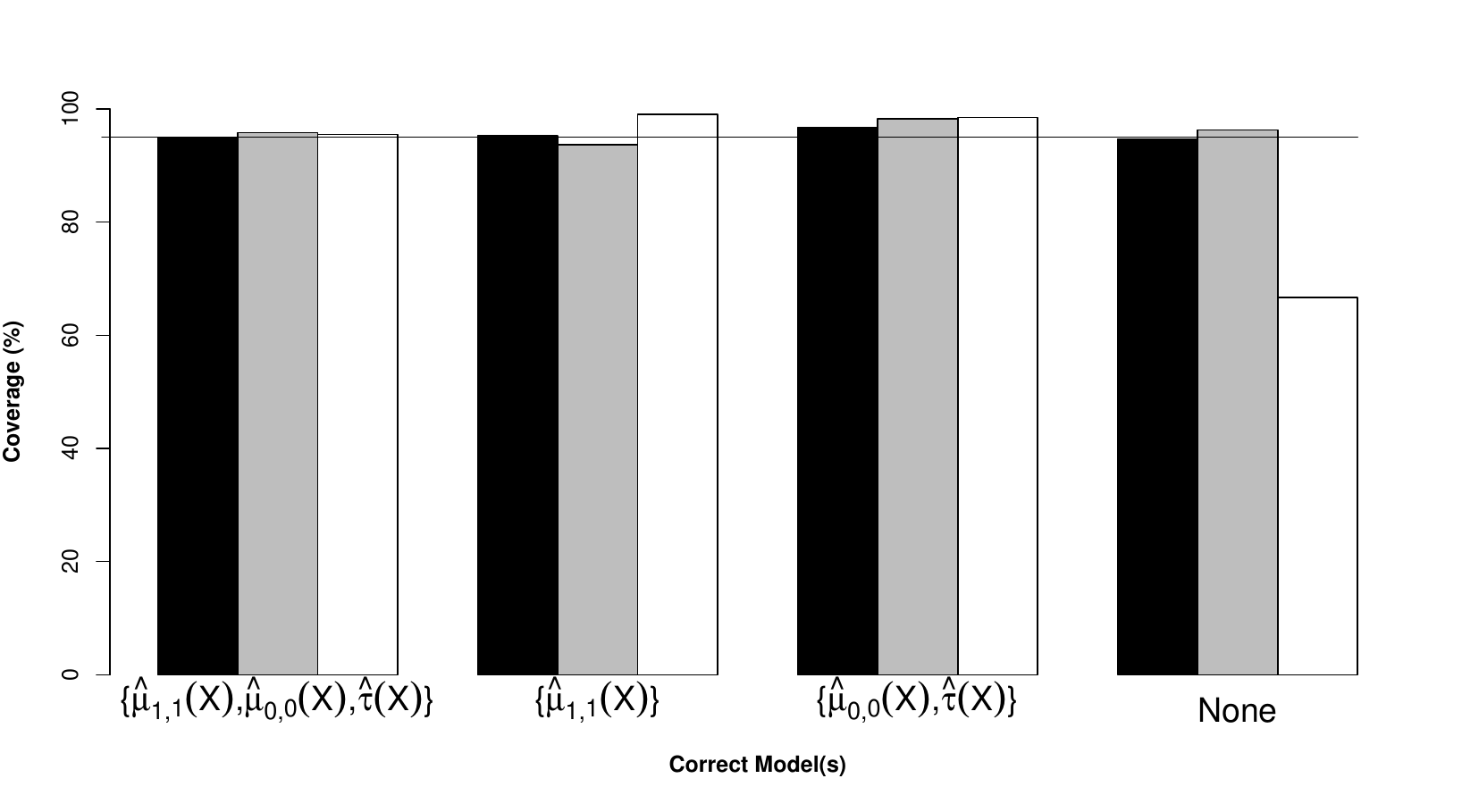}
  \end{subfigure}
  \caption{Scaled RMSE and coverage of $\widehat \alpha_{1}$ under different correctly specified nuisance functions and sample sizes.}
  \label{fig:Sim_efficiency}
\end{figure}

To further assess the efficiency of the proposed estimator $\widehat \alpha_{1}$, we showed, in Figure \ref{fig:Sim_efficiency} the root mean squared error (RMSE) (scaled by $\sqrt{n}$) and coverage of $\widehat \alpha_{1}$ in different settings. The figure showes that when all the nuisance functions were correctly specified, $\widehat \alpha_{1}$ had the least RMSE and was well covered. When one of them was correctly specified, $\widehat \alpha_{1}$ had a larger RMSE, and it could be slightly under or over-covered. When none of the nuisance functions was correctly specified, $\widehat \alpha_{1}$ had the largest RMSE and was significantly under-covered when the sample size of the trial was 5000.
\begin{figure}[h]
    \centering
    \includegraphics[width=0.95\textwidth]{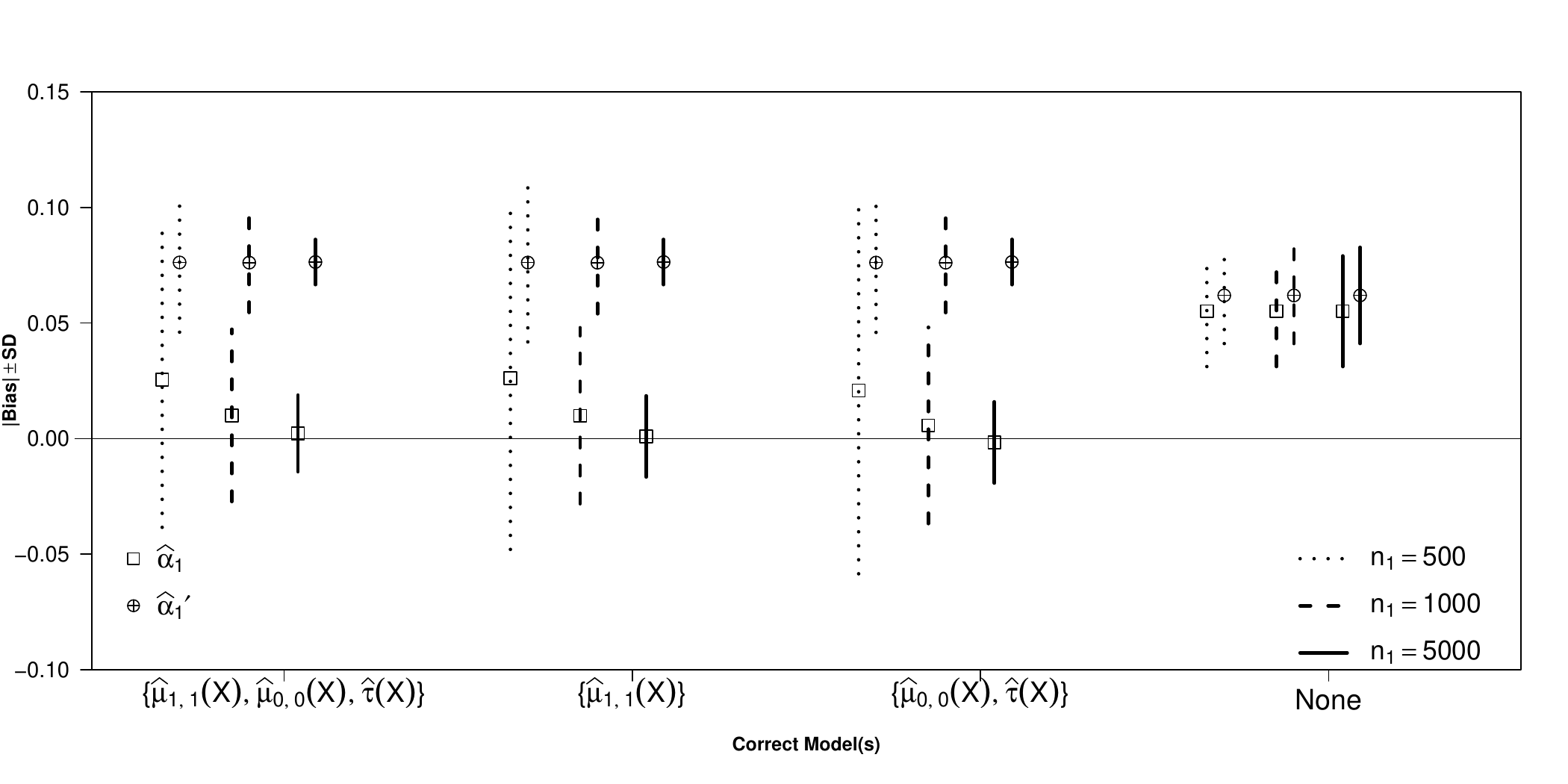}
    \caption{Absolute biases and standard deviations (SD) of the two estimators $\widehat \alpha_{1}$ and $\widehat \alpha_{1}'$ under different correctly specified nuisance functions and sample sizes.}
    \label{fig:Comp_Conv}
\end{figure}

In addition, to evaluate the impact of imposing the condition (A4*) when the data generating mechanism only satisfies condition (A4), we also compared $\widehat \alpha_{1}$ with the efficient estimator when condition (A4*) holds (see Section \ref{sec:A4*} and \ref{appendix:A4*1})
$$
\widehat \alpha_{1}'=
\widehat\kappa\mathbb{P}_{n}\Big[
(1-S)\widehat\mu_{1, 1}(X)+
S\widehat\tau(X)\dfrac{A}{\widehat q(X)}\{Y-\widehat\mu_{1, 1}(X)\}
    \Big].
$$
Figure \ref{fig:Comp_Conv} shows that, even if when all the nuisance functions were correctly specified, $\widehat \alpha'_{1}$ was not consistent. That is, when condition (A4*) does not hold in the data generating mechanism, the estimator that assumes that the condition holds may generate a large bias. In addition, in \ref{appendix:AdditionalSim}, we also compared $\widehat \alpha_{1}$ and $\widehat \alpha'_{1}$ when $\mu_{0, 1}(X)$ was misspecified; both estimators were biased.

\subsection{Assessment of rate robustness}
According to Theorem~\ref{thm:inference1}, $\widehat \alpha_1$ is also rate double robust; whereas simpler outcome regression (plug-in) estimator would generally converge only as fast as the component nuisance function estimates.

In order to illustrate the above advantage, we conducted an additional simulation study. Each of the nuisance functions in this simulation study was ``estimated'' by adding the designed noise to perturb the true underlying data-generating functions so that we could guarantee the convergence rate of the nuisance functions. The details and results are given in \ref{appendix:sim_RateRobustness}.

The results show that the proposed estimator $\widehat \alpha_{1}$ outperformed the outcome regression estimator $\widehat \alpha_{1}^O$, especially when nuisance error was of high order. Assume the underlying nuisance errors are on the order of $O_p(n^{-r})$, the near-optimal performance of the proposed estimator was achieved once $r \geq 0.25$; whereas this was only achieved by the outcome regression estimator\ when $r \approx 0.5$. One would only expect such low nuisance error, however, under correctly specified parametric models for $\widehat{\mu}_{s,a}$, which would be unlikely in practical applications.

\section{Effects evaluation of treatments for schizophrenia}\label{sec:Application}

Schizophrenia is a syndrome with psychotic, negative, and cognitive symptoms that generally appear in early adulthood. It is often a chronic, severe, and highly disabling condition that results in significant social and occupational impairment \citep{schizophernia2023nimh}. Researchers often use the Positive and Negative Syndrome Scale (PANSS) total score to evaluate the severity of schizophrenia symptoms. This score ranges from 30 to 200, with higher scores reflecting more severe symptoms \citep{leucht2005does}. For example, patients with PANSS total score over 90 are considered to be ``moderately ill'' \citep{NLM}. The American Psychiatric Association recommends treating schizophrenia with antipsychotic medications \citep{american2020american}. Paliperidone extended-release (ER) is an oral antipsychotic drug approved for the treatment of schizophrenia by both the US Food and Drug Administration and the European Medicines Agency.

\citet{marder2007efficacy} and \citet{davidson2007efficacy} separately evaluated the efficacy and safety of paliperidone ER tablets (3 – 12 mg/day), as compared to the placebo, for treating schizophrenia by two 6-week double-blind trials (NCT00077714 and NCT00668837). To illustrate the proposed methods, we designated one trial \citep{marder2007efficacy} as the index trial (denoted by $S=1$), and used data from the second trial \citep{davidson2007efficacy} as the target population (denoted by $S=0$). We included the patients assigned to either paliperidone ER or placebo and whose PANSS total scores were evaluated at baseline and week 6. In total, 732 patients were included, of which 284 patients were in the target population. We included patient gender, age, race, and PANSS baseline score as covariates. The summary statistics for each variable stratified by the trials is given in \ref{appendix:application1}. 

In this study, we aimed to estimate the effects of paliperidone ER tablets in the target population where the outcome was if the patients' PANSS score at week 6 was greater than 90. We used the proposed method and the data from both trials to improve the efficiency of the estimation. This is a special case described in Section \ref{sec:scenario2}, wherein the target population, beyond the control, other treatment (i.e., the experimental treatment in the index trial) was also available. In addition, when the target population is also a trial, it is reasonable to assume that $a=\{0, 1\}, \E(Y^a|S=0, X)=\E(Y^a|S=0, A=a, X)$. This enabled us to identify the target parameters using the target population only. We give the corresponding consistent and efficient estimator in \ref{appendix:application_estimator}. Though this estimator is consistent, because this estimator only uses the data from the target population, we expect this estimator would be less efficient than the proposed estimator. Therefore, we benchmark the proposed estimator with this estimator. We used generalized linear models to model the nuisance functions and estimated marginal counterfactual means, and their difference and ratio in the target population. 
\begin{table}[h]
    \centering
    \begin{tabular}{ccccccc}
    \hline
      \multirow{3}{*}{\textbf{Target parameter}} & \multicolumn{2}{c}{\textbf{Estimate}}  & \multicolumn{2}{c}{\textbf{SD}} & \multicolumn{2}{c}{\textbf{95\% CI}}\\ 
     \cline{2-7}\cline{2-7}
     &\multicolumn{2}{c}{Method}  & \multicolumn{2}{c}{Method} & \multicolumn{2}{c}{Method}\\
     \cline{2-7}
     &1&2&1&2&1&2\\
     \hline
     $\E(Y^{1}|S=0)$               & 0.14 &0.15 &0.03   &0.03 & (0.09, 0.20)& (0.09, 0.21)\\
     $\E(Y^{0}|S=0)$               & 0.31 &0.31 &0.04   &0.05 & (0.22, 0.39)& (0.21, 0.40)\\
     $\E(Y^{1}|S=0)/\E(Y^{0}|S=0)$ & 0.46 &0.49 &0.09   &0.10 & (0.36, 0.68)& (0.31, 0.68)\\
     $\E(Y^{1}|S=0)-\E(Y^{0}|S=0)$ &-0.17 &-0.15&0.04   &0.05 & (-0.25, -0.05)& (-0.26, -0.05)\\
     \hline
    \end{tabular}\\
    \caption{Estimates, standard deviations (SDs), and 95\% confidence intervals (CIs) of the target parameters. Method 1 is described in Section \ref{sec:scenario2}, which uses both trials for the estimation; Method 2, described in \ref{appendix:application_estimator}, only uses the target population for the estimation.}
    \label{tab:ResultsAnalysis}
\end{table}

Table \ref{tab:ResultsAnalysis} shows the estimates, standard deviations, and 95\% confidence intervals of the target parameters with and without transporting the effects estimation from the index trial. The tables shows that the point estimates generated by the proposed estimator closely aligned with those generated from the estimator that only uses the target population. However, the proposed estimator is more efficient. 

Overall, the analyses shows that in the target population, the risk of a patient having a PANSS score higher than 90 at week 6 who took Paliperidone ER was about half of the risk who took placebo; the risk could be reduced 15\%-17\% by taking Paliperidone ER.

\section{Discussion}\label{sec:Discussion}

Causal conditions play a pivotal role in identifying and estimating causal target parameters. If these conditions are not met, the resulting estimator becomes difficult to interpret and may be biased for the target parameters. Within the realm of transportability, the relatively common condition often employed for transportability analysis is the transportability condition for difference effect measures. However, sometimes this condition may not align with real-world scenarios. In this work, from the common belief, we propose an alternative condition--the transportability condition of relative effect measure--for identifying and estimating the target parameters. This condition may be more reasonable to make in certain real-world scenarios. With this condition, we develop multiply robust and efficient estimators tailored to three common data structures, which enable the utilization of flexible machine-learning methods for estimating the target parameters.

The work identifies three distinct scenarios related to the presence or absence of confounding in the target population. The first two scenarios deal with situations where confounding is either absent (so modeling the conditional outcome means in the target population is not necessary for identifying the target parameter) or sufficiently controlled (so a nested regression model taking additional confounders into account for the outcomes in the target population is not necessary), thus simplifying the process of identifying target parameters. The third scenario, however, accounts for treatment variation and additional confounders, necessitating more complex models. The three scenarios require different causal conditions for identifying the target parameters, resulting in different conditions for deriving a robust and efficient estimator. 


Looking ahead, it would be beneficial to develop a hybrid estimator based on the hypothesis testing of the validity of the transportability condition for difference effect measures, where the estimator is our proposed estimator when the test fails. We leave this as future work. Expanding this work to include survival outcomes and investigating heterogeneous treatment effects based on the proposed transportability condition are also identified as valuable future directions.

\section*{Acknowledgments}

This work was supported in part by Patient-Centered Outcomes Research Institute (PCORI) awards ME-2021C2-22365. The content is solely the responsibility of the authors and does not necessarily represent the official views of PCORI, PCORI's Board of Governors or PCORI's Methodology Committee. \\
This study, carried out under The Yale University Open Data Access  (YODA) \url{https://yoda.yale.edu/} Project number 2022-5062, used data obtained from the Yale University Open Data Access Project, which has an agreement with Jassen Research \& Development, L.L.C.. The interpretation and reporting of research using this data are solely the responsibility of the authors and do not necessarily represent the official views of the Yale University Open Data Access Project or Jassen Research \& Development, L.L.C..
\vspace*{-8pt}

\section*{Data availability statement}

Code to reproduce our simulations and the data analyses are provided on GitHub: \\\href{https://github.com/Guanbo-W/Transportability_RRcondition}{https://github.com/Guanbo-W/Transportability\_RRcondition}. \\
The data analyzed can be obtained from the YODA, subject to approval by the YODA \url{https://yoda.yale.edu/}

\bigskip
\begin{center}
{\large\bf Supplementary Materials}
\end{center}

\begin{description}

\item Supplementary materials contain all the proofs and related theorems of this paper.

\end{description}

\bibliographystyle{apalike}
\bibliography{refs}

\clearpage
\textbf{\large\bf Appendices for ``Efficient estimation of subgroup treatment effects using multi-source data''}
\setcounter{page}{1}
\renewcommand{\thepage}{S\arabic{page}}
\begin{appendices}
\numberwithin{equation}{section}
\renewcommand{\theequation}{\thesection.\arabic{equation}}
\numberwithin{table}{section}
\renewcommand{\thetable}{\thesection.\arabic{table}}
\numberwithin{figure}{section}
\renewcommand{\thefigure}{\thesection.\arabic{figure}}
\numberwithin{assumption}{section}
\renewcommand{\theassumption}{\thesection.\arabic{assumption}}
\numberwithin{lemma}{section}
\renewcommand{\thelemma}{\thesection.\arabic{lemma}}
\numberwithin{prop}{section}
\renewcommand{\theprop}{\thesection.\arabic{prop}}

\renewcommand{\thesection}{Appendix \Alph{section}}
\renewcommand{\thesubsection}{\Alph{section}.\arabic{subsection}}

\titleformat{\subsection}
      {\normalfont\fontsize{10}{12}\bfseries}{\thesubsection}{1em}{}
      

\renewcommand{\thetable}{S\arabic{table}}
\renewcommand{\thefigure}{S\arabic{figure}}
\renewcommand{\theequation}{S\arabic{equation}}
\renewcommand{\thetheorem}{S\arabic{theorem}}
\renewcommand{\thelemma}{S\arabic{lemma}}
\renewcommand{\theprop}{S\arabic{prop}}
\section{Implications of the condition (A4)}\label{appendix:CausalStructure}

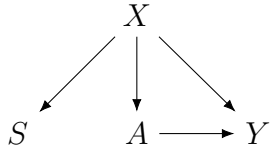
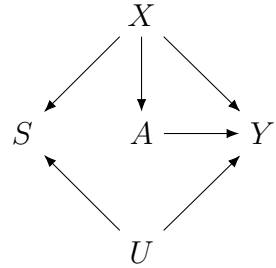
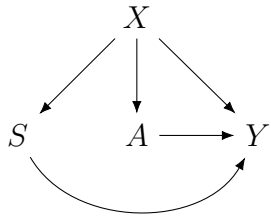
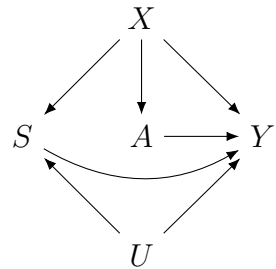
\begin{figure}[hbt!]
\begin{minipage}[h]{0.47\linewidth}
\begin{center}
  \begin{tikzpicture}
    \node[] (A) at (0,0) {$A$};
    \node[] (Y) [right =of A] {$Y$};
    \node[] (X) [above =of A] {$X$};
    \node[] (S) [left =of A] {$S$};
    \node[] (0) at (0,-1.8) {};
    \path (A) edge (Y);
    \path (X) edge (Y);
    \path (X) edge (A);
    \path (X) edge (S);
\end{tikzpicture}
  \subcaption{DAG that satisfies (A4*)}
  \label{fig:DAGA4*}
\end{center} 
\end{minipage}
\hfill
\vspace{0.2 cm}
\begin{minipage}[h]{0.47\linewidth}
\begin{center}
  \begin{tikzpicture}
    \node[] (A) at (0,0) {$A$};
    \node[] (Y) [right =of A] {$Y$};
    \node[] (X) [above =of A] {$X$};
    \node[] (S) [left =of A] {$S$};
    \node[] (U) [below =of A] {$U$};
    \path (A) edge (Y);
    \path (X) edge (Y);
    \path (X) edge (A);
    \path (X) edge (S);
    \path (U) edge (S);
    \path (U) edge (Y);
\end{tikzpicture}
  \subcaption{DAG that satisfies (A4), case 1}
  \label{fig:DAGA4Case1}
\end{center}
\end{minipage}
\vfill
\vspace{0.2 cm}
\begin{minipage}[h]{0.47\linewidth}
\begin{center}
   \begin{tikzpicture}
    \node[] (A) at (0,0) {$A$};
    \node[] (Y) [right =of A] {$Y$};
    \node[] (X) [above =of A] {$X$};
    \node[] (S) [left =of A] {$S$};
    \node[] (0) at (0,-1.8) {};
    \path (A) edge (Y);
    \path (X) edge (Y);
    \path (X) edge (A);
    \path (X) edge (S);
    \path (S) edge[bend right=60] (Y);
\end{tikzpicture}
  \subcaption{DAG that satisfies (A4), case 2}
  \label{fig:DAGA4Case2}
\end{center}
\end{minipage}
\hfill
\begin{minipage}[h]{0.47\linewidth}
\begin{center}
      \begin{tikzpicture}
    \node[] (A) at (0,0) {$A$};
    \node[] (Y) [right =of A] {$Y$};
    \node[] (X) [above =of A] {$X$};
    \node[] (S) [left =of A] {$S$};
    \node[] (U) [below =of A] {$U$};
    \path (A) edge (Y);
    \path (X) edge (Y);
    \path (X) edge (A);
    \path (X) edge (S);
    \path (U) edge (S);
    \path (U) edge (Y);
    \path (S) edge[bend right=30] (Y);
\end{tikzpicture}
  \subcaption{DAG that satisfies (A4), case 3}
  \label{fig:DAGA4Case3}
\end{center}
\end{minipage}
\caption{A figure with four subfigures}
\label{fig:roc}
\end{figure}

\begin{table}[h]
\footnotesize
    \centering
    \begin{tabular}{ccccc}
    \hline
  Case   &Quantity& &$a=0$ & $a=1$ \\\\
      \hline\hline\\
       1  &$\text{log}\{\E(Y^a|X, U)\}$& &$m(X)+f(U)g_{1}(X)$ & $m(X)+f(U)g_{1}(X)+r(X)$\\\\\\
       \multirow{3}{*}{2} & \multirow{3}{*}{$\text{log}\{\E(Y^{s, a}|X)\}$} &s=0 & $m(X)$ & $m(X)+r(X)$\\\\
        && s=1 & $m(X)+g_{2}(X)$ & $m(X)+g_{2}(X)+r(X)$\\\\
       \multirow{3}{*}{3}& \multirow{3}{*}{$\text{log}\{\E(Y^{s, a}|X, U)\}$} & s=0 & $m(X)+f(U)g_{1}(X)$ & $m(X)+f(U)g_{1}(X)+r(X)$\\\\
        && s=1& $m(X)+f(U)g_{1}(X)+g_{2}(X)$ & $m(X)+f(U)g_{1}(X)+g_{2}(X)+r(X)$\\\\
        \hline
       \end{tabular}
    \caption{Hypothetical data generating mechanisms for a binary outcome.}
    \label{tab:DGM}
\end{table}
Figures \ref{fig:DAGA4*} and \ref{fig:DAGA4Case1} show the Directed Acyclic Graphs (DAGs) that are allowed by the condition (A4*) and (A4), respectively. Condition (A4) allows unmeasured confounders to be a common cause of $S$ and $Y$, which is not allowed by condition (A4*). 

However, there is a restriction in Figures \ref{fig:DAGA4Case1} to satisfy condition (A4). The restriction is that condition (A4) requires that the impact of the unmeasured confounder $U$ on the outcome in the trial and the target population must be canceled in a multiplicative way, in both the trial and the target population. 

To illustrate this restriction more concretely, we present a possible data-generating mechanism (DGM) that aligns with
Figures \ref{fig:DAGA4Case1} and satisfies condition (A4) but not condition (A4*). Suppose that the outcome of interest is binary; consider case 1 in Table \ref{tab:DGM}. Let $S=h(X, U)$ and denote $U_{0}(x)$ and $U_{1}(x)$ as the solutions to $0=h(x, U)$ and $1=h(x, U)$ for a given $x$, respectively. The DGM (case 1 in Table \ref{tab:DGM}) aligns with Figures \ref{fig:DAGA4Case1} because $U$ causes $X$ and $Y$. It also satisfies (A4) because
\begin{align*}
&\dfrac{\E ( Y^1 | X = x , S = 1)}{\E ( Y^0 | X = x , S = 1)} = \dfrac{\exp [m(x)+f\{U_{1}(x)\}g_{1}(x)+r(x)]}{\exp [m(x)+f\{U_{1}(x)\}g_{1}(x)]} =\exp\{r(x)\},\\
& \dfrac{\E ( Y^1  | X = x , S = 0)}{\E ( Y^0 | X = x , S = 0)} = \dfrac{\exp [m(x)+f\{U_{0}(x)\}g_{1}(x)+r(x)]}{\exp [m(x)+f\{U_{0}(x)\}g_{1}(x)]}=\exp\{r(x)\},\\
\implies
& \dfrac{\E ( Y^1 | X = x , S = 1)}{\E ( Y^0 | X = x , S = 1)} = \dfrac{\E ( Y^1  | X = x , S = 0)}{\E ( Y^0 | X = x , S = 0)} =\exp\{r(x)\}.
\end{align*}
That is, condition (A4) allows unmeasured confounders in Figure \ref{fig:DAGA4Case1} such that in the same population (trial or target), maybe together with $X$, its impact on $Y^1$ and $Y^0$ are the same. Such an unmeasured $U$ can be, for example, adherence.

In addition, condition (A4) also allows for the causal structure where $S$ is a direct cause of $Y$, which is shown in Figure \ref{fig:DAGA4Case2}. To see the reasons, we first introduce the counterfactual outcomes $Y^{s, a}$ where $Y$ is allowed to be intervened by $S$. Next, we introduce two sets of conditions that are equivalent to (imply and can be implied by) condition (A4). That is, $\{(A4a), (A4b)\} \Leftrightarrow (A4) \Leftrightarrow \{(A4c), (A4d)\}$.

First, consider

\noindent
\emph{(A4a):}  $\E ( Y^{s=0, a=0} | X = x) \neq 0,   \E ( Y^{s=1, a=0} | X = x) \neq 0,$ and
\begin{equation*}
        \dfrac{\E ( Y^{s=0, a=1} | X = x )}{\E ( Y^{s=0, a=0} | X = x)} = \dfrac{\E ( Y^{s=1, a=1}  | X = x )}{\E ( Y^{s=1, a=0} | X = x )}.
\end{equation*} 
\noindent

\noindent
\emph{(A4b):}  $\E ( Y^{s, a} |X = x) = \E ( Y^{a} | X = x, S=s)$.

\noindent
In terms of identification of the target parameters, we can replace the condition (A4) with the conditions (A4a) and (A4b). Now let us see what does (A4a) imply via a possible DGM, case 2 in Table \ref{tab:DGM}. In such a case,
\begin{align*}
    &\dfrac{\E ( Y^{s=0, a=1} | X = x )}{\E ( Y^{s=0, a=0} | X = x)}=
    \dfrac{\exp\{ m(x)+r(x)\}}{\exp\{ m(x)\}}=\exp\{ r(x)\},\\
    &\dfrac{\E ( Y^{s=1, a=1} | X = x )}{\E ( Y^{s=1, a=0} | X = x)}=
    \dfrac{\exp\{ m(x)+g_{2}(x)+r(x)\}}{\exp\{ m(x)g_{2}(x)\}}=\exp\{ r(x)\},\\
    \implies
    &\dfrac{\E ( Y^{s=0, a=1} | X = x )}{\E ( Y^{s=0, a=0} | X = x)}=
    \dfrac{\E ( Y^{s=1, a=1} | X = x )}{\E ( Y^{s=1, a=0} | X = x)}=\exp\{ r(x)\},
\end{align*}
which satisfies the condition (A4a).  That is, the DGMs that featured the following characteristics can satisfy the condition (A4a), it allows $X$ to affect $Y^a$ differently in different populations (trial and target), as long as in a same population, $X$ affects $Y^1$ and $Y^0$ to the same extent in the multiplicative fashion. The DAG in Figure \ref{fig:DAGA4Case2} is compatible with such DGMs. Thus, we conclude that, the condition (A4) allows $S$ to be a direct causa of $Y$.

On the other hand, consider the other set of conditions that is equal to condition (A4).

\noindent
\emph{(A4c):}  $\E ( Y^{s=0, a=0} | X = x, S=0) \neq 0,   \E ( Y^{s=1, a=0} | X = x, S=1) \neq 0,$ and
\begin{equation*}
        \dfrac{\E ( Y^{s=0, a=1} | X = x, S=0 )}{\E ( Y^{s=0, a=0} | X = x, S=0 )} = \dfrac{\E ( Y^{s=1, a=1}  | X = x, S=1)}{\E ( Y^{s=1, a=0} | X = x, S=1)}.
\end{equation*} 
\noindent

\noindent
\emph{(A4d):}  $Y^{s, a}=Y^{a}$ if $S=s$.
\noindent

Note that, for identification, the condition (A4c) is enough, whereas both (A4a) and (A4b) are required. The DGM depicted in case 2 in Table \ref{tab:DGM} also satisfies the condition (A4c). Apply the same logic, we arrive at the same conclusion, the condition (A4) allows $S$ to be a direct causa of $Y$.

Naturally, the DAG presented in Figure \ref{fig:DAGA4Case3} and its associated possible DGM, the case 3 in Table \ref{tab:DGM} satisfied (A4) as well.
\clearpage

\section{Proof of Theorem \ref{thm:identification1} and other identification stratities}\label{appendix:identification1}
We will now show that under conditions (A1) through (A6), listed in the main text, the target parameters can be identified by $\alpha_1$.
\begin{proof}
\begin{align*}
    \E(Y^1 | S = 0) &= \dfrac{1}{\Pr(S=0)}\E[Y^1 I(S=0)]\\
    &=\dfrac{1}{\Pr(S=0)}\E\big[\E\{Y^1 I(S=0)|X\}\big]\\
    &=\dfrac{1}{\Pr(S=0)}\E\big[\E(Y^1|X, S=0)\Pr(S=0|X)\big]\\
    &=\dfrac{1}{\Pr(S=0)}\E\left[\dfrac{\E(Y^1 | X, S = 0)}{\E(Y^0 | X, S = 0)} \E(Y^0 | X, S = 0) \Pr(S=0)\right]\\
    &= \dfrac{1}{\Pr(S=0)}\E \left[ \dfrac{\E(Y^1 | X, S = 1)}{\E(Y^0 | X, S = 1)} \E(Y^0 | X, S = 0) \Pr(S=0)\right]\\ 
    &= \dfrac{1}{\Pr(S=0)}\E \left[ \dfrac{\E(Y^1 | X, S = 1, A = 1)}{\E(Y^0 | X, S = 1, A = 0)} \E(Y^0 | X, S = 0) \Pr(S=0)\right] \\
    &=\dfrac{1}{\Pr(S=0)}\E \left[\dfrac{\E(Y | X, S = 1, A = 1)}{\E(Y | X, S = 1, A = 0)} \E(Y | X, S = 0) \Pr(S=0|X)\right]\\
    &=\dfrac{1}{\Pr(S=0)}\E\left[\dfrac{\E(Y | X, S = 1, A = 1)}{\E(Y | X, S = 1, A = 0)} \E\{YI(S=0)|X\}\right]\\
    &=\dfrac{1}{\Pr(S=0)}\E\left[\dfrac{\E(Y | X, S = 1, A = 1)}{\E(Y | X, S = 1, A = 0)} YI(S=0)\right]\\
    &=\E\left[\dfrac{\E(Y | X, S = 1, A = 1)}{\E(Y | X, S = 1, A = 0)} Y\Big|S=0 \right]\\
    &=\alpha_1.
\end{align*}
The fourth equation follows from multiplying and dividing by $\E[Y^0 | X, S = 0]$ which is allowed by the positivity component of condition (A4); the fifth from the exchangeability component of condition (A4); the sixth from condition (A2), which allows us to write $$\dfrac{\E(Y^1 | X =x, S = 1)}{\E(Y^0 | X=x, S = 1)} = \dfrac{\E(Y^1 | X=x, S = 1, A = 1)}{\E(Y^0 | X=x, S = 1, A = 0)}, \mbox{ for all } x \mbox{ with } f(x, S = 0) \neq 0;$$ and the seventh step follows because conditions (A1) and the special case considered in this section, $S=0\implies A=0$ allow us to write 
\begin{align*}
&\E(Y^0 | X=x, S = 0) = \E(Y^0 | X=x, S = 0, A = 0) \\= &\E(Y | X=x, S = 0, A = 0) = \E(Y | X=x, S = 0),
\end{align*}
and condition (A1) allows us to write $$\dfrac{\E(Y^1 | X = x, S = 1, A = 1)}{\E(Y^0 | X=x, S = 1, A = 0)} = \dfrac{\E(Y | X=x, S = 1, A = 1)}{\E(Y | X=x, S = 1, A = 0)},$$ for all $x$ with $f(x, S = 0) \neq 0$.
\end{proof}

Next, we give other identification strategies in the following propositions.
\begin{prop}
Under conditions (A1)-(A6), 
\begin{align*}
    \alpha_1=&\E\left[\dfrac{\E(Y | X, S = 1, A = 1)}{\E(Y | X, S = 1, A = 0)} \E(Y | X, S = 1, A = 1)\Big|S=0 \right]\\
    =&\dfrac{1}{\Pr(S=0)}\E \left[ \dfrac{\Pr(S=0|X)}{\Pr(S=1|X)}\dfrac{\E(Y | X, S = 0) }{\E(Y | X, S = 1, A = 0)} \dfrac{SAY}{\Pr(A=1|X, S=1)} \right]\\
    =&\dfrac{1}{\Pr(S=0)}\E\left[ \dfrac{\E(Y | X, S = 1, A = 1)}{\E(Y | X, S = 1, A = 0)} \dfrac{\E(Y | X, S = 0)}{\E(Y | X, S = 1, A = 0)} \dfrac{\Pr(S=0|X)}{\Pr(S=1|X)}\dfrac{S(1-A)Y}{\Pr(A=0|X, S=1)} \right].
\end{align*}

\end{prop}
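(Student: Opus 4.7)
The plan is to verify the three alternative identification formulas by starting from the representation of $\alpha_1$ already obtained in the proof of Theorem 1 and rewriting one factor, using only iterated expectations plus the inverse-probability-weighting identities that follow from (A3) and (A5). A convenient common starting point is the intermediate identity extracted from that proof,
\begin{equation*}
\alpha_1 = \E\left[\dfrac{\E(Y|X, S=1, A=1)}{\E(Y|X, S=1, A=0)}\, \E(Y|X, S=0)\,\Big|\, S=0\right],
\end{equation*}
which appears one step before the final replacement of $\E(Y|X, S=0)$ by $Y$ in the chain of equalities establishing Theorem 1.

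The outcome-regression form is essentially this intermediate identity; under (A1) and (A6) one may equivalently write the second factor as $\E(Y|X, S=0, A=0) = \E(Y^0|X, S=0)$, so no further work is required beyond consequences of (A1), (A2), (A4), and (A6) already exploited in Theorem 1.

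For the two IPW forms, the strategy is to rewrite one of the trial outcome means using its inverse-probability-weighted representation. By iterated expectations and (A3),
\begin{align*}
\E(Y|X, S=1, A=1) &= \frac{\E\{SAY/q(X) \mid X\}}{\Pr(S=1|X)}, \\
\E(Y|X, S=1, A=0) &= \frac{\E\{S(1-A)Y/(1-q(X)) \mid X\}}{\Pr(S=1|X)}.
\end{align*}
For the second claimed formula, I substitute the first identity into the numerator of the intermediate identity. Writing $\alpha_1 = \Pr(S=0)^{-1}\E[I(S=0) h(X)]$ and applying the tower property $\E[I(S=0) h(X)] = \E[\Pr(S=0|X) h(X)]$ converts the leading $I(S=0)$ into $\Pr(S=0|X)$, which combines with the $1/\Pr(S=1|X)$ from the IPW identity to form $\tau(X)$. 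A final iterated-expectation step pulls $SAY/q(X)$ out of the inner conditional expectation, yielding exactly the claimed form. The third formula follows by the same maneuver applied to the denominator: multiply and divide by one extra factor of $\E(Y|X, S=1, A=0)$, then replace that extra factor using the second IPW identity above; the $\Pr(S=1|X)$ it produces again cancels with the corresponding factor inside $\tau(X)$.

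The routine part is algebraic bookkeeping; the only subtlety is keeping track of positivity to justify dividing by $q(X)$, $1-q(X)$, $\Pr(S=1|X)$, and the nonzero conditional outcome means, which are assured by (A3), (A5), and the nonvanishing component of (A4). I do not anticipate a genuine obstacle, since each alternative form is obtained by algebraic rearrangement of identities already established in Theorem 1.
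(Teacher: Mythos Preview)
Your proposal is correct and follows essentially the same route as the paper: start from the intermediate outcome-regression identity obtained in the proof of Theorem~\ref{thm:identification1}, then for the two IPW forms substitute the appropriate trial outcome mean by its inverse-probability-weighted representation, convert the conditioning on $\{S=0\}$ into a $\Pr(S=0\mid X)$ weight via the tower property, and pull the weighted outcome through the inner conditional expectation. The paper's derivations of Equations~2 and~3 match your plan step for step, including the extra multiply-and-divide by $\E(Y\mid X,S=1,A=0)$ for the third formula.
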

\begin{proof}
Equation 1 holds because
\begin{align*}\label{eq:id1}
        \E(Y^1 | S = 0) &= \E \big\{ \E(Y^1 | X, S = 0) \big | S = 0 \big\} \\ 
         &= \E \left\{ \dfrac{\E(Y^1 | X, S = 0)}{\E(Y^0 | X, S = 0)} \E(Y^0 | X, S = 0) \Big | S = 0 \right\} \\ 
         &= \E \left\{ \dfrac{\E(Y^1 | X, S = 1)}{\E(Y^0 | X, S = 1)} \E(Y^0 | X, S = 0) \Big | S = 0 \right\} \\ 
          &= \E \left\{ \dfrac{\E(Y^1 | X, S = 1, A = 1)}{\E(Y^0 | X, S = 1, A = 0)} \E(Y^0 | X, S = 0) \Big | S = 0 \right\} \\ 
          &= \E \left\{ \dfrac{\E(Y | X, S = 1, A = 1)}{\E(Y | X, S = 1, A = 0)} \E(Y | X, S = 0) \Big | S = 0 \right\}.
\end{align*}
Equation 2 holds because
\begin{align*}
        \E(Y^1 | S = 0) 
          &= \E \left\{ \dfrac{\E(Y | X, S = 1, A = 1)}{\E(Y | X, S = 1, A = 0)} \E(Y | X, S = 0) \Big | S = 0 \right\}\\
          &=\E \left\{ \dfrac{\E(Y | X, S = 0) }{\E(Y | X, S = 1, A = 0)} \E\Big\{\dfrac{SAY}{\Pr(S=1|X)\Pr(A=1|X, S=1)} \Big|  X\Big\} \Big | S = 0 \right\}\\
          &=\dfrac{1}{\Pr(S=0)}\E \left\{ (1-S)\dfrac{\E(Y | X, S = 0) }{\E(Y | X, S = 1, A = 0)} \E\Big\{\dfrac{SAY}{\Pr(S=1|X)\Pr(A=1|X, S=1)} \Big|  X\Big\} \right\}\\
          &=\dfrac{1}{\Pr(S=0)}\E \left\{ \dfrac{\Pr(S=0|X)}{\Pr(S=1|X)}\dfrac{\E(Y | X, S = 0) }{\E(Y | X, S = 1, A = 0)} \dfrac{SAY}{\Pr(A=1|X, S=1)} \right\}.
\end{align*}
Equation 3 holds because
\begin{align*}
        \E(Y^1 | S = 0) 
          =& \E \left\{ \dfrac{\E(Y | X, S = 1, A = 1)}{\E(Y | X, S = 1, A = 0)} \dfrac{\E(Y | X, S = 0)}{\E(Y | X, S = 1, A = 0)}\E(Y | X, S = 1, A=0) \Big | S = 0 \right\}\\
          =&\E \Big\{ \dfrac{\E(Y | X, S = 1, A = 1)}{\E(Y | X, S = 1, A = 0)} \dfrac{\E(Y | X, S = 0)}{\E(Y | X, S = 1, A = 0)} \\
          &\E\Big\{\dfrac{S(1-A)Y}{\Pr(S=1|X)\Pr(A=0|X, S=1)} \Big|  X\Big\} \Big | S = 0 \Big\}\\
          =&\dfrac{1}{\Pr(S=0)}\E \Big\{ (1-S)\dfrac{\E(Y | X, S = 1, A = 1)}{\E(Y | X, S = 1, A = 0)} \dfrac{\E(Y | X, S = 0)}{\E(Y | X, S = 1, A = 0)} \\
          &\E\Big\{\dfrac{S(1-A)Y}{\Pr(S=1|X)\Pr(A=0|X, S=1)} \Big|  X\Big\} \Big\}\\
          =&\dfrac{1}{\Pr(S=0)}\E \Big\{ \dfrac{\E(Y | X, S = 1, A = 1)}{\E(Y | X, S = 1, A = 0)} \dfrac{\E(Y | X, S = 0)}{\E(Y | X, S = 1, A = 0)}\\
          & \dfrac{\Pr(S=0|X)}{\Pr(S=1|X)}\dfrac{S(1-A)Y}{\Pr(A=0|X, S=1)} \Big\}.
\end{align*}
\end{proof}

Since $S=0\implies A=0$, the outcomes in the target population are unconfounded, that is, $Y^0=Y$ in $\{S=0\}$. Therefore, $\E(Y^0|S=0)=\E(Y|S=0)$.
\clearpage

\section{Proof of Lemma \ref{thm:IF1}}\label{appendix:IF1}

\begin{proof}
Recall that under the nonparametric model, $\mathcal M_{\text{\tiny np}}$, for the observable data $O$, the density of the law of the observable data can be written as 
\begin{equation*}
    p(r,x,a,y) = p(s) p(x|s) p(a|x, s) p(y|a, x, s).
\end{equation*}
Under this model, the tangent space is the Hilbert space of mean zero random variables with finite variance. It can be decomposed as $L_2^0 = \Lambda_{S} \oplus \Lambda_{X|S} \oplus   \Lambda_{A|S,X} \oplus  \Lambda_{Y|S, X, A} $. We now derive the influence function for $\alpha_{1} $ under this nonparametric model. 

\noindent
Recall that
\begin{equation*} 
  \alpha_{1}  \equiv \E \left\{ \dfrac{\E(Y | X, S = 1, A = 1)}{\E(Y | X, S = 1, A = 0)} \E(Y | X, S = 0) \Big | S = 0 \right\}=\E\Big\{\dfrac{\mu_{1, 1}(X)}{\mu_{1, 0}(X)}\mu_{0, 0}(X)\Big|S=0\Big\}.
\end{equation*}
We will use the path differentiability of $\alpha_{1} $ to obtain the efficient influence function under the non-parametric model for the observed data \citep{bickel1993efficient}. To do so, we examine the derivative of $\alpha_{1, p_t}(X)$ with respect to $t$; where the subscript $p_t$ denotes the dependence of $\alpha_{1} $ on a one-dimensional parametric sub-model $p_t$, indexed by $t \in [0,1)$, with $t = 0$ denoting the ``true'' data law. 

Utilizing the law of total expectation, the tricks of $\E\big\{\E(u|w)v\big\}=\E\big\{\E(v|w)u\big\}$, subtracting the mean zero terms, and $\E(A|B)=\E\Big\{\dfrac{I(B=b)}{\Pr(B=b)}A\Big\}$ when $B$ is discrete repeatedly, we have
\begin{align*}
     &\dfrac{\partial \alpha_{1, p_t}}{\partial t}\Big|_{t=0}\\
    =&\dfrac{\partial}{\partial t}\E_{p_t}\Bigg\{\dfrac{\mu_{1, 1, p_t}(X)}{\mu_{1, 0, p_t}(X)}\mu_{0, p_t}(X)\Big|S=0\Bigg\}\Big|_{t=0}\\
    =&\underbrace{\dfrac{\partial}{\partial t}\E_{p_{t}}\Bigg\{\dfrac{\mu_{1, 1, p_0}(X)}{\mu_{1, 0, p_0}(X)}\mu_{0 , p_0}(X)\Big|S=0\Bigg\}\Big|_{t=0}}_{(1)}+
    \underbrace{\E_{p_{0}}\Bigg\{\dfrac{\dfrac{\partial}{\partial t}\mu_{1, 1, p_t}(X)\Big|_{t=0}}{\mu_{1, 0, p_0}(X)}\mu_{0 , p_0}(X)\Big|S=0\Bigg\}}_{(2)}+\\
    &\underbrace{\E_{p_{0}}\Bigg\{\dfrac{\mu_{1, 1, p_0}(X)}{\dfrac{\partial}{\partial t}\mu_{1, 0, p_t}(X)\Big|_{t=0}}\mu_{0 , p_0}(X)\Big|S=0\Big\}}_{(3)}+
    \underbrace{\E_{p_{0}}\Big\{\dfrac{\mu_{1, 1, p_0}(X)}{\mu_{1, 0, p_0}(X)}\dfrac{\partial}{\partial t}\mu_{0, p_t}(X)\Big|_{t=0}\Big|S=0\Bigg\}}_{(4)}.
\end{align*}
Next, we calculate each term by turn. Denote the score function by $g(\cdot)$.
\begin{align*}
    (1)=&\E_{p_{0}}\Big\{\dfrac{1-S}{\Pr(S=0)}\dfrac{\mu_{1, 1, p_0}(X)}{\mu_{1, 0, p_0}(X)}\mu_{0 , p_0}(X) g_{X|S}(O)\Big\}\\
    =&\E_{p_{0}}\Big\{\dfrac{1-S}{\Pr(S=0)}\dfrac{\mu_{1, 1, p_0}(X)}{\mu_{1, 0, p_0}(X)}\mu_{0 , p_0}(X) g(O)\Big\}\\
    =&\E_{p_{0}}\Bigg[\dfrac{1-S}{\Pr(S=0)}\Big\{\dfrac{\mu_{1, 1, p_0}(X)}{\mu_{1, 0, p_0}(X)}\mu_{0 , p_0}(X)-\alpha_{1} \Big\}g(O)\Bigg].
\end{align*}
The second equation follows by the fact that $\dfrac{1-S}{\Pr(S=0)}\dfrac{\mu_{1, 1, p_0}(X)}{\mu_{1, 0, p_0}(X)}\mu_{0 , p_0}(X)\{g_{S}(O)+g_{A|S, X}(O)+g_{Y|S, X, A}(O)\}$ is mean zero, and the third one follows because
$\alpha_{1} $ is perpendicular to $g(O)$ given $S$.
The derivation of terms (2)-(4) are similar, we take the derivation of (2) as an example below
\begin{align*}
    (2)=&\E_{p_{0}}\Bigg\{\dfrac{\mu_{0 , p_0}(X)}{\mu_{1, 0, p_0}(X)}\dfrac{\partial}{\partial t}\mu_{1, 1, p_t}(X)\Big|_{t=0}\Big|S=0\Bigg\}\\
    =&\E_{p_{0}}\Bigg(\dfrac{\mu_{0 , p_0}(X)}{\mu_{1, 0, p_0}(X)}\E_{p_{0}}\Big[\{Y-\mu_{1, 1, p_0}(X)\}g_{Y|X, A=1, S=1}(O)\Big|X, A=1, S=1\Big]\Big|S=0\Bigg)\\
    =&\E_{p_{0}}\Bigg(\dfrac{1-S}{\Pr(S=0)}\dfrac{\mu_{0 , p_0}(X)}{\mu_{1, 0, p_0}(X)}\E_{p_{0}}\Big[\dfrac{AS}{q(X)\Pr(S=1|X)}\{Y-\mu_{1, 1, p_0}(X)\}g_{Y|X, A, S}(O)\Big|X\Big]\Bigg)\\
    =&\E_{p_{0}}\Bigg[\dfrac{\Pr(S=0|X)}{\Pr(S=0)}\dfrac{\mu_{0 , p_0}(X)}{\mu_{1, 0, p_0}(X)}\dfrac{AS}{q(X)\Pr(S=1|X)}\{Y-\mu_{1, 1, p_0}(X)\}g_{Y|X, A, S}(O)\Bigg]\\
    =&\E_{p_{0}}\Bigg[\dfrac{\Pr(S=0|X)}{\Pr(S=0)}\dfrac{\mu_{0 , p_0}(X)}{\mu_{1, 0, p_0}(X)}\dfrac{AS}{q(X)\Pr(S=1|X)}\{Y-\mu_{1, 1, p_0}(X)\}g(O)\Bigg]\\
    =&\E_{p_{0}}\Bigg[\dfrac{1}{\Pr(S=0)}S\tau(X)\dfrac{\mu_{0, p_0}(X)}{\mu_{1, 0, p_0}(X)}\dfrac{A}{q(X)}\{Y-\mu_{1, 1, p_0}(X)\}g(O)\Bigg].
\end{align*}
Similar to (2),
$
    (4)=\E_{p_{0}}\Bigg[\dfrac{1-S}{\Pr(S=0)}\dfrac{\mu_{1, 1, p_0}(X)}{\mu_{1, 0, p_0}(X)}\{Y-\mu_{0 , p_0}(X)\}g(O)\Bigg].
$
Applying the chain rule of derivative, and similar to above, (3) is,
\begin{align*}
    -\E_{p_{0}}\Bigg[\dfrac{1}{\Pr(S=0)}S\tau(X)\dfrac{\mu_{1, 1, p_0}(X)\mu_{0 , p_0}(X)}{\{\mu_{1, 0, p_0}(X)\}^{2}}\dfrac{(1-A)}{\{1-q(X)\}}\{Y-\mu_{1, 0, p_0}(X)\}g(O)\Bigg].
\end{align*}
After algebra re-arrangement, we conclude that the influence function of $\alpha_{1} $ is $A(O, \boldsymbol{\eta_1})$. The same result can be obtained using other identification results.

By applying basic nonparametric theory and chain rules, it is trivial to derive other influence functions.
\end{proof}
\clearpage

\section{Properties of $\widehat \kappa$}\label{appendix: kappa}

\begin{restatable}{lemma}{lemmagamma}
\label{lemma:gamma}
For the above defined $\widehat \kappa(\widetilde{x}) = \{n^{-1} \sum_{i=1}^n I(\widetilde{X}=\widetilde{x}, S_i = s)\}^{-1}$,
\begin{enumerate}
    \item[(i)] $I(\widehat \kappa(\widetilde{x})^{-1}=0)=o_p(n^{-1/2})$
    \item[(ii)] $\widehat \kappa(\widetilde{x})\xrightarrow{P} \kappa(\widetilde{x})$.
\end{enumerate}
\end{restatable}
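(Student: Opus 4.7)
The plan is to handle each claim separately using elementary tools, relying only on the i.i.d.\ assumption and a positivity condition guaranteeing $p := \Pr(\widetilde{X}=\widetilde{x}, S=s) > 0$ (which follows from assumption (a1), or the analogous condition in whichever section this lemma is applied).

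For part (ii), I would begin by observing that
\begin{equation*}
\widehat\kappa(\widetilde{x})^{-1} = n^{-1}\sum_{i=1}^n I(\widetilde{X}_i = \widetilde{x}, S_i = s)
\end{equation*}
is a sample average of i.i.d.\ bounded Bernoulli indicators with mean $p = \kappa(\widetilde{x})^{-1}$. By the weak law of large numbers, $\widehat\kappa(\widetilde{x})^{-1} \xrightarrow{P} p$. Since $p > 0$ by positivity, the map $u \mapsto 1/u$ is continuous at $u = p$, so the continuous mapping theorem yields $\widehat\kappa(\widetilde{x}) \xrightarrow{P} \kappa(\widetilde{x})$. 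One minor subtlety is that $1/u$ is undefined at $u=0$, but this causes no difficulty because the event $\{\widehat\kappa(\widetilde{x})^{-1} = 0\}$ has probability tending to zero (in fact exponentially fast, as shown below), so consistency is unaffected.

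For part (i), I would note that $\{\widehat\kappa(\widetilde{x})^{-1} = 0\}$ is exactly the event that none of the $n$ i.i.d.\ observations satisfies $\widetilde{X}_i = \widetilde{x}, S_i = s$. Since each observation independently fails this event with probability $1-p < 1$,
\begin{equation*}
\Pr\{\widehat\kappa(\widetilde{x})^{-1} = 0\} = (1-p)^n.
\end{equation*}
This decays exponentially in $n$, so in particular $n^{1/2}(1-p)^n \to 0$. Hence for any $\varepsilon > 0$,
\begin{equation*}
\Pr\{n^{1/2} I(\widehat\kappa(\widetilde{x})^{-1} = 0) > \varepsilon\} \leqslant \Pr\{\widehat\kappa(\widetilde{x})^{-1} = 0\} = (1-p)^n \to 0,
\end{equation*}
which establishes $I(\widehat\kappa(\widetilde{x})^{-1} = 0) = o_p(n^{-1/2})$.

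There is no real obstacle in this proof; both claims are essentially immediate consequences of the law of large numbers and the exponential tail of a binomial count. The only point requiring care is ensuring that positivity is invoked so that $p > 0$ (making $1/p$ well-defined and $(1-p)^n$ genuinely decaying), which is guaranteed by the positivity assumptions already imposed in the main results.
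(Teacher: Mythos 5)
Your proof is correct and follows essentially the same route as the paper: part (i) by bounding the probability of the indicator event by the (exponentially decaying) probability that the empirical count is zero, and part (ii) by the weak law of large numbers applied to $\widehat\kappa(\widetilde{x})^{-1}$ followed by passage to the reciprocal (you via the continuous mapping theorem, the paper via a direct $\varepsilon$-argument). Your expression $\Pr\{\widehat\kappa(\widetilde{x})^{-1}=0\}=(1-p)^n$ is in fact the correct one where the paper's displayed formula contains a typo.
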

\begin{proof}
For any $\varepsilon>0$,
\begin{align*}
    &\Pr\{|\sqrt{n}I(\widehat\kappa(\widetilde{x})^{-1}=0)-0|>\varepsilon\}\\
    \leqslant 
    & \Pr\{\widehat\kappa(\widetilde{x})^{-1}=0\}=1-\{\kappa(\widetilde{x})^{-1}\}^{n}\\
    \rightarrow 
    & \text{ } 0, \quad \mathrm{ as } \quad n\rightarrow \infty \quad \textrm{(By condition \emph{A6})}.
\end{align*}
By Weak Law of Large Numbers, $\forall \varepsilon$, we have
$
    \Pr\{|\widehat\kappa(\widetilde{x})^{-1}-\kappa(\widetilde{x})^{-1}|>\varepsilon\},
$
as $n\rightarrow\infty$.
Next, fix any $\varepsilon>0$, set $\kappa\equiv\Pr\{\widetilde{X}=\widetilde{x}, S=s\}/2$, we have
\begin{align*}
    &\Pr\{|\widehat\kappa(\widetilde{x})-\kappa(\widetilde{x})|>\varepsilon\}\\
    \leqslant
    & \Pr\Big\{\widehat\kappa(\widetilde{x})^{-1}\geqslant\kappa, \Big|\widehat\kappa(\widetilde{x})-\kappa(\widetilde{x})\Big|>\varepsilon\Big\}+\Pr\{\widehat\kappa(\widetilde{x})^{-1}<\kappa\}\\
    \leqslant
    & \Pr\{|\widehat\kappa(\widetilde{x})^{-1}-\kappa(\widetilde{x})^{-1}|\geqslant2\kappa^{2}\varepsilon\}+\Pr\{|\widehat\kappa(\widetilde{x})^{-1}-\kappa(\widetilde{x})^{-1}|>\kappa\}\\
    \rightarrow
    & \text{ } 0, \quad \mathrm{ as } \quad n\rightarrow \infty
\end{align*}
\end{proof}
\clearpage

\section{Proof of Theorem \ref{thm:inference1}}\label{appendix:inference1}
\begin{proof}
For general functions $\boldsymbol{\widetilde \eta_1}$ define 
\begin{align*}
H(\boldsymbol{\widetilde \eta_1}) =  &\widetilde\kappa\Bigg(
    (1-S)
    \dfrac{\widetilde\mu_{1, 1}(X)}{\widetilde\mu_{1, 0}(X)}Y+
    S\widetilde\tau(X)\dfrac{\widetilde\mu_{0, 0}(X)}{\widetilde\mu_{1, 0}(X)}\Bigg[
    \dfrac{A}{\widetilde q(X)}\big\{Y-\widetilde\mu_{1, 1}(X)\big\}-\\&
    \dfrac{\widetilde\mu_{1,1}(X)}{\widetilde\mu_{1, 0}(X)}\dfrac{1-A}{1-\widetilde q(X)}\big\{Y-\widetilde\mu_{1, 0}(X)\big\}
    \Bigg]
    \Bigg).
\end{align*}
Therefore, $\widehat \alpha_{1}  = \mathbb{P}_n\big\{H(\boldsymbol{\widehat\eta_1})\big\}$.\\
We observe that we can decompose $\widehat \alpha_{1}  - \alpha_{1} $ into three parts as below.
\begin{align*}
     &\widehat \alpha_{1}  - \alpha_{1} =\mathbb{P}_n\big\{H(\boldsymbol{\widehat \eta_1})\big\}-\mathbb{P}\big\{H(\boldsymbol{\eta_1})\big\}\\
    =&\underbrace{
    (\mathbb{P}_n-\mathbb{P})\big\{H(\boldsymbol{\widehat \eta_1})-H(\boldsymbol{\eta_1})\big\} }_{1}+
    \underbrace{
    \mathbb{P}\big\{H(\boldsymbol{\widehat \eta_1})-H(\boldsymbol{\eta_1})\big\} }_{2}+
    \underbrace{
    (\mathbb{P}_n-\mathbb{P})H(\boldsymbol{\eta_1})}_{3}.
\end{align*}
Working on term 1, note that
\begin{align*}
    &||H(\boldsymbol{\widehat \eta_1})-H(\boldsymbol{\eta_1})||\\
    = 
    & \Bigg|\Bigg|\widehat\kappa\Bigg(
    (1-S)
    \dfrac{\widehat\mu_{1, 1}(X)}{\widehat\mu_{1, 0}(X)}Y\\
    &+
    S\widehat\tau(X)\dfrac{\widehat\mu_{0, 0}(X)}{\widehat\mu_{1, 0}(X)}\Bigg[
    \dfrac{A}{\widehat q(X)}\big\{Y-\widehat\mu_{1, 1}(X)\big\}-\dfrac{1-A}{1-\widehat q(X)}\dfrac{\widehat\mu_{1, 1}(X)}{\widehat\mu_{1, 0}(X)}\big\{Y-\widehat\mu_{1, 0}(X)\big\}
    \Bigg]
    \Bigg)-\\
    &\kappa\Bigg(
    (1-S)
    \dfrac{\mu_{1, 1}(X)}{\mu_{1, 0}(X)}Y\\
    &+
    S\tau(X)\dfrac{\mu_{0, 0}(X)}{\mu_{1, 0}(X)}\Bigg[
    \dfrac{A}{q(X)}\big\{Y-\mu_{1, 1}(X)\big\}-\dfrac{1-A}{1-q(X)}\dfrac{\mu_{0, 0}(X)}{\mu_{1, 0}(X)}\big\{Y-\mu_{1, 0}(X)\big\}
    \Bigg]
    \Bigg)\Bigg|\Bigg|\\
    \leqslant &
    \Bigg|\Bigg|
    \widehat\kappa\Bigg((1-S)
    \Bigg[\Big\{\dfrac{\widehat\mu_{1, 1}(X)}{\widehat\mu_{1, 0}(X)}-\dfrac{\mu_{1, 1}(X)}{\mu_{1, 0}(X)}\Big\}Y+\\
    &S\widehat\tau(X)\dfrac{\widehat\mu_{0, 0}(X)}{\widehat\mu_{1, 0}(X)}\Big[
    \dfrac{A}{\widehat q(X)}\{\widehat\mu_{1, 1}(X)-\mu_{1, 1}(X)\}-\dfrac{1-A}{1-\widehat q(X)}\dfrac{\widehat\mu_{1, 1}(X)}{\widehat\mu_{0, 0}(X)}\{\widehat\mu_{1, 0}(X)-\mu_{1, 0}(X)\}
    \Big]
    \Bigg]    
    \Bigg)
    \Bigg|\Bigg|\\
    &+\Bigg|\Bigg|
    (\widehat\kappa-\kappa)\Bigg(
    (1-S)
    \dfrac{\mu_{1, 1}(X)}{\mu_{1, 0}(X)}Y+\\
    &
    S\tau(X)\dfrac{\mu_{0, 0}(X)}{\mu_{1, 0}(X)}\Bigg[
    \dfrac{A}{q(X)}\big\{Y-\mu_{1, 1}(X)\big\}-\dfrac{1-A}{1-q(X)}\dfrac{\mu_{0, 0}(X)}{\mu_{1, 0}(X)}\big\{Y-\mu_{1, 0}(X)\big\}
    \Bigg]
    \Bigg)
    \Bigg|\Bigg|+\\
    &\Bigg|\Bigg|
    \widehat \kappa \Bigg(
    AS\Big\{\dfrac{\widehat\mu_{0, 0}(X)\widehat\tau(X)}{\widehat\mu_{1, 0}(X)\widehat q(X)}-\dfrac{\mu_{0, 0}(X)\tau(X)}{\mu_{1, 0}(X) q(X)}\Big\}\{Y-\mu_{1, 1}(X)\}
    +\\
    &(1-A)S\Big[\dfrac{\widehat\mu_{1, 1}(X)\widehat\mu_{0, 0}(X)\widehat\tau(X)}{\widehat\mu_{1, 0}(X)^{2}\{1-\widehat q(X)\}}-\dfrac{\mu_{1, 1}(X)\mu_{0, 0}(X)\tau(X)}{\mu_{1, 0}(X)^{2}\{1- q(X)\}}\Big]\{Y-\mu_{1, 0}(X)\}
    \Bigg)
    \Bigg|\Bigg|\\
    \lesssim
    &||\widehat\mu_{0, 0}(X)-\mu_{0, 0}(X)||+||\widehat\mu_{1, 0}(X)-\mu_{1, 0}(X)||+||\widehat\mu_{1, 1}(X)-\mu_{1, 1}(X)||+\\
    &||\widehat \tau(X)-\tau(X)||+||\widehat q(X)-q(X)||+||\widehat\kappa-\kappa||.
\end{align*}
Working on term 2, we have
\begin{align*}
     &\E\{H(\boldsymbol{\widehat \eta_1})-H(\boldsymbol{\eta_1})\}\\
    =&(\widehat\kappa-\kappa)\alpha_{1}/\kappa + \widehat\kappa\\
    &\E\Bigg(\Pr(S=0|X)
    \Bigg[
    \dfrac{\widehat\mu_{1, 1}(X)}{\widehat\mu_{1, 0}(X)}\mu_{0, 0}(X)-\dfrac{\mu_{1, 1}(X)}{\mu_{1, 0}(X)}\mu_{0, 0}(X)\Bigg]
    +
    \Pr(S=1|X)\widehat\tau(X)\dfrac{\widehat\mu_{0, 0}(X)}{\widehat\mu_{1, 0}(X)}\\
    &\Bigg[
    \dfrac{q(X)}{\widehat q(X)}\{\mu_{1, 1}(X)-\widehat\mu_{1, 1}(X)\}+
    \dfrac{1-q(X)}{1-\widehat q(X)}\dfrac{\widehat\mu_{1, 1}(X)}{\widehat\mu_{1, 0}(X)}\{\mu_{1, 0}(X)-\widehat\mu_{1, 0}(X)\}
    \Bigg]
    \Bigg)\\
    =&(\widehat\kappa-\kappa)\alpha_{1}/\kappa + \widehat\kappa\E\Bigg(\Pr(S=0|X)
    \Bigg[
    \Bigg\{\dfrac{\widehat\mu_{1, 1}(X)}{\widehat\mu_{1, 0}(X)}-\dfrac{\mu_{1, 1}(X)}{\mu_{1, 0}(X)}\Bigg\}\mu_{0, 0}(X)+\\
    &
    \dfrac{\widehat\tau(X) q(X)\widehat\mu_{0, 0}(X)}{\tau(X) \widehat q(X)\widehat\mu_{1, 0}(X)}\{\mu_{1, 1}(X)-\widehat\mu_{1, 1}(X)\}+\dfrac{1-q(X)}{1-\widehat q(X)}
    \dfrac{\widehat\tau(X) \widehat\mu_{0, 0}(X)\widehat\mu_{1, 1}(X)}{\tau(X) \widehat\mu_{1, 0}(X)^{2}}\{\widehat\mu_{1, 0}(X)-\mu_{1, 0}(X)\}
    \Bigg]
    \Bigg)\\
    =&(\widehat\kappa-\kappa)\alpha_{1}/\kappa + \widehat\kappa\\
    &\E\Bigg(\Pr(S=0|X)
    \Bigg[
    \dfrac{\mu_{0, 0}(X)}{\mu_{1, 0}(X)}
    \{\widehat\mu_{1, 1}(X)-\mu_{1, 1}(X)\}+
    \dfrac{\mu_{0, 0}(X)}{\mu_{1, 0}(X)}\dfrac{\widehat\mu_{1, 1}(X)}{\widehat\mu_{1, 0}(X)}\{\mu_{1, 0}(X)-\widehat\mu_{1, 0}(X)\}+\\
    &\dfrac{\widehat\tau(X) q(X)\widehat\mu_{0, 0}(X)}{\tau(X) \widehat q(X)\widehat\mu_{1, 0}(X)}
    \{\mu_{1, 1}(X)-\widehat\mu_{1, 1}(X)\}+
    \dfrac{1-q(X)}{1-\widehat q(X)}
    \dfrac{\widehat\tau(X) \widehat\mu_{0, 0}(X)\widehat\mu_{1, 1}(X)}{\tau(X) \widehat\mu_{1, 0}(X)^{2}}\{\widehat\mu_{1, 0}(X)-\mu_{1, 0}(X)\}
    \Bigg]
    \Bigg)\\
    =&(\widehat\kappa-\kappa)\alpha_{1}/\kappa + \widehat\kappa\E\Bigg(\Pr(S=0|X)
    \Bigg[
    \Bigg\{
    \dfrac{\mu_{0, 0}(X)}{\mu_{1, 0}(X)}-\dfrac{\widehat\tau(X) q(X)\widehat\mu_{0, 0}(X)}{\tau(X) \widehat q(X)\widehat\mu_{1, 0}(X)}
    \Bigg\}
    \{\widehat\mu_{1, 1}(X)-\mu_{1, 1}(X)\}+\\
    &\dfrac{\widehat\mu_{1, 1}(X)}{\widehat\mu_{1, 0}(X)}
    \Bigg\{
    \dfrac{1-q(X)}{1-\widehat q(X)}\dfrac{\widehat\tau(X) \widehat\mu_{0, 0}(X)}{\tau(X) \widehat\mu_{1, 0}(X)}-\dfrac{\mu_{0, 0}(X)}{\mu_{1, 0}(X)}
    \Bigg\}
    \{\widehat\mu_{1, 0}(X)-\mu_{1, 0}(X)\}
    \Bigg]
    \Bigg)\\
    =&(\widehat\kappa-\kappa)\alpha_{1}/\kappa + R_{1, n}\\
    \lesssim& (\widehat\kappa-\kappa)\alpha_{1}/\kappa + \\
    &O_p\Big\{
    \big(
    \left\lVert \widehat \mu_{1, 1}(X)-\mu_{1, 1}(X) \right\rVert+
    \left\lVert \widehat \mu_{1, 0}(X)-\mu_{1, 0}(X) \right\rVert\big)
    \big(
    \left\lVert \widehat r_S(X)-r_S(X) \right\rVert+
    \left\lVert \widehat \tau(X)-\tau(X) \right\rVert\big)
    \Big\}.
\end{align*}
Alternatively, we have
\begin{align*}
     &\E\{H(\boldsymbol{\widehat \eta_1})-H(\boldsymbol{\eta_1})\}\\
    =&(\widehat\kappa-\kappa)\alpha_{1}/\kappa + \widehat\kappa\\
    &\E\Bigg(\Pr(S=0|X)
    \Bigg[
    \Bigg\{\dfrac{\widehat\mu_{1, 1}(X)}{\widehat\mu_{1, 0}(X)}-\dfrac{\mu_{1, 1}(X)}{\mu_{1, 0}(X)}\Bigg\}\mu_{0, 0}(X)+
    \Bigg\{\dfrac{\mu_{1, 1}(X)}{\mu_{1, 0}(X)}-\dfrac{\widehat\mu_{1, 1}(X)}{\widehat\mu_{1, 0}(X)}\Bigg\}
    \dfrac{\widehat\tau(X) q(X)}{\tau(X) \widehat q(X)}\widehat\mu_{0, 0}(X)+\\
    &\dfrac{\mu_{1, 1}(X)}{\mu_{1, 0}(X)}\dfrac{\widehat\tau(X) q(X)}{\tau(X) \widehat q(X)}\widehat\mu_{0, 0}(X)
    -\dfrac{\mu_{1, 1}(X)}{\mu_{1, 0}(X)}\dfrac{\widehat\tau(X) q(X)}{\tau(X) \widehat q(X)}\widehat\mu_{0, 0}(X)+\\
    &\Bigg\{\dfrac{\mu_{1, 1}(X)}{\mu_{1, 0}(X)}-\dfrac{\widehat\mu_{1, 1}(X)}{\widehat\mu_{1, 0}(X)}\Bigg\}\dfrac{1-q(X)}{1-\widehat q(X)}
    \dfrac{\widehat\tau(X) \widehat\mu_{0, 0}(X)}{\tau(X) \widehat\mu_{1, 0}(X)}\{\widehat\mu_{1, 0}(X)-\mu_{1, 0}(X)\}\\
    &+\dfrac{\mu_{1, 1}(X)}{\mu_{1, 0}(X)}\dfrac{1-q(X)}{1-\widehat q(X)}
    \dfrac{\widehat\tau(X) \widehat\mu_{0, 0}(X)}{\tau(X) \widehat\mu_{1, 0}(X)}\{\widehat\mu_{1, 0}(X)-\mu_{1, 0}(X)\}
    \Bigg]
    \Bigg)\\
    =&(\widehat\kappa-\kappa)\alpha_{1}/\kappa + \widehat\kappa\\
    &\E\Bigg(\Pr(S=0|X)
    \Bigg[
   \Bigg\{\dfrac{\widehat\mu_{1, 1}(X)}{\widehat\mu_{1, 0}(X)}-\dfrac{\mu_{1, 1}(X)}{\mu_{1, 0}(X)}\Bigg\}\\
   &\Bigg[\mu_{0, 0}(X)-
    \dfrac{\widehat\tau(X) q(X)}{\tau(X) \widehat q(X)}\widehat\mu_{0, 0}(X)+
    \dfrac{\widehat\tau(X) }{\tau(X)}\dfrac{1-q(X)}{1-\widehat q(X)}\widehat\mu_{0, 0}(X)\Big\{1-\dfrac{\mu_{1, 0}(X)}{\widehat\mu_{1, 0}(X)}\Big\}\Bigg]\\
    &+\dfrac{\mu_{1, 1}(X)}{\mu_{1, 0}(X)}\dfrac{q(X)}{\widehat q(X)}
    \dfrac{\widehat\tau(X) \widehat\mu_{0, 0}(X)}{\tau(X) \widehat\mu_{1, 0}(X)}\{\widehat\mu_{1, 0}(X)-\mu_{1, 0}(X)\}\Bigg\{
    1-\dfrac{1-q(X)}{1-\widehat q(X)}\dfrac{\widehat q(X)}{q(X)}
    \Bigg\}
    \Bigg]
    \Bigg)\\
    =&(\widehat\kappa-\kappa)\alpha_{1}/\kappa + R_{1, n}\\
    \lesssim&(\widehat\kappa-\kappa)\alpha_{1}/\kappa + \\
    &\left\lVert \dfrac{\widehat\mu_{1, 1}(X)}{\widehat\mu_{1, 0}(X)}-\dfrac{\mu_{1, 1}(X)}{\mu_{1, 0}(X)}\right\rVert
    \Big\{
    \left\lVert \widehat\mu_{0, 0}(X)-\mu_{0, 0}(X)\right\rVert+
    \left\lVert\widehat\mu_{1, 0}(X)-\mu_{1, 0}(X)\right\rVert+\left\lVert\widehat\tau(X)-\tau(X)\right\rVert
    \Big\}\\
    &+\left\lVert\widehat q(X)-q(X)\right\rVert\Bigg\{
    \left\lVert \dfrac{\widehat\mu_{1, 1}(X)}{\widehat\mu_{1, 0}(X)}-\dfrac{\mu_{1, 1}(X)}{\mu_{1, 0}(X)}\right\rVert+\left\lVert\widehat\mu_{1, 0}(X)-\mu_{1, 0}(X)\right\rVert
    \Bigg\}\\
    \lesssim&(\widehat\kappa-\kappa)\alpha_{1}/\kappa + \\
    &O_p\Big\{
    \big(\left\lVert \widehat r_A(X)-r_A(X)\right\rVert \big)
    \big(
    \left\lVert \widehat \mu_{0, 0}(X)-\mu_{0, 0}(X) \right\rVert+
    \left\lVert \widehat \mu_{1, 0}(X)-\mu_{1, 0}(X) \right\rVert+
    \left\lVert \widehat \tau(X)-\tau(X) \right\rVert
    \big)
    \Big\}
\end{align*}

Because $I(\widehat \kappa^{-1}=0)=o_p(n^{-1/2})$, and we also assume $||\widehat q(X)-q(X)||=o_p(n^{-1/2})$. Therefore, 
by conditions \emph{(a3)}, we have $||H(\boldsymbol{\widehat \eta_1})-H(\boldsymbol{\eta_1})|| = o_p(1)$, so that by Lemma 2 of \citet{kennedy2020sharp}, term 1, $(\mathbb{P}_n-\mathbb{P})\big\{H(\boldsymbol{\widehat \eta_1})-H(\boldsymbol{\eta_1})\big\}=o_p(n^{-1/2})$. Combine terms 2 and 3, we have
\begin{align*}
     &\mathbb{P}\big\{H(\boldsymbol{\widehat \eta_1})-H(\boldsymbol{\eta_1})\big\}+
    (\mathbb{P}_n-\mathbb{P})H(\boldsymbol{\eta_1})\\
    =&\mathbb{P}_n\{A(O, \boldsymbol{\eta_1})\}+R_{1, n}+\Big\{\dfrac{\widehat\kappa}{\kappa} -1\Big\}\alpha_{1} +
    \Big\{\dfrac{\kappa}{\widehat\kappa} -1\Big\}\alpha_{1},
\end{align*}
Factoring and simplifying the last two terms yields $-\{\widehat\kappa -\kappa\}^2\alpha_{1}/\widehat\kappa\kappa $, which is
$o_{p}(n^{-1/2})$ by the central limit theorem. 

Therefore, combining terms 1, 2 and 3, we conclude
\begin{align*}
     \widehat \alpha_{1} -\alpha_{1} 
    =\mathbb{P}_n\{A(O, \boldsymbol{\eta_1})\}+R_{1, n}+o_p(n^{-1/2}).
\end{align*}
In particular, if $R_{1, n}= o_p(n^{-1/2})$, then 
\begin{align*}
    \sqrt{n}\big(\widehat \alpha_{1} -\alpha_{1} \big)\rightsquigarrow\mathcal{N}\big[0, \E\{A(O, \boldsymbol{\eta_1})^{2}\}\big].
\end{align*}
That is, $\widehat \alpha_{1} $ is non-parametric efficient.

The estimator of $\beta_{1}$ is the empirical average of $Y(1-S)$. According to non-parametric theory, it is trivial to see that, without any conditions, 
$
    \sqrt{n}(\widehat \beta_{1}  -\beta_{1})\rightsquigarrow\mathcal{N}\big[0, \E\{B(O)^{2}\}\big].
$
Since we just use chain rules to derive the influence functions of $\phi_{1}$ and $\psi_{1}$, with the same conditions hold for $\widehat \alpha_{1}$, we have $ 
      \sqrt{n}(\widehat \phi_{1}  -\phi_{1})\rightsquigarrow\mathcal{N}\big[0, \E\{\Phi_{1}(O, \boldsymbol{\eta_1})^{2}\}\big],
$
and 
$
      \sqrt{n}(\widehat \psi_{1}  -\psi_{1})\rightsquigarrow\mathcal{N}\big[0, \E\{\Psi_{1}(O, \boldsymbol{\eta_1})^{2}\}\big].
$
\end{proof}
\clearpage
\section{Efficient estimation under (A4*)}\label{appendix:A4*1}
The following theorem gives the identification results.
\begin{theorem}
\label{thm:identification1*}
Under conditions (A1) through (A3), (A4*), and (A5) through (A6), $\E[Y^1|S=0]$ can be identified by $\alpha'_1\equiv\E \left\{\mu_{1, 1}(X) | S = 0 \right\}$ and $\alpha'_1=\E \left\{S\tau(X)A/q(X)Y | S = 0 \right\}$.
In addition, $\E[Y^0|S=0]$ can be identified by $\beta_1\equiv\mathbb{P}_{n}(Y| S = 0)$. Then $\E[Y^1/Y^0|S=0]$ and $\E[Y^1-Y^0|S=0]$ can be identified by $\phi'_1\equiv\alpha'_{1}/\beta_{1}$ and $\psi'_1\equiv\alpha'_{1}-\beta_{1}$ respectively.
\end{theorem}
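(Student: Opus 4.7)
The plan is to establish identification separately for $\E[Y^1 \mid S=0]$ and $\E[Y^0 \mid S=0]$, and then obtain the ratio and difference by direct substitution. The key simplification relative to Theorem \ref{thm:identification1} is that under (A4*) --- transportability of the conditional outcome means themselves, rather than of their ratios --- the trial outcome model $\mu_{1,1}(X)$ transports directly to the target, so no rescaling by $\mu_{0,0}(X)/\mu_{1,0}(X)$ is needed.

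For the outcome-regression form of $\alpha'_1$, the argument will be a short chain of equalities. By iterated expectations, $\E[Y^1 \mid S=0] = \E\{\E[Y^1 \mid X, S=0] \mid S=0\}$. Condition (A4*) then gives $\E[Y^1 \mid X, S=0] = \E[Y^1 \mid X, S=1]$ for every $x$ in the support of $X$ given $S=0$; condition (A2) rewrites this as $\E[Y^1 \mid X, S=1, A=1]$; and (A1) turns it into $\mu_{1,1}(X)$. Condition (A5) ensures the $S=1$ conditional expectation is well defined on the $S=0$ support, and (A3) ensures the $A=1$ stratum has positive density.

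For the inverse-probability-weighting form, start from $\alpha'_1 = \E[\mu_{1,1}(X) \mid S=0] = \Pr[S=0]^{-1}\E[(1-S)\mu_{1,1}(X)]$ and apply iterated expectations to get $\E[(1-S)\mu_{1,1}(X)] = \E[\Pr(S=0 \mid X)\mu_{1,1}(X)] = \E[\tau(X)\Pr(S=1 \mid X)\mu_{1,1}(X)]$. A standard IPW identity inside the trial, $\mu_{1,1}(X) = \E[AY/q(X) \mid X, S=1]$, holds by (A1) and (A3); rolling the conditioning back yields $\E[S\tau(X)AY/q(X)]$, and dividing by $\Pr(S=0)$ produces the weighted expression in the theorem statement (interpreted in the same $\kappa$-scaled sense used throughout the paper). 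Condition (A5) ensures $\tau(X)$ is finite on the relevant support.

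For $\E[Y^0 \mid S=0]$, condition (A6) forces $A=0$ on $\{S=0\}$, so $\E[Y^0 \mid S=0] = \E[Y^0 \mid S=0, A=0] = \E[Y \mid S=0, A=0] = \E[Y \mid S=0] = \beta_1$ by (A1) and (A6). Plugging the identified formulas into the definitions yields $\phi'_1 = \alpha'_1/\beta_1$ (assuming $\beta_1 \neq 0$) and $\psi'_1 = \alpha'_1 - \beta_1$. No genuine obstacle arises: the proof is a trimmed version of that of Theorem \ref{thm:identification1}, in which the step invoking transportability of the ratio of conditional means is replaced by a single application of (A4*), and the only care needed is to keep the (A1)-(A2)-(A4*) chain in the correct order and to verify that positivity conditions (A3) and (A5) cover every conditional expectation that appears.
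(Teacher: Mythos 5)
Your proposal is correct and follows essentially the same route the paper intends: the paper declares this proof ``trivial given the proofs of Theorems 1--2,'' and your argument is exactly the expected trimmed version of the Theorem \ref{thm:identification1} chain, with (A4*) replacing the ratio-transportability step and the IPW form obtained by the standard within-trial weighting identity. Your handling of the $\kappa$-scaling in the weighted expression and of $\beta_1$ via (A6) and (A1) matches the paper's treatment.
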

The following theorem gives the influence functions of the target parameters in the above theorem.
\begin{theorem}
\label{thm:IF1*}
Under conditions (A1) through (A3), (A4*), and (A5) through (A6), the non-parametric influence function of $\alpha_{1}$, $A'_{1}(O, \boldsymbol{\eta_{1}})$ is
\begin{align*}
A'_{1}\{O;\mu_{1, 1}(X),\tau(X),q(X)\}=    
\kappa\Bigg(
    (1-S)
    \{\mu_{1, 1}(X)-\alpha_{1}\}+
    S\tau(X)
    \dfrac{A}{q(X)}\{Y-\mu_{1, 1}(X)\}
    \Bigg).
\end{align*}
The influence function of $\beta_{1}$ is $B_{1}(O)=Y(1-S)-\beta_{1}$. In addition, the influence functions of $\phi'_{1}$ and $\psi'_{1}$ are $\Phi'_{1}(O, \boldsymbol{\eta_{1}})=\frac{1}{\beta_{1}}\{A'_{1}(O, \boldsymbol{\eta_{1}})-\phi_{1} B_{1}(O)\}$ and $\Psi_{1}'(O, \boldsymbol{\eta_{1}})=A'_{1}(O, \boldsymbol{\eta_{1}})-B_{1}(O)$ respectively.
\end{theorem}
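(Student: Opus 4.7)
The plan is to use path differentiability under a one-dimensional regular parametric submodel $p_t$ with score $g(O)$, applied to the identification formula $\alpha'_{1} = \E[\mu_{1,1}(X) I(S=0)]/\Pr(S=0)$. As in the proof of Lemma \ref{thm:IF1}, I would begin by decomposing the tangent space of the nonparametric model as $L_2^0 = \Lambda_S \oplus \Lambda_{X|S} \oplus \Lambda_{A|S,X} \oplus \Lambda_{Y|S,X,A}$, so that any nonparametric score can be written as a sum of four orthogonal components.

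Next I would compute $\partial_t \alpha'_{1,p_t}|_{t=0}$ via the product rule, splitting it into a part coming from differentiating the outer distribution of $(X,S)$ (holding $\mu_{1,1}$ fixed at its true value) and a part coming from differentiating $\mu_{1,1,p_t}(X)$ (holding the outer law fixed). For the first part, I would apply the standard trick of multiplying by $\kappa(1-S)$ and then subtracting the mean-zero term $\kappa(1-S)\alpha'_{1}\{g_S(O)+g_{A|S,X}(O)+g_{Y|S,X,A}(O)\}$ to recenter, yielding $\E\bigl[\kappa(1-S)\{\mu_{1,1}(X)-\alpha'_{1}\}\,g(O)\bigr]$.

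For the second part, I would write $\partial_t \mu_{1,1,p_t}(X)|_{t=0} = \E[\{Y-\mu_{1,1}(X)\}\,g_{Y|X,A=1,S=1}(O)\mid X,A=1,S=1]$, then inflate to an unconditional expectation by multiplying by $AS/\{q(X)\Pr(S=1\mid X)\}$ and reweight from the trial to the target population by introducing the factor $\Pr(S=0\mid X)/\Pr(S=0) = \kappa\,\Pr(S=1\mid X)\,\tau(X)$. After the $\Pr(S=1\mid X)$ terms cancel, this contributes $\E\bigl[\kappa\, S\,\tau(X)\,\{A/q(X)\}\{Y-\mu_{1,1}(X)\}\,g(O)\bigr]$. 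Summing the two contributions gives $\partial_t \alpha'_{1,p_t}|_{t=0} = \E[A'_1(O;\mu_{1,1},\tau,q)\,g(O)]$ for every score $g$, and since $A'_1 \in L_2^0$ by inspection, it is the unique nonparametric influence function. The influence function $B_1(O)$ for $\beta_1=\E[Y\mid S=0]$ is obtained by the same argument applied to a bare conditional mean (only the outer differentiation term survives, because no nuisance estimand appears inside the expectation). Finally, $\Phi'_1$ and $\Psi'_1$ follow by the delta method applied to the smooth maps $(\alpha,\beta)\mapsto \alpha/\beta$ and $(\alpha,\beta)\mapsto \alpha-\beta$, using $\beta_1\neq 0$ for the former.

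The main obstacle, as in Lemma \ref{thm:IF1}, is the bookkeeping in the second part: one must simultaneously (i) convert a conditional score $g_{Y|X,A=1,S=1}$ into the full nonparametric score $g(O)$ by inserting the appropriate indicator-over-probability weights, and (ii) transport the resulting expression from the trial population to the target population via $\tau(X)$, all while checking that the inserted and subtracted terms are genuinely mean zero. Relative to Lemma \ref{thm:IF1}, however, this is strictly simpler because condition (A4*) removes the $\mu_{1,0}$ and $\mu_{0,0}$ factors entirely, so no cross terms involving ratios of outcome regressions need to be tracked.
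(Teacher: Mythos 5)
Your proposal is correct and follows essentially the same route as the paper: the paper declares this theorem ``trivial given the proofs of Theorems \ref{thm:identification1} to \ref{thm:inference1},'' and the relevant template is the path-differentiability/score-decomposition argument of \ref{appendix:IF1}, which you reproduce faithfully for the simpler functional $\E[\mu_{1,1}(X)\mid S=0]$ (outer term giving $\kappa(1-S)\{\mu_{1,1}(X)-\alpha'_1\}$, inner term giving $\kappa S\tau(X)\{A/q(X)\}\{Y-\mu_{1,1}(X)\}$, and the chain rule for $\Phi'_1$ and $\Psi'_1$). Your observation that (A4*) eliminates the $\mu_{1,0}$ and $\mu_{0,0}$ cross terms is exactly why the paper treats the extension as immediate.
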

Then the proposed estimator is $
\widehat \alpha_{1}'=
\kappa\mathbb{P}_{n}\Big[
(1-S)\widehat\mu_{1, 1}(X)+
S\widehat\tau(X)\dfrac{A}{\widehat q(X)}\{Y-\widehat\mu_{1, 1}(X)\}
    \Big]
$.
The following theorem gives the asymptotic properties of $\widehat\alpha_{1}'$.
\begin{theorem}
\label{thm:inference1*}
If conditions \emph{(a1)} through \emph{(a3)} hold, then $\widehat \alpha'_1$ is consistent with rate of convergence $O_p(R'_{1, n}+n^{-1/2})$, where  
$
    R'_{1, n}=\left\lVert \widehat \mu_{1, 1}(X)-\mu_{1, 1}(X) \right\rVert 
    \left\lVert \widehat \tau(X)-\tau(X) \right\rVert.
$
When $R'_{1, n}= o_p(n^{-1/2})$, $\widehat \alpha'_1$ is asymptotically normal and non-parametric efficient.
\end{theorem}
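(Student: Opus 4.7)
The plan is to mimic the von Mises--type expansion used in the proof of Theorem~\ref{thm:inference1}, specialized to the much simpler influence function $A'_1$ of Theorem~\ref{thm:IF1*}. Define
$$
H'(\widetilde{\boldsymbol{\eta}})\;=\;\widetilde\kappa\Big[(1-S)\widetilde\mu_{1,1}(X)\;+\;S\,\widetilde\tau(X)\tfrac{A}{\widetilde q(X)}\{Y-\widetilde\mu_{1,1}(X)\}\Big],
$$
so that $\widehat\alpha'_1=\mathbb{P}_n H'(\widehat{\boldsymbol{\eta}}_1)$ and $\alpha'_1=\mathbb{P} H'(\boldsymbol{\eta}_1)$. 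The usual sample-split decomposition gives
$$
\widehat\alpha'_1-\alpha'_1 \;=\; \underbrace{(\mathbb{P}_n-\mathbb{P})\{H'(\widehat{\boldsymbol{\eta}}_1)-H'(\boldsymbol{\eta}_1)\}}_{T_1}\;+\;\underbrace{\mathbb{P}\{H'(\widehat{\boldsymbol{\eta}}_1)-H'(\boldsymbol{\eta}_1)\}}_{T_2}\;+\;\underbrace{(\mathbb{P}_n-\mathbb{P})H'(\boldsymbol{\eta}_1)}_{T_3}.
$$
For $T_1$, by (a1)--(a3) and the triangle inequality, $\|H'(\widehat{\boldsymbol{\eta}}_1)-H'(\boldsymbol{\eta}_1)\|_{L_2}=o_p(1)$; combined with cross-fitting, Lemma~2 of \citet{kennedy2020sharp} yields $T_1=o_p(n^{-1/2})$. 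For $T_3$, under (a1)--(a2) the summand has finite variance, so the CLT delivers a $O_p(n^{-1/2})$ centered sum whose limit variance is precisely the efficiency bound $\E[A'_1(O;\boldsymbol{\eta}_1)^2]$.

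The main work is in $T_2$. Taking iterated expectations given $X$, using $\E(AY\mid X,S=1)=q(X)\mu_{1,1}(X)$ and $\Pr(S=0\mid X)=\Pr(S=1\mid X)\tau(X)$, and collecting the terms in $\widehat\mu_{1,1}-\mu_{1,1}$ yields
$$
T_2\;=\;(\widehat\kappa/\kappa-1)\alpha'_1\;+\;\widehat\kappa\,\E\!\left[\Pr(S=0\mid X)\{\widehat\mu_{1,1}(X)-\mu_{1,1}(X)\}\Big\{1-\tfrac{\widehat\tau(X)}{\tau(X)}\tfrac{q(X)}{\widehat q(X)}\Big\}\right].
$$
Positivity (a1) bounds $1/\widehat\tau,\,1/\widehat q,\,1/\tau$ uniformly, and since $q$ is known (or estimable at a $\sqrt n$-rate), Cauchy--Schwarz on the bracketed integrand gives $\lesssim \|\widehat\mu_{1,1}-\mu_{1,1}\|\,\|\widehat\tau-\tau\|+o_p(n^{-1/2})=R'_{1,n}+o_p(n^{-1/2})$. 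The stray $(\widehat\kappa/\kappa-1)\alpha'_1$ cancels against the analogous term produced by replacing $\widehat\kappa$ with $\kappa$ in $T_3$, leaving an $O_p((\widehat\kappa-\kappa)^2)=O_p(n^{-1})$ remainder, exactly as in the proof of Theorem~\ref{thm:inference1}.

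Combining the three terms yields the expansion $\widehat\alpha'_1-\alpha'_1=\mathbb{P}_nA'_1(O;\boldsymbol{\eta}_1)+R'_{1,n}+o_p(n^{-1/2})$, which gives consistency at rate $O_p(R'_{1,n}+n^{-1/2})$, and, whenever $R'_{1,n}=o_p(n^{-1/2})$, asymptotic normality with the efficient variance $\E[A'_1(O;\boldsymbol{\eta}_1)^2]$ derived in Theorem~\ref{thm:IF1*}. The main obstacle is the algebraic step in $T_2$: verifying that the bias collapses into the clean single-product form $\|\widehat\mu_{1,1}-\mu_{1,1}\|\,\|\widehat\tau-\tau\|$. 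This is markedly easier here than in Theorem~\ref{thm:inference1} because only two nuisance components enter the augmentation (the ratio-of-outcomes nuisance is absent under (A4*)), but the cancellation against $\widehat\kappa$ and $\widehat q$ still needs to be tracked carefully.
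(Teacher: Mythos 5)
Your proposal is correct and follows essentially the same route the paper intends: the paper states that the proof of this theorem is immediate from the argument for Theorem \ref{thm:inference1}, and your write-up is exactly that specialization — the same three-term decomposition, the same use of cross-fitting with Lemma 2 of \citet{kennedy2020sharp} for the empirical-process term, the same $\widehat\kappa$ cancellation yielding an $O_p\{(\widehat\kappa-\kappa)^2\}$ remainder, and the correct collapse of the conditional bias to $\big\lVert \widehat\mu_{1,1}-\mu_{1,1}\big\rVert\big(\lVert\widehat\tau-\tau\rVert+\lVert\widehat q-q\rVert\big)$ with the $\widehat q$ contribution absorbed into $o_p(n^{-1/2})$. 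No gaps.
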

The proofs of the above theorems are trivial given the proofs of Theorems \ref{thm:identification1} to \ref{thm:inference1}.
\clearpage

\section{Proof of Theorem \ref{thm:identification2}}\label{appendix:identification2}

We will now show that under conditions (A1) through (A5), and (B1) through (B2) listed in the main text, the target parameters are identifiable. 
\begin{proof}
\begin{equation*}
    \begin{split}
        \alpha_2=\E(Y^1 | S = 0) &= \E \big\{ \E(Y^1 | X, S = 0) \big | S = 0 \big\} \\ 
         &= \E \left\{ \dfrac{\E(Y^1 | X, S = 0)}{\E(Y^0 | X, S = 0)} \E(Y^0 | X, S = 0) \Big | S = 0 \right\} \\ 
         &= \E \left\{ \dfrac{\E(Y^1 | X, S = 1)}{\E(Y^0 | X, S = 1)} \E(Y^0 | X, S = 0) \Big | S = 0 \right\} \\ 
          &= \E \left\{ \dfrac{\E(Y^1 | X, S = 1, A = 1)}{\E(Y^0 | X, S = 1, A = 0)} \E(Y^0 | X, S = 0) \Big | S = 0 \right\} \\ 
          &= \E \left\{ \dfrac{\E(Y^1 | X, S = 1, A = 1)}{\E(Y^0 | X, S = 1, A = 0)} \E(Y^0 | X, S = 0, A =0) \Big | S = 0 \right\} \\ 
          &= \E \left\{ \dfrac{\E(Y | X, S = 1, A = 1)}{\E(Y | X, S = 1, A = 0)} \E(Y | X, S = 0, A = 0) \Big | S = 0 \right\} ,
    \end{split}
\end{equation*}
where the first step follows from the law of total expectation; the second from multiplying and dividing by $\E(Y^0 | X, S = 0)$ , which is allowed by the positivity component of condition (A4); the third from the exchangeability component of condition (A4); the fourth from condition (A2); the fifth from condition (B1); and the last from condition (A1).
Next, we give other identification strategies in the following propositions.

Furthermore, using the law of total expectation and conditions (A1), (B1), and (B2) we can write, 
\begin{equation*}
    \begin{split}
        \beta_{2}=\E(Y^0 | S = 0) &= \E \big\{  \E(Y^0 | X, S = 0)  \big | S = 0 \big\} \\ 
        &= \E \big\{  \E(Y^0 | X, S = 0, A = 0)  \big | S = 0 \big\} \\ 
        &= \E \big\{  \E(Y | X, S = 0, A = 0)  \big | S = 0 \big\}.
    \end{split}
\end{equation*}

Taken together, the above results show that, under conditions (A1) through (A5), (B1) and (B2) we can identify the causal mean ratio in the target population by $\phi_{2}=\alpha_2/\beta_{2}$ and the causal mean difference in the target population by $\psi_{2}=\alpha_2-\beta_{2}$.
\end{proof}
\begin{prop}
The target parameter $\alpha_2$ can be also written as 
\begin{align*} 
   &\E \left\{ \dfrac{\E[Y | X, S = 1, A = 1]}{\E[Y | X, S = 1, A = 0]} (1-A)/\Pr[A=0|X, S=0] Y| S = 0 \right\}, \\
   &\dfrac{1}{\Pr[S=0]}\E \left\{S\dfrac{\Pr[S=0|X]}{\Pr[S=1|X]}\dfrac{A}{\Pr[A=a|X, S=1]}\dfrac{\E[Y | X, S = 0, A = 0]}{\E[Y | X, S = 1, A = 0]} Y  \right\},\\
   &   \dfrac{1}{\Pr[S=0]}\E \left[S\dfrac{\Pr[S=0|X]}{\Pr[S=1|X]}\dfrac{1-A}{1-\Pr[A=a|X, S=1]}\dfrac{\E[Y | X, S = 1, A = 1]}{\E[Y | X, S = 1, A = 0]}\dfrac{\E[Y | X, S = 0, A = 0]}{\E[Y | X, S = 1, A = 0]}Y \right].
\end{align*}
\end{prop}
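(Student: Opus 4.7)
The plan is to verify each of the three displayed alternative expressions equals the identification formula $\alpha_2 = \E[\mu_{1,1}(X)\mu_{0,0}(X)/\mu_{1,0}(X) \mid S=0]$ that was already established in Theorem~\ref{thm:identification2}. The common workhorse is the inverse probability weighting (IPW) identity $\E[I(A=a)Y / \Pr[A=a \mid X, S=s] \mid X, S=s] = \mu_{s,a}(X)$, used together with iterated expectations. Throughout, the positivity components of (A3), (A5), and (B2) guarantee that every IPW denominator is bounded away from zero, and condition (A1) permits swapping $Y$ with the appropriate $Y^a$.

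For the first expression, I would take the conditional expectation with respect to $(X, S=0)$ inside the outer $\E[\,\cdot\, \mid S=0]$. Since the ratio $\mu_{1,1}(X)/\mu_{1,0}(X)$ is $\sigma(X)$-measurable, it passes out, and the inner term $\E[(1-A) Y / \Pr[A=0 \mid X, S=0] \mid X, S=0]$ collapses to $\mu_{0,0}(X)$ by the IPW identity together with (A1) and (B1). What remains is exactly $\alpha_2$.

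For the second and third expressions, I would first rewrite the outer expectation using $\E[h \mid S=0] = \Pr[S=0]^{-1}\E[(1-S)h]$, and then apply iterated expectations conditional on $X$. The factor $S\,\Pr[S=0 \mid X]/\Pr[S=1 \mid X]$ interacts with $\E[S \cdot f(X,A,Y) \mid X] = \Pr[S=1 \mid X]\,\E[f \mid X, S=1]$ so that the $\Pr[S=1 \mid X]$ cancels and leaves a $\Pr[S=0 \mid X]$ weight on the remaining integrand evaluated under $S=1$. Inside that conditional expectation, the IPW factor $A/\Pr[A=1 \mid X, S=1]$ in expression~2 extracts $\mu_{1,1}(X)$, while $(1-A)/\{1 - \Pr[A=1 \mid X, S=1]\}$ in expression~3 extracts $\mu_{1,0}(X)$. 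After these substitutions the algebraic weights $\mu_{0,0}/\mu_{1,0}$ (resp.\ $\mu_{1,1}\mu_{0,0}/\mu_{1,0}^2$) simplify to the common integrand $\mu_{1,1}(X)\mu_{0,0}(X)/\mu_{1,0}(X)$. Recognizing $\E[\Pr[S=0\mid X]\, h(X)] = \Pr[S=0]\,\E[h(X) \mid S=0]$ then yields $\alpha_2$.

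The main obstacle is bookkeeping rather than mathematical substance: one must carefully track which conditioning set ($\{S=0\}$ or $\{S=1\}$) is active when each IPW identity is applied, and invoke (A1)--(A3), (A5), and (B1)--(B2) at the appropriate steps (in particular, (B1) is needed only for the first expression, since expressions 2 and 3 exploit unconfoundedness and exchangeability inside the trial). No new causal assumption beyond those already used for Theorem~\ref{thm:identification2} is required.
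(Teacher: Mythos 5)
Your proposal is correct and follows essentially the same route as the paper: the paper proves the analogous proposition in its Appendix B by exactly these manipulations (IPW identities to collapse $I(A=a)Y/\Pr[A=a\mid\cdot]$ to the corresponding conditional mean, the rewriting $\E[h\mid S=0]=\Pr[S=0]^{-1}\E[(1-S)h]$, and iterated expectations over $X$ to cancel $\Pr[S=1\mid X]$ against the $\tau(X)$ weight), and for this proposition simply refers back to that argument. Your remark that (B1) is only needed for the first expression, while the trial-side expressions rely on (A2) and (A1), is an accurate accounting of the assumptions.
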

The derivation of the weighting estimators follows the same logic as \ref{appendix:identification1}. 
\clearpage

\section{Proof of Lemma \ref{thm:IF2}}\label{appendix:IF2}

\begin{proof}
Recall that under the nonparametric model, $\mathcal M_{\text{\tiny np}}$, for the observable data $O$, the density of the law of the observable data can be written as 
\begin{equation*}
    p(r,x,a,y) = p(s) p(x|s) p(a|x, s) p(y|a, x, s).
\end{equation*}
Under this model, the tangent space is the Hilbert space of mean zero random variables with finite variance. It can be decomposed as $L_2^0 = \Lambda_{S} \oplus \Lambda_{X|S} \oplus   \Lambda_{A|S,X} \oplus  \Lambda_{Y|S, X, A} $.

\noindent
Recall that
\begin{equation*} 
  \alpha_2  \equiv \E \left\{ \dfrac{\E(Y | X, S = 1, A = 1)}{\E(Y | X, S = 1, A = 0)} \E(Y | X, S = 0, A=0) \Big | S = 0 \right\}=\E\Big\{\dfrac{\mu_{1, 1}(X)}{\mu_{1, 0}(X)}\mu_{0, 0}(X)\Big|S=0\Big\}.
\end{equation*}
We will use the path differentiability of $\alpha_2 $ to obtain the efficient influence function under the non-parametric model for the observed data \citep{bickel1993efficient}. To do so, we examine the derivative of $\alpha_{2, p_t}(X)$ with respect to $t$; where the subscript $p_t$ denotes the dependence of $\alpha_2 $ on a one-dimensional parametric sub-model $p_t$, indexed by $t \in [0,1)$, with $t = 0$ denoting the ``true'' data law. 

Utilizing the law of total expectation, the tricks of $\E\big\{\E(u|w)v\big\}=\E\big\{\E(v|w)u\big\}$, subtracting the mean zero terms, and $\E(A|B)=\E\Big\{\dfrac{I(B=b)}{\Pr(B=b)}A\Big\}$ when $B$ is discrete repeatedly, we have
\begin{align*}
     &\dfrac{\partial \alpha_{2, p_t}}{\partial t}\Big|_{t=0}\\
    =&\dfrac{\partial}{\partial t}\E_{p_t}\Bigg\{\dfrac{\mu_{1, 1, p_t}(X)}{\mu_{1, 0, p_t}(X)}\mu_{0, 0, p_t}(X)\Big|S=0\Bigg\}\Big|_{t=0}\\
    =&\underbrace{\dfrac{\partial}{\partial t}\E_{p_{t}}\Bigg\{\dfrac{\mu_{1, 1, p_0}(X)}{\mu_{1, 0, p_0}(X)}\mu_{0 , 0, p_0}(X)\Big|S=0\Bigg\}\Big|_{t=0}}_{(1)}+
    \underbrace{\E_{p_{0}}\Bigg\{\dfrac{\dfrac{\partial}{\partial t}\mu_{1, 1, p_t}(X)\Big|_{t=0}}{\mu_{1, 0, p_0}(X)}\mu_{0 , 0, p_0}(X)\Big|S=0\Bigg\}}_{(2)}+\\
    &\underbrace{\E_{p_{0}}\Bigg\{\dfrac{\mu_{1, 1, p_0}(X)}{\dfrac{\partial}{\partial t}\mu_{1, 0, p_t}(X)\Big|_{t=0}}\mu_{0 , 0, p_0}(X)\Big|S=0\Big\}}_{(3)}+
    \underbrace{\E_{p_{0}}\Big\{\dfrac{\mu_{1, 1, p_0}(X)}{\mu_{1, 0, p_0}(X)}\dfrac{\partial}{\partial t}\mu_{0, 0, p_t}(X)\Big|_{t=0}\Big|S=0\Bigg\}}_{(4)}.
\end{align*}
Next, we calculate each term by turn. Denote the score function by $g(\cdot)$.
\begin{align*}
    (1)=&\E_{p_{0}}\Big\{\dfrac{1-S}{\Pr(S=0)}\dfrac{\mu_{1, 1, p_0}(X)}{\mu_{1, 0, p_0}(X)}\mu_{0 , 0, p_0}(X) g_{X|S}(O)\Big\}\\
    =&\E_{p_{0}}\Big\{\dfrac{1-S}{\Pr(S=0)}\dfrac{\mu_{1, 1, p_0}(X)}{\mu_{1, 0, p_0}(X)}\mu_{0 , 0, p_0}(X) g(O)\Big\}\\
    =&\E_{p_{0}}\Bigg[\dfrac{1-S}{\Pr(S=0)}\Big\{\dfrac{\mu_{1, 1, p_0}(X)}{\mu_{1, 0, p_0}(X)}\mu_{0 , 0, p_0}(X)-\alpha_2 \Big\}g(O)\Bigg].
\end{align*}
The second equation follows by the fact that $\dfrac{1-S}{\Pr(S=0)}\dfrac{\mu_{1, 1, p_0}(X)}{\mu_{1, 0, p_0}(X)}\mu_{0 , 0, p_0}(X)\{g_{S}(O)+g_{A|S, X}(O)+g_{Y|S, X, A}(O)\}$ is mean zero, and the third one follows because
$\alpha_2 $ is perpendicular to $g(O)$ given $S$.
Next, take the derivation of (2) as an example below
\begin{align*}
    (2)=&\E_{p_{0}}\Bigg\{\dfrac{\mu_{0 , 0, p_0}(X)}{\mu_{1, 0, p_0}(X)}\dfrac{\partial}{\partial t}\mu_{1, 1, p_t}(X)\Big|_{t=0}\Big|S=0\Bigg\}\\
    =&\E_{p_{0}}\Bigg(\dfrac{\mu_{0 , 0, p_0}(X)}{\mu_{1, 0, p_0}(X)}\E_{p_{0}}\Big[\{Y-\mu_{1, 1, p_0}(X)\}g_{Y|X, A=1, S=1}(O)\Big|X, A=1, S=1\Big]\Big|S=0\Bigg)\\
    =&\E_{p_{0}}\Bigg(\dfrac{1-S}{\Pr(S=0)}\dfrac{\mu_{0 , 0, p_0}(X)}{\mu_{1, 0, p_0}(X)}\E_{p_{0}}\Big[\dfrac{AS}{q(X)\Pr(S=1|X)}\{Y-\mu_{1, 1, p_0}(X)\}g_{Y|X, A, S}(O)\Big|X\Big]\Bigg)\\
    =&\E_{p_{0}}\Bigg[\dfrac{\Pr(S=0|X)}{\Pr(S=0)}\dfrac{\mu_{0 , 0, p_0}(X)}{\mu_{1, 0, p_0}(X)}\dfrac{AS}{q(X)\Pr(S=1|X)}\{Y-\mu_{1, 1, p_0}(X)\}g_{Y|X, A, S}(O)\Bigg]\\
    =&\E_{p_{0}}\Bigg[\dfrac{\Pr(S=0|X)}{\Pr(S=0)}\dfrac{\mu_{0 , 0, p_0}(X)}{\mu_{1, 0, p_0}(X)}\dfrac{AS}{q(X)\Pr(S=1|X)}\{Y-\mu_{1, 1, p_0}(X)\}g(O)\Bigg]\\
    =&\E_{p_{0}}\Bigg[\dfrac{1}{\Pr(S=0)}S\tau(X)\dfrac{\mu_{0, p_0}(X)}{\mu_{1, 0, p_0}(X)}\dfrac{A}{q(X)}\{Y-\mu_{1, 1, p_0}(X)\}g(O)\Bigg].
\end{align*}
Similar to (2),
$
    (4)=\E_{p_{0}}\Bigg[\dfrac{1-S}{\Pr(S=0)}\dfrac{\mu_{1, 1}(X)}{\mu_{1, 0}(X)}\dfrac{1-A}{\pi(X)}\big\{Y-\mu_{0, 0}(X)\big\}g(O)\Bigg].
$
Applying the chain rule of derivative, and similar to above, (3) is,
\begin{align*}
    -\E_{p_{0}}\Bigg[\dfrac{1}{\Pr(S=0)}S\tau(X)\dfrac{\mu_{1, 1, p_0}(X)\mu_{0 , 0, p_0}(X)}{\{\mu_{1, 0, p_0}(X)\}^{2}}\dfrac{(1-A)}{\{1-q(X)\}}\{Y-\mu_{1, 0, p_0}(X)\}g(O)\Bigg].
\end{align*}
After some algebra re-arrangement, we conclude that the influence function of $\alpha_2 $ is $\alpha_2(O, \boldsymbol{\eta})$.

It is trivial to derive the influence function of $\beta_{2}$ given the above proof. By the chain rule, it is easy to derive the influence function of $\phi_{2}$ and $\psi_{2}$.

We first show the semiparametric efficient influence function is also $\alpha_2(O, \boldsymbol{\eta})$.
Under the non-parametric model, the Hilbert space can be decomposed as $L_2^0 = \Lambda_{S} \oplus \Lambda_{X|S} \oplus  \Lambda_{A|X, S} \oplus  \Lambda_{Y|A, X, S}$. Now after incorporating the knowledge of the probability $\Pr(A=0|X, S=0)$ will change the decomposition to $L_2^0 = \Lambda_{S} \oplus \Lambda_{X|S} \oplus\Lambda_{Y|A, X, S}$.

According to the total law of expectation, the first term $\kappa (1-S)\dfrac{\mu_{1, 1}(X)}{\mu_{1, 0}(X)}\{\mu_{0, 0}(X)-\alpha_2\}$ is a function of $S$ and $X$ and is mean zero. Therefore, it belongs to $\Lambda_{S} \oplus\Lambda_{X|S}$. All the other terms are functions of $(S, X, A, Y)$ and are mean zero conditional on $(X, A, S)$. Thus, they belong to $\Lambda_{Y|X, A, S}$. We then conclude that $\alpha_2(O, \boldsymbol{\eta})$ belongs to $\Lambda_{S} \oplus\Lambda_{X|S}\oplus\Lambda_{Y|X, A, S}$, which indicates that the unique influence function $\alpha_2(O, \boldsymbol{\eta})$ under the non-parametric model is also the efficient influence function under the semiparametric model which incorporate the restrictions.
\end{proof}
\clearpage

\section{Proof of Theorem \ref{thm:inference2}}\label{appendix:inference2}

\begin{proof}
For general functions $\widetilde\kappa, \widetilde\mu_{0, 0}(X)$ and $\widetilde\pi(X)$,  define 
\begin{align*}
G(\boldsymbol{\widetilde\eta_2}') =  
    \widetilde\kappa(1-S)  \Bigg[
    \widetilde\mu_{0, 0}(X)+\dfrac{1-A}{\widetilde\pi(X)}\big\{Y-\widetilde\mu_{0, 0}(X)\big\} 
    \Bigg].
\end{align*}
We observe that we can decompose $\widehat \beta_{2}   - \beta_{2}  $ into three parts as below.
\begin{align*}
     &\widehat \beta_{2}   - \beta_{2}  =\mathbb{P}_n\big\{G(\boldsymbol{\widehat \eta_2}')\big\}-\mathbb{P}\big\{G(\boldsymbol{\eta_2}')\big\}\\
    =&\underbrace{
    (\mathbb{P}_n-\mathbb{P})\big\{G(\boldsymbol{\widehat \eta_2}')-G(\boldsymbol{\eta_2}')\big\} }_{1}+
    \underbrace{
    \mathbb{P}\big\{G(\boldsymbol{\widehat \eta_2}')-G(\boldsymbol{\eta_2}')\big\} }_{2}+
    \underbrace{
    (\mathbb{P}_n-\mathbb{P})G(\boldsymbol{\eta_2}')}_{3}.
\end{align*}
Working on term 1, note that
\begin{align*}
    &||G(\boldsymbol{\widehat \eta_2}')-G(\boldsymbol{\eta_2}')||\\
    = 
    & \Bigg|\Bigg|\widehat\kappa
    (1-S)\Bigg[\widehat\mu_{0, 0}(X)+\dfrac{1-A}{\widehat\pi(X)}\big\{Y-\widehat\mu_{0, 0}(X)\big\}\Bigg] -\\
    &\kappa
    (1-S)\Bigg[\mu_{0, 0}(X)+\dfrac{1-A}{\pi(X)}\big\{Y-\mu_{0, 0}(X)\big\}
    \Bigg]
    \Bigg|\Bigg|\\
    \leqslant 
    & \Bigg|\Bigg|\widehat\kappa
    (1-S)\Bigg[\{\widehat\mu_{0, 0}(X)-\mu_{0, 0}(X)\}+\dfrac{1-A}{\widehat\pi(X)}\big\{\mu_{0, 0}(X)-\widehat\mu_{0, 0}(X)\big\}\Bigg]\Bigg|\Bigg| +\\
    &\Bigg|\Bigg|(\widehat\kappa-\kappa)
    (1-S)\Bigg[\mu_{0, 0}(X)+\dfrac{1-A}{\pi(X)}\big\{Y-\mu_{0, 0}(X)\big\}
    \Bigg]
    \Bigg|\Bigg|+\\
   & \Bigg|\Bigg|\widehat\kappa
    (1-S)(1-A)\big\{Y-\mu_{0, 0}(X)\big\}\Big\{\dfrac{1}{\widehat\pi(X)}-\dfrac{1}{\pi(X)}\Big\}
    \Bigg|\Bigg|\\
    \lesssim&
    ||\widehat\mu_{0, 0}(X)-\mu_{0, 0}(X)||+||\widehat \pi(X)-\pi(X)||.
\end{align*}
By conditions \emph{(b3)}, we have $||G(\boldsymbol{\widehat \eta_2}')-G(\boldsymbol{\eta_2}')|| = o_p(1)$, so that by Lemma 2 of \citet{kennedy2020sharp}, $(\mathbb{P}_n-\mathbb{P})\big\{G(\boldsymbol{\widehat \eta_2}')-G(\boldsymbol{\eta_2}')\big\}=o_p(n^{-1/2})$.\\
Working on term 2, we have
\begin{align*}
     &\E\{G(\boldsymbol{\widehat \eta_2}')-G(\boldsymbol{\eta_2}')\}\\
    =&(\widehat\kappa-\kappa)\beta_{2}  + \widehat\kappa\Bigg[
    \widehat\mu_{0, 0}(X)-\mu_{0, 0}(X)+\dfrac{\pi(X)}{\widehat\pi(X)}\big\{\mu_{0, 0}(X)-\widehat\mu_{0, 0}(X)\big\} 
    \Bigg]\\
    =&(\widehat\kappa-\kappa)\beta_{2}  + \widehat\kappa\Big\{1-\dfrac{\pi(X)}{\widehat\pi(X)}\Big\}\big\{\mu_{0, 0}(X)-\widehat\mu_{0, 0}(X)\big\} \\
    \lesssim & (\widehat\kappa-\kappa)\beta_{2}  + ||\widehat\mu_{0, 0}(X)-\mu_{0, 0}(X)||\cdot||\widehat \pi(X)-\pi(X)||.
\end{align*}
Combine the first term of 2 and 3, we have
\begin{align*}
     &\mathbb{P}\big\{G(\boldsymbol{\widehat \eta_2}')-G(\boldsymbol{\eta_2}')\big\}+
    (\mathbb{P}_n-\mathbb{P})G(\boldsymbol{\eta_2}')\\
    =&\mathbb{P}_n\{B(O, \boldsymbol{\eta_2}')\}+\Big\{\dfrac{\widehat\kappa}{\kappa} -1\Big\}\beta_{2} +
    \Big\{\dfrac{\kappa}{\widehat\kappa} -1\Big\}\beta_{2}.  
\end{align*}
Factoring and simplifying the last two terms yields $\beta_{2}  \left\{\widehat\kappa- \kappa\right\}\left\{\frac{1}{\kappa} - \frac{1}{\widehat\kappa}\right\}$, which is
$o_{p}(n^{-1/2})$ by the central limit theorem. 
Therefore, combining terms 1, 2 and 3, we conclude
\begin{align*}
     \widehat \beta_{2}  -\beta_{2}  
    =O_{p}(R_{2, n})+\mathbb{P}_n\{B(O, \boldsymbol{\eta_2}')\}+o_p(n^{-1/2}),
\end{align*}
In particular, if $R_{2, n}= o_p(n^{-1/2})$, then 
\begin{align*}
    \sqrt{n}\big(\widehat \beta_{2}  -\beta_{2}  \big)\rightsquigarrow\mathcal{N}\big[0, \E\{B(O, \boldsymbol{\eta_2}')^{2}\}\big].
\end{align*}
That is, $\widehat \beta_{2}  $ is non-parametric efficient.

For general functions $\boldsymbol{\widetilde\eta_2}$ define 
\begin{align*}
H(\boldsymbol{\widetilde\eta_2}) =  &\widetilde\kappa\Bigg(
    (1-S)\Bigg[
    \dfrac{\widetilde\mu_{1, 1}(X)}{\widetilde\mu_{1, 0}(X)}\widetilde\mu_{0, 0}(X)+\dfrac{1-A}{\widetilde\pi(X)}\dfrac{\widetilde\mu_{1, 1}(X)}{\widetilde\mu_{1, 0}(X)}\big\{Y-\widetilde\mu_{0, 0}(X)\big\}
    \Bigg]\\
    &+
    S\widetilde\tau(X)\dfrac{\widetilde\mu_{0, 0}(X)}{\widetilde\mu_{1, 0}(X)}\Bigg[
    \dfrac{A}{\widetilde q(X)}\big\{Y-\widetilde\mu_{1, 1}(X)\big\}-\dfrac{\widetilde\mu_{1, 1}(X)}{\widetilde\mu_{1, 0}(X)}\dfrac{1-A}{1-\widetilde q(X)}\big\{Y-\widetilde\mu_{1, 0}(X)\big\}
    \Bigg]
    \Bigg).
\end{align*}

We observe that we can decompose $\widehat \alpha_{2}  - \alpha_{2} $ into three parts as below.
\begin{align*}
     &\widehat \alpha_{2}  - \alpha_{2} =\mathbb{P}_n\big\{H(\boldsymbol{\widehat \eta_2})\big\}-\mathbb{P}\big\{H(\boldsymbol{\eta_2})\big\}\\
    =&\underbrace{
    (\mathbb{P}_n-\mathbb{P})\big\{H(\boldsymbol{\widehat \eta_2})-H(\boldsymbol{\eta_2})\big\} }_{1}+
    \underbrace{
    \mathbb{P}\big\{H(\boldsymbol{\widehat \eta_2})-H(\boldsymbol{\eta_2})\big\} }_{2}+
    \underbrace{
    (\mathbb{P}_n-\mathbb{P})H(\boldsymbol{\eta_2})}_{3}.
\end{align*}
Working on term 1, note that
\begin{align*}
    &||H(\boldsymbol{\widehat \eta_2})-H(\boldsymbol{\eta_2})||\\
    = 
    & \Bigg|\Bigg|\widehat\kappa\Bigg(
    (1-S)\Bigg[
    \dfrac{\widehat\mu_{1, 1}(X)}{\widehat\mu_{1, 0}(X)}\widehat\mu_{0, 0}(X)+\dfrac{1-A}{\widehat\pi(X)}\dfrac{\widehat\mu_{1, 1}(X)}{\widehat\mu_{1, 0}(X)}\big\{Y-\widehat\mu_{0, 0}(X)\big\}
    \Bigg]\\
    &+
    S\widehat\tau(X)\dfrac{\widehat\mu_{0, 0}(X)}{\widehat\mu_{1, 0}(X)}\Bigg[
    \dfrac{A}{\widehat q(X)}\big\{Y-\widehat\mu_{1, 1}(X)\big\}-\dfrac{1-A}{1-\widehat q(X)}\dfrac{\widehat\mu_{1, 1}(X)}{\widehat\mu_{1, 0}(X)}\big\{Y-\widehat\mu_{1, 0}(X)\big\}
    \Bigg]
    \Bigg)-\\
    &\kappa\Bigg(
    (1-S)\Bigg[
    \dfrac{\mu_{1, 1}(X)}{\mu_{1, 0}(X)}\mu_{0, 0}(X)+\dfrac{1-A}{\pi(X)}\dfrac{\mu_{1, 1}(X)}{\mu_{1, 0}(X)}\big\{Y-\mu_{0, 0}(X)\big\}
    \Bigg]\\
    &+
    S\tau(X)\dfrac{\mu_{0, 0}(X)}{\mu_{1, 0}(X)}\Bigg[
    \dfrac{A}{q(X)}\big\{Y-\mu_{1, 1}(X)\big\}-\dfrac{1-A}{1-q(X)}\dfrac{\mu_{0, 0}(X)}{\mu_{1, 0}(X)}\big\{Y-\mu_{1, 0}(X)\big\}
    \Bigg]
    \Bigg)\Bigg|\Bigg|\\
    \leqslant &
    \Bigg|\Bigg|
    \widehat\kappa\Bigg((1-S)
    \Bigg[\Big\{\dfrac{\widehat\mu_{1, 1}(X)}{\widehat\mu_{1, 0}(X)}\widehat\mu_{0, 0}(X)-\dfrac{\mu_{1, 1}(X)}{\mu_{1, 0}(X)}\mu_{0, 0}(X)\Big\}+\\
    &\dfrac{1-A}{1-\widehat \pi(X)}\dfrac{\widehat\mu_{1, 1}(X)}{\widehat\mu_{1, 0}(X)}\{\mu_{0, 0}(X)-\widehat\mu_{0, 0}(X)\}+\\
    &S\widehat\tau(X)\dfrac{\widehat\mu_{0, 0}(X)}{\widehat\mu_{1, 0}(X)}\Big[
    \dfrac{A}{\widehat q(X)}\{\widehat\mu_{1, 1}(X)-\mu_{1, 1}(X)\}-\dfrac{1-A}{1-\widehat q(X)}\dfrac{\widehat\mu_{1, 1}(X)}{\widehat\mu_{0, 0}(X)}\{\widehat\mu_{1, 0}(X)-\mu_{1, 0}(X)\}
    \Big]
    \Bigg]    
    \Bigg)
    \Bigg|\Bigg|\\
    &+\Bigg|\Bigg|
    (\widehat\kappa-\kappa)\Bigg(
    (1-S)\Bigg[
    \dfrac{\mu_{1, 1}(X)}{\mu_{1, 0}(X)}\mu_{0, 0}(X)+\dfrac{1-A}{\pi(X)}\dfrac{\mu_{1, 1}(X)}{\mu_{1, 0}(X)}\big\{Y-\mu_{0, 0}(X)\big\}
    \Bigg]+\\
    &
    S\tau(X)\dfrac{\mu_{0, 0}(X)}{\mu_{1, 0}(X)}\Bigg[
    \dfrac{A}{q(X)}\big\{Y-\mu_{1, 1}(X)\big\}-\dfrac{1-A}{1-q(X)}\dfrac{\mu_{0, 0}(X)}{\mu_{1, 0}(X)}\big\{Y-\mu_{1, 0}(X)\big\}
    \Bigg]
    \Bigg)
    \Bigg|\Bigg|+\\
    &\Bigg|\Bigg|
    \widehat \kappa \Bigg((1-A)(1-S)
    \Big[\dfrac{\widehat\mu_{1, 1}(X)}{\widehat\mu_{1, 0}(X)\{1-\widehat\pi(X)\}}-\dfrac{\mu_{1, 1}(X)}{\mu_{1, 0}(X)\{1-\pi(X)\}}
    \Big]\{Y-\mu_{0, 0}(X)\}+\\
    &AS\Big\{\dfrac{\widehat\mu_{0, 0}(X)\widehat\tau(X)}{\widehat\mu_{1, 0}(X)\widehat q(X)}-\dfrac{\mu_{0, 0}(X)\tau(X)}{\mu_{1, 0}(X) q(X)}\Big\}\{Y-\mu_{1, 1}(X)\}
    +\\
    &(1-A)S\Big[\dfrac{\widehat\mu_{1, 1}(X)\widehat\mu_{0, 0}(X)\widehat\tau(X)}{\widehat\mu_{1, 0}(X)^{2}\{1-\widehat q(X)\}}-\dfrac{\mu_{1, 1}(X)\mu_{0, 0}(X)\tau(X)}{\mu_{1, 0}(X)^{2}\{1- q(X)\}}\Big]\{Y-\mu_{1, 0}(X)\}
    \Bigg)
    \Bigg|\Bigg|\\
    \lesssim
    &||\widehat\mu_{0, 0}(X)-\mu_{0, 0}(X)||+||\widehat\mu_{1, 0}(X)-\mu_{1, 0}(X)||+||\widehat\mu_{1, 1}(X)-\mu_{1, 1}(X)||+\\
    &||\widehat \tau(X)-\tau(X)||+||\widehat q(X)-q(X)||+||\widehat \pi(X)-\pi(X)||+||\widehat\kappa-\kappa||.
\end{align*}
Because $\kappa^{-1}$ is an empirical average of $\Pr(S=0)$, it is trivial to see that
i) $I(\widehat \kappa^{-1}=0)=o_p(n^{-1/2})$, and ii) $\widehat \kappa\xrightarrow{P} \kappa$.
Note that we also assume $||\widehat q(X)-q(X)||=o_p(n^{-1/2})$. Therefore, 
by conditions \emph{(b3)}, we have $||H(\boldsymbol{\widehat \eta_2})-H(\boldsymbol{\eta_2})|| = o_p(1)$, so that by Lemma 2 of \citet{kennedy2020sharp}, $(\mathbb{P}_n-\mathbb{P})\big\{H(\boldsymbol{\widehat \eta_2})-H(\boldsymbol{\eta_2})\big\}=o_p(n^{-1/2})$.\\
Working on term 2, we have
\begin{align*}
     &\E\{H(\boldsymbol{\widehat \eta_2})-H(\boldsymbol{\eta_2})\}\\
    =&(\widehat\kappa-\kappa)\alpha_{2}/\kappa + \widehat\kappa\\
    &\E\Bigg(\Pr(S=0|X)
    \Bigg[
    \dfrac{\widehat\mu_{1, 1}(X)}{\widehat\mu_{1, 0}(X)}\widehat\mu_{0, 0}(X)-\dfrac{\mu_{1, 1}(X)}{\mu_{1, 0}(X)}\mu_{0, 0}(X)+\dfrac{\pi(X)}{\widehat\pi(X)}\dfrac{\widehat\mu_{1, 1}(X)}{\widehat\mu_{1, 0}(X)}\\
    &\{\mu_{0, 0}(X)-\widehat\mu_{0, 0}(X)\}\Bigg]+
    \Pr(S=1|X)\widehat\tau(X)\dfrac{\widehat\mu_{0, 0}(X)}{\widehat\mu_{1, 0}(X)}
    \Bigg[
    \dfrac{q(X)}{\widehat q(X)}\{\mu_{1, 1}(X)-\widehat\mu_{1, 1}(X)\}+\\
    &\dfrac{1-q(X)}{1-\widehat q(X)}\dfrac{\widehat\mu_{1, 1}(X)}{\widehat\mu_{1, 0}(X)}\{\mu_{1, 0}(X)-\widehat\mu_{1, 0}(X)\}
    \Bigg]
    \Bigg)\\
    =&(\widehat\kappa-\kappa)\alpha_{2}/\kappa + \widehat\kappa\\
    &\E\Bigg(\Pr(S=0|X)
    \Bigg[
    \dfrac{\widehat\mu_{1, 1}(X)}{\widehat\mu_{1, 0}(X)}\{\widehat\mu_{0, 0}(X)
    -\mu_{0, 0}(X)\}+
    \Bigg\{\dfrac{\widehat\mu_{1, 1}(X)}{\widehat\mu_{1, 0}(X)}-\dfrac{\mu_{1, 1}(X)}{\mu_{1, 0}(X)}\Bigg\}\mu_{0, 0}(X)+\\
    &\dfrac{\pi(X)}{1-\widehat \pi(X)}\dfrac{\widehat\mu_{1, 1}(X)}{\widehat\mu_{1, 0}(X)}\{\mu_{0, 0}(X)-\widehat\mu_{0, 0}(X)\}+
    \dfrac{\widehat\tau(X) q(X)\widehat\mu_{0, 0}(X)}{\tau(X) \widehat q(X)\widehat\mu_{1, 0}(X)}\{\mu_{1, 1}(X)-\widehat\mu_{1, 1}(X)\}+\\
    &\dfrac{1-q(X)}{1-\widehat q(X)}
    \dfrac{\widehat\tau(X) \widehat\mu_{0, 0}(X)\widehat\mu_{1, 1}(X)}{\tau(X) \widehat\mu_{1, 0}(X)^{2}}\{\widehat\mu_{1, 0}(X)-\mu_{1, 0}(X)\}
    \Bigg]
    \Bigg)\\
    =&(\widehat\kappa-\kappa)\alpha_{2}/\kappa + \widehat\kappa\\
    &\E\Bigg(\Pr(S=0|X)
    \Bigg[
    \dfrac{\widehat\mu_{1, 1}(X)}{\widehat\mu_{1, 0}(X)}\dfrac{\pi(X)-\widehat\pi(X)}{\pi(X)}\{\widehat\mu_{0, 0}(X)
    -\mu_{0, 0}(X)\}+\dfrac{\mu_{0, 0}(X)}{\mu_{1, 0}(X)}\\
    \end{align*}
\begin{align*}
    &\{\widehat\mu_{1, 1}(X)-\mu_{1, 1}(X)\}+
    \dfrac{\mu_{0, 0}(X)}{\mu_{1, 0}(X)}\dfrac{\widehat\mu_{1, 1}(X)}{\widehat\mu_{1, 0}(X)}\{\mu_{1, 0}(X)-\widehat\mu_{1, 0}(X)\}+
    \dfrac{\widehat\tau(X) q(X)\widehat\mu_{0, 0}(X)}{\tau(X) \widehat q(X)\widehat\mu_{1, 0}(X)}\\
    &\{\mu_{1, 1}(X)-\widehat\mu_{1, 1}(X)\}+
    \dfrac{1-q(X)}{1-\widehat q(X)}
    \dfrac{\widehat\tau(X) \widehat\mu_{0, 0}(X)\widehat\mu_{1, 1}(X)}{\tau(X) \widehat\mu_{1, 0}(X)^{2}}\{\widehat\mu_{1, 0}(X)-\mu_{1, 0}(X)\}
    \Bigg]
    \Bigg)\\
    =&(\widehat\kappa-\kappa)\alpha_{2}/\kappa + \widehat\kappa\\
    &\E\Bigg(\Pr(S=0|X)
    \Bigg[
    \dfrac{\widehat\mu_{1, 1}(X)}{\widehat\mu_{1, 0}(X)}\dfrac{\pi(X)-\widehat\pi(X)}{\pi(X)}\{\widehat\mu_{0, 0}(X)
    -\mu_{0, 0}(X)\}+\\
    &\Bigg\{
    \dfrac{\mu_{0, 0}(X)}{\mu_{1, 0}(X)}-\dfrac{\widehat\tau(X) q(X)\widehat\mu_{0, 0}(X)}{\tau(X) \widehat q(X)\widehat\mu_{1, 0}(X)}
    \Bigg\}
    \{\widehat\mu_{1, 1}(X)-\mu_{1, 1}(X)\}+\\
    &\dfrac{\widehat\mu_{1, 1}(X)}{\widehat\mu_{1, 0}(X)}
    \Bigg\{
    \dfrac{1-q(X)}{1-\widehat q(X)}\dfrac{\widehat\tau(X) \widehat\mu_{0, 0}(X)}{\tau(X) \widehat\mu_{1, 0}(X)}-\dfrac{\mu_{0, 0}(X)}{\mu_{1, 0}(X)}
    \Bigg\}
    \{\widehat\mu_{1, 0}(X)-\mu_{1, 0}(X)\}
    \Bigg]
    \Bigg)\\
    \lesssim
    &\left\lVert \widehat\mu_{0, 0}(X)-\mu_{0, 0}(X)\right\rVert\cdot
    \left\lVert \widehat\pi(X)-\pi(X)\right\rVert+
    \Big\{
    \left\lVert \widehat\mu_{1, 1}(X)-\mu_{1, 1}(X)\right\rVert+\left\lVert\widehat\mu_{1, 0}(X)-\mu_{1, 0}(X)\right\rVert
    \Big\}\\
    &+\Bigg\{
    \left\lVert \dfrac{\widehat\mu_{0, 0}(X)}{\widehat\mu_{1, 0}(X)}-\dfrac{\mu_{0, 0}(X)}{\mu_{1, 0}(X)}\right\rVert+\left\lVert\widehat q(X)-q(X)\right\rVert+\left\lVert\widehat\tau(X)-\tau(X)\right\rVert
    \Bigg\}.
\end{align*}
Alternatively, we have
\begin{align*}
     &\E\{H(\boldsymbol{\widehat \eta_2})-H(\boldsymbol{\eta_2})\}\\
    =&(\widehat\kappa-\kappa)\alpha_{2}/\kappa + \widehat\kappa\\
    &\E\Bigg(\Pr(S=0|X)
    \Bigg[
    \dfrac{\widehat\mu_{1, 1}(X)}{\widehat\mu_{1, 0}(X)}\dfrac{\pi(X)-\widehat\pi(X)}{\pi(X)}\{\widehat\mu_{0, 0}(X)
    -\mu_{0, 0}(X)\}+\Bigg\{\dfrac{\widehat\mu_{1, 1}(X)}{\widehat\mu_{1, 0}(X)}-\dfrac{\mu_{1, 1}(X)}{\mu_{1, 0}(X)}\Bigg\}\\
    &\mu_{0, 0}(X)+\dfrac{\widehat\mu_{1, 1}(X)}{\widehat\mu_{1, 0}(X)}
    \dfrac{\widehat\tau(X) q(X)}{\tau(X) \widehat q(X)}\widehat\mu_{0, 0}(X)-\dfrac{\widehat\mu_{1, 1}(X)}{\widehat\mu_{1, 0}(X)}
    \dfrac{\widehat\tau(X) q(X)}{\tau(X) \widehat q(X)}\widehat\mu_{0, 0}(X)+\\
    &\dfrac{1-q(X)}{1-\widehat q(X)}
    \dfrac{\widehat\tau(X) \widehat\mu_{0, 0}(X)\widehat\mu_{1, 1}(X)}{\tau(X) \widehat\mu_{1, 0}(X)^{2}}\{\widehat\mu_{1, 0}(X)-\mu_{1, 0}(X)\}
    \Bigg]
    \Bigg)
    \end{align*}
\begin{align*}
    =&(\widehat\kappa-\kappa)\alpha_{2}/\kappa + \widehat\kappa\\
    &\E\Bigg(\Pr(S=0|X)
    \Bigg[
    \dfrac{\widehat\mu_{1, 1}(X)}{\widehat\mu_{1, 0}(X)}\dfrac{\pi(X)-\widehat\pi(X)}{\pi(X)}\{\widehat\mu_{0, 0}(X)
    -\mu_{0, 0}(X)\}+\Bigg\{\dfrac{\widehat\mu_{1, 1}(X)}{\widehat\mu_{1, 0}(X)}-\dfrac{\mu_{1, 1}(X)}{\mu_{1, 0}(X)}\Bigg\}\\
    &\mu_{0, 0}(X)+\Bigg\{\dfrac{\mu_{1, 1}(X)}{\mu_{1, 0}(X)}-\dfrac{\widehat\mu_{1, 1}(X)}{\widehat\mu_{1, 0}(X)}\Bigg\}
    \dfrac{\widehat\tau(X) q(X)}{\tau(X) \widehat q(X)}\widehat\mu_{0, 0}(X)+\dfrac{\mu_{1, 1}(X)}{\mu_{1, 0}(X)}\dfrac{\widehat\tau(X) q(X)}{\tau(X) \widehat q(X)}\widehat\mu_{0, 0}(X)\\
    &-\dfrac{\mu_{1, 1}(X)}{\mu_{1, 0}(X)}\dfrac{\widehat\tau(X) q(X)}{\tau(X) \widehat q(X)}\widehat\mu_{0, 0}(X)+\Bigg\{\dfrac{\mu_{1, 1}(X)}{\mu_{1, 0}(X)}-\dfrac{\widehat\mu_{1, 1}(X)}{\widehat\mu_{1, 0}(X)}\Bigg\}\dfrac{1-q(X)}{1-\widehat q(X)}
    \dfrac{\widehat\tau(X) \widehat\mu_{0, 0}(X)}{\tau(X) \widehat\mu_{1, 0}(X)}\\
    &\{\widehat\mu_{1, 0}(X)-\mu_{1, 0}(X)\}+\dfrac{\mu_{1, 1}(X)}{\mu_{1, 0}(X)}\dfrac{1-q(X)}{1-\widehat q(X)}
    \dfrac{\widehat\tau(X) \widehat\mu_{0, 0}(X)}{\tau(X) \widehat\mu_{1, 0}(X)}\{\widehat\mu_{1, 0}(X)-\mu_{1, 0}(X)\}
    \Bigg]
    \Bigg)\\
    =&(\widehat\kappa-\kappa)\alpha_{2}/\kappa + \widehat\kappa\\
    &\E\Bigg(\Pr(S=0|X)
    \Bigg[
    \dfrac{\widehat\mu_{1, 1}(X)}{\widehat\mu_{1, 0}(X)}\dfrac{\pi(X)-\widehat\pi(X)}{\pi(X)}\{\widehat\mu_{0, 0}(X)
    -\mu_{0, 0}(X)\}+\\
    &\Bigg\{\dfrac{\widehat\mu_{1, 1}(X)}{\widehat\mu_{1, 0}(X)}-\dfrac{\mu_{1, 1}(X)}{\mu_{1, 0}(X)}\Bigg\}\dfrac{\widehat\tau(X) q(X)}{\tau(X) \widehat q(X)}\{\widehat\mu_{0, 0}(X)-\mu_{0, 0}(X)\}+\\
    &\Bigg\{\dfrac{\widehat\mu_{1, 1}(X)}{\widehat\mu_{1, 0}(X)}-\dfrac{\mu_{1, 1}(X)}{\mu_{1, 0}(X)}\Bigg\}\mu_{0, 0}(X)\Bigg[
    \dfrac{\widehat\tau(X) q(X)}{\tau(X) \widehat q(X)}-1-
    \dfrac{\widehat\tau(X) }{\tau(X)}\dfrac{1-q(X)}{1-\widehat q(X)}\dfrac{\widehat\mu_{0, 0}(X)}{\mu_{0, 0}(X)}\Big\{1-\dfrac{\mu_{1, 0}(X)}{\widehat\mu_{1, 0}(X)}\Big\}\Bigg]\\
    &+\dfrac{\mu_{1, 1}(X)}{\mu_{1, 0}(X)}\dfrac{q(X)}{\widehat q(X)}
    \dfrac{\widehat\tau(X) \widehat\mu_{0, 0}(X)}{\tau(X) \widehat\mu_{1, 0}(X)}\{\widehat\mu_{1, 0}(X)-\mu_{1, 0}(X)\}\Bigg\{
    1-\dfrac{1-q(X)}{1-\widehat q(X)}\dfrac{\widehat q(X)}{q(X)}
    \Bigg\}
    \Bigg]
    \Bigg)\\
    =&(\widehat\kappa-\kappa)\alpha_{2}/\kappa + \widehat\kappa\\
    &\E\Bigg\{\Pr(S=0|X)
    \Bigg(
    \{\widehat\mu_{0, 0}(X)-\mu_{0, 0}(X)\}\Bigg[
    \dfrac{\widehat\mu_{1, 1}(X)}{\widehat\mu_{1, 0}(X)}\Bigg\{1-\dfrac{\pi(X)}{\widehat\pi(X)}\Bigg\}+\\
    &\dfrac{\widehat\tau(X) }{\tau(X)}\dfrac{q(X)}{\widehat q(X)}\Bigg\{\dfrac{\mu_{1, 1}(X)}{\mu_{1, 0}(X)}-\dfrac{\widehat\mu_{1, 1}(X)}{\widehat\mu_{1, 0}(X)}\Bigg\}
    \Bigg]+\Bigg\{\dfrac{\mu_{1, 1}(X)}{\mu_{1, 0}(X)}-\dfrac{\widehat\mu_{1, 1}(X)}{\widehat\mu_{1, 0}(X)}
    \Bigg\}\mu_{0, 0}(X)\\
    &\Bigg[
    \dfrac{\widehat\tau(X) q(X)}{\tau(X) \widehat q(X)}-1-
    \dfrac{\widehat\tau(X) }{\tau(X)}\dfrac{1-q(X)}{1-\widehat q(X)}\dfrac{\widehat\mu_{0, 0}(X)}{\mu_{0, 0}(X)}\Big\{1-\dfrac{\mu_{1, 0}(X)}{\widehat\mu_{1, 0}(X)}\Big\}\Bigg]+\\
    &\dfrac{\mu_{1, 1}(X)}{\mu_{1, 0}(X)}\dfrac{q(X)}{\widehat q(X)}
    \dfrac{\widehat\tau(X) \widehat\mu_{0, 0}(X)}{\tau(X) \widehat\mu_{1, 0}(X)}\{\widehat\mu_{1, 0}(X)-\mu_{1, 0}(X)\}\Bigg\{
    1-\dfrac{1-q(X)}{1-\widehat q(X)}\dfrac{\widehat q(X)}{q(X)}
    \Bigg\}
    \Bigg)
    \Bigg\}\end{align*}
\begin{align*}
    \lesssim&(\widehat\kappa-\kappa)\alpha_{2}/\kappa + \\
    &\left\lVert \widehat\mu_{0, 0}(X)-\mu_{0, 0}(X)\right\rVert\cdot
    \left\lVert \widehat\pi(X)-\pi(X)\right\rVert+\\
    &\left\lVert \dfrac{\widehat\mu_{1, 1}(X)}{\widehat\mu_{1, 0}(X)}-\dfrac{\mu_{1, 1}(X)}{\mu_{1, 0}(X)}\right\rVert
    \Big\{
    \left\lVert \widehat\mu_{0, 0}(X)-\mu_{0, 0}(X)\right\rVert+
    \left\lVert\widehat\mu_{1, 0}(X)-\mu_{1, 0}(X)\right\rVert+\left\lVert\widehat\tau(X)-\tau(X)\right\rVert
    \Big\}\\
    &+\left\lVert\widehat q(X)-q(X)\right\rVert\Bigg\{
    \left\lVert \dfrac{\widehat\mu_{1, 1}(X)}{\widehat\mu_{1, 0}(X)}-\dfrac{\mu_{1, 1}(X)}{\mu_{1, 0}(X)}\right\rVert+\left\lVert\widehat\mu_{1, 0}(X)-\mu_{1, 0}(X)\right\rVert
    \Bigg\}.
\end{align*}
Combine the first term of 2 and 3, we have
\begin{align*}
     &\mathbb{P}\big\{H(\boldsymbol{\widehat \eta_2})-H(\boldsymbol{\eta_2})\big\}+
    (\mathbb{P}_n-\mathbb{P})H(\boldsymbol{\eta_2})\\
    =&\mathbb{P}_n\{A_2(O, \boldsymbol{\eta_2})\}+\Big\{\dfrac{\widehat\kappa}{\kappa} -1\Big\}\alpha_{1} +
    \Big\{\dfrac{\kappa}{\widehat\kappa} -1\Big\}\alpha_{1}. 
\end{align*}
Factoring and simplifying the last two terms yields $\alpha_{2} \left\{\widehat\kappa- \kappa\right\}\left\{\frac{1}{\kappa} - \frac{1}{\widehat\kappa}\right\}$, which is
$o_{p}(n^{-1/2})$ by the central limit theorem. 
Therefore, combining terms 1, 2 and 3, we conclude
\begin{align*}
     \widehat \alpha_{2} -\alpha_{2} 
    =O_{p}(R_{1, n})+\mathbb{P}_n\{A_2(O, \boldsymbol{\eta_2})\}+o_p(n^{-1/2}),
\end{align*}
In particular, if $R_{1, n}= o_p(n^{-1/2})$, then 
\begin{align*}
    \sqrt{n}\big(\widehat \alpha_{2} -\alpha_{2} \big)\rightsquigarrow\mathcal{N}\big[0, \E\{A_2(O, \boldsymbol{\eta_2})^{2}\}\big].
\end{align*}
That is, $\widehat \alpha_{2}$ is non-parametric efficient.

The derivation of the rest of the theorem is  an application of chain rules.
\end{proof}
\clearpage

\section{Theoretical results and discussion of Section \ref{sec:scenario3}}\label{appendix:scenario3}

Theorem \ref{thm:identification3} provides the identification results; the proof along with other identification strategies are given in \ref{appendix:identification3}.

\begin{restatable}{theorem}{identification3}
\label{thm:identification3}
Under conditions (A1) through (A5), (C1), and (C2), $\E[Y^1|S=0]$ can be identified by 
\begin{align*} 
   \alpha_{3}\equiv\E \left[\dfrac{\E[Y|X, S=1, A=1]}{\E[Y|X, S=1, A=0]}\E[Y | X, W, S = 0, A = 0]\right],
\end{align*}
and $\E[Y^0|S=0]$ can be identified by 
\begin{align*} 
   \beta_{3}\equiv\E\left[\E[Y | X, W, S = 0, A = 0]|S=0\right].
\end{align*}
Then $\E[Y^1|S=0]/\E[Y^0|S=0]$ and $\E[Y^1-Y^0|S=0]$ can be identified by estimable functionals  $\phi_{3}\equiv\alpha_{3}/\beta_{3}$ and $\psi_{3}\equiv\alpha_{3}-\beta_{3}$ respectively.
\end{restatable}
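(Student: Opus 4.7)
The plan is to mirror the identification proofs of Theorems \ref{thm:identification1} and \ref{thm:identification2}; the only genuinely new element is how to handle the additional target-population confounder $W$ via conditions (C1) and (C2). For $\beta_3$, I would start from iterated expectations to write $\E[Y^0 \mid S=0] = \E\bigl\{\E[Y^0 \mid X, W, S=0]\bigm| S=0\bigr\}$, then invoke (C1) to replace $\E[Y^0\mid X, W, S=0]$ with $\E[Y^0\mid X, W, S=0, A=0]$, and finally apply (A1) to obtain $\E[Y\mid X, W, S=0, A=0] = \mu'_{0,0}(X,W)$. Condition (C2) ensures this conditional expectation is well-defined on the support of $(X,W)\mid S=0$. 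The result is $\beta_3 = \E[\mu'_{0,0}(X,W)\mid S=0]$.

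For $\alpha_3$, I would repeat the chain of steps from the proof of Theorem \ref{thm:identification1} to first obtain
\begin{equation*}
\E[Y^1\mid S=0] = \E\!\left[\frac{\mu_{1,1}(X)}{\mu_{1,0}(X)}\, \E[Y^0\mid X, S=0] \,\Big|\, S=0\right],
\end{equation*}
where (A4) transports the conditional relative effect from target to trial, (A2)--(A3) convert the trial counterfactual means into the observed-data regressions $\mu_{1,1}(X)$ and $\mu_{1,0}(X)$, (A1) completes the translation to observed outcomes, and the positivity components of (A4) and (A5) legitimize dividing by $\mu_{1,0}(X)$ and taking the outer expectation over the target covariate distribution. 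The new step is to further reduce $\E[Y^0\mid X, S=0]$ using the $W$-conditional exchangeability: by the tower property it equals $\E\{\E[Y^0\mid X, W, S=0]\mid X, S=0\}$, which under (C1) and (A1) becomes $\E[\mu'_{0,0}(X,W)\mid X, S=0] = M(X)$. Substituting back and (optionally) re-integrating $W$ inside the outer expectation yields the two equivalent forms
\begin{equation*}
\alpha_3 = \E\!\left[\frac{\mu_{1,1}(X)}{\mu_{1,0}(X)} \mu'_{0,0}(X,W) \,\Big|\, S=0\right] = \E\!\left[\frac{\mu_{1,1}(X)}{\mu_{1,0}(X)} M(X) \,\Big|\, S=0\right].
\end{equation*}

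Identifiability of $\phi_3$ and $\psi_3$ then follows immediately from that of $\alpha_3$ and $\beta_3$ by continuous mapping (with $\beta_3\neq 0$ required for the ratio), so no additional conditions are needed. The main obstacle I anticipate is bookkeeping on positivity: condition (A4) operates on $X$ alone while (C1)--(C2) operate on $(X, W)$, so I must verify that the conditioning sets line up consistently as the two identification arguments are chained through the tower property, in particular when moving from the trial-side identification of $\E[Y^0\mid X, S=0]$ to its target-side expression via $\mu'_{0,0}(X,W)$. Once that alignment is carefully articulated, the remaining manipulations reduce to those already carried out for Theorem \ref{thm:identification1}.
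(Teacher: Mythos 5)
Your proposal is correct and follows essentially the same route as the paper's proof: identify the conditional mean ratio via (A4), (A2), and (A1), then handle $\E[Y^0\mid X, S=0]$ by the tower property over $W$, apply (C1) and (A1) to obtain $M(X)=\E[\mu'_{0,0}(X,W)\mid X, S=0]$, and re-integrate to pass between the $M(X)$ and $\mu'_{0,0}(X,W)$ forms. The treatment of $\beta_3$ likewise matches the paper's argument step for step.
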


Define additional nuisance parameters $\mu'_{0, 0}(X, W)=\E[Y | X, W, S = 0, A = 0]$, $M(X)=\E \big[\E[Y | X, W, S = 0, A = 0] \big | X, S = 0 \big]$, and $\pi'(X, W)=\Pr[A=0|X, W, S=0]$.
Denote the nuisance functions by $\boldsymbol{\eta_3}=\{\kappa, \mu'_{0, 0}(X, W), M(X), \mu_{1, 1}(X), \mu_{1, 0}(X), \pi'(X, W), q(X)\}$, $\boldsymbol{\eta_3}'=\{\kappa, \mu'_{0, 0}(X, W), \pi'(X, W)\}$, and their estimates by $\boldsymbol{\widehat \eta_3}$ and $\boldsymbol{\widehat \eta'_3}$ respectively. 
The following lemma gives the influence functions of the target parameters. The proof is given in \ref{appendix:IF3}.
\begin{lemma}
\label{thm:IF3}
The influence functions of $\alpha_{3}$, $\beta_{3}$, $\phi_{3}$, and $\psi_{3}$ are
\begin{align*}
    A_3(O, \boldsymbol{\eta_3})=&\kappa\Bigg(
    (1-S)
    \dfrac{\mu_{1, 1}(X)}{\mu_{1, 0}(X)}\Bigg[\mu'_{0, 0}(X, W)-\alpha_3 +\dfrac{1-A}{\pi'(X, W)}\big\{Y-\mu'_{0, 0}(X, W)\big\}
    \Bigg]\\
    &+
    S\tau(X)\dfrac{M(X)}{\mu_{1, 0}(X)}\Bigg[
    \dfrac{A}{q(X)}\big\{Y-\mu_{1, 1}(X)\big\}-\dfrac{1-A}{1-q(X)}\dfrac{\mu_{1, 1}(X)}{\mu_{1, 0}(X)}\big\{Y-\mu_{1, 0}(X)\big\}
    \Bigg]
    \Bigg).\\
    B_3(O, \boldsymbol{\eta_3}') =&\kappa(1-S)  \Bigg[
    \mu'_{0, 0}(X, W)-\beta_3  +\dfrac{1-A}{\pi'(X, W)}\big\{Y-\mu'_{0, 0}(X, W)\big\} 
    \Bigg],\\
    \Phi_{3}(O; \boldsymbol{\eta_{3}})=&\dfrac{1}{\beta_{3}}\{A_{3}(O; \boldsymbol{\eta_{3}})-\phi_{2} B_{3}(O; \boldsymbol{\eta'_{3}})\},\quad\quad
    \Psi_{3}'(O; \boldsymbol{\eta_{3}})=A_{3}(O; \boldsymbol{\eta_{3}})-B_{3}(O; \boldsymbol{\eta'_{3}}).
\end{align*}
These influence functions are also the semiparametric efficient influence functions when $\pi'(X, W)$ is known. 
\end{lemma}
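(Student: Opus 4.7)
The plan is to derive $A_3$ by path differentiation in the nonparametric model, mirroring the proof strategy used for Lemma \ref{thm:IF2} but accounting for the additional covariate $W$ observed only when $S=0$. Under the nonparametric model, the density of $O$ factors as $p(s)\,p(x\mid s)\,p(w\mid x,s)\,p(a\mid x,w,s)\,p(y\mid x,w,s,a)$, so the tangent space decomposes orthogonally as $\Lambda_S \oplus \Lambda_{X\mid S} \oplus \Lambda_{W\mid X,S} \oplus \Lambda_{A\mid X,W,S} \oplus \Lambda_{Y\mid X,W,S,A}$. I would first rewrite
\[
\alpha_3 = \E\bigl[\kappa\,(1-S)\,\tfrac{\mu_{1,1}(X)}{\mu_{1,0}(X)}\,\mu'_{0,0}(X,W)\bigr],
\]
which agrees with the $M(X)$-form via iterated expectations over $W\mid X,S=0$. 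Differentiating $\alpha_{3,p_t}$ along a regular one-dimensional submodel at $t=0$ via the product rule yields four contributions: one from the joint law of $(S,X,W)$, and one each from perturbations to $\mu_{1,1}$, $\mu_{1,0}$, and $\mu'_{0,0}$.

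Each conditional-mean derivative is handled by the standard identity $\partial_t\E_{p_t}[Y\mid B]\big|_{t=0} = \E_{p_0}\bigl[(Y-\E_{p_0}[Y\mid B])\,g(O)\,\big|\,B\bigr]$, after which I convert conditional expectations into unconditional ones using weights $SA/\{q(X)\Pr(S=1\mid X)\}$, $S(1-A)/\{(1-q(X))\Pr(S=1\mid X)\}$, and $(1-S)(1-A)/\{\pi'(X,W)\Pr(S=0\mid X)\}$ for the perturbations to $\mu_{1,1}$, $\mu_{1,0}$, and $\mu'_{0,0}$, respectively. Collecting these contributions, recentring the $(S,X,W)$-summand by subtracting $\alpha_3$ (which is orthogonal to $g(O)$), and applying $\tau(X) = \Pr(S=0\mid X)/\Pr(S=1\mid X)$ gives exactly $A_3(O,\boldsymbol{\eta_3})$. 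The factor $M(X)$ (rather than $\mu'_{0,0}(X,W)$) appears in the $S=1$ branch because $W$ is unobserved in the trial: the product $\tfrac{\mu_{0,0}'(X,W)}{\mu_{1,0}(X)}$ is hit only by scores supported on $\{S=1\}$, so iterated expectation over $W\mid X,S=0$ collapses $\mu'_{0,0}$ to $M$ before the $S=1$ weight is applied.

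The main obstacle is keeping the $W$-bookkeeping honest: a perturbation to $\mu'_{0,0}(X,W)$ produces a contribution in $\Lambda_{Y\mid X,W,S=0,A=0}$ (giving the $1-A$, $1/\pi'(X,W)$ augmentation in the $(1-S)$-branch of $A_3$), while perturbations of the joint $(S,X,W)$-law contribute through $\Lambda_{W\mid X, S=0}$ and $\Lambda_{X\mid S}$, which together produce the $\mu'_{0,0}(X,W)-\alpha_3$ piece and, separately via iteration, the $M(X)/\mu_{1,0}(X)$ weight multiplying the trial residuals. Once this separation is clean, the derivation of $B_3$ reduces to the standard AIPW influence function for the identification functional $\E[\mu'_{0,0}(X,W)\mid S=0]$ under conditions (C1)--(C2), and the influence functions of $\phi_3=\alpha_3/\beta_3$ and $\psi_3=\alpha_3-\beta_3$ follow by the delta method (quotient and difference rules).

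For the semiparametric efficiency claim when $\pi'(X,W)$ is known, the argument parallels that in \ref{appendix:IF2}. The restriction removes $\Lambda_{A\mid X,W,S=0}$ from the tangent space, so it suffices to verify that every summand of $A_3$ lies in the remaining subspaces. The $(1-S)$-terms involving $1/\pi'(X,W)$ are mean-zero given $(X,W,S=0,A=0)$, hence lie in $\Lambda_{Y\mid X,W,S,A}$; the trial-based $A/q(X)$ and $(1-A)/(1-q(X))$ terms are likewise mean-zero conditional on $(X,S=1,A=a)$, again in $\Lambda_{Y\mid X,W,S,A}$; and the remaining non-augmentation piece lies in $\Lambda_S \oplus \Lambda_{X\mid S} \oplus \Lambda_{W\mid X, S}$. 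Therefore $A_3$ lies in the restricted tangent space and coincides with the efficient influence function under the semiparametric submodel; the same applies to $B_3$, $\Phi_3$, and $\Psi_3$.
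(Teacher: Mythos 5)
Your derivation follows essentially the same route as the paper's: path differentiation of $\alpha_3$ along a one-dimensional submodel, splitting the derivative into contributions from the $(S,X,W)$-law and from perturbations of $\mu_{1,1}$, $\mu_{1,0}$, and $\mu'_{0,0}$, with the iterated expectation over $W\mid X, S=0$ collapsing $\mu'_{0,0}$ to $M(X)$ in the trial branch, and the same tangent-space membership argument for efficiency when $\pi'(X,W)$ is known. The only nit is that the weight you quote for the $\mu'_{0,0}$-perturbation should carry $\Pr(S=0\mid X,W)$ rather than $\Pr(S=0\mid X)$ in the denominator (it cancels against the density of $(X,W)$ given $S=0$ in the outer expectation), but this does not change the final form you obtain.
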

Once influence functions are known, we can construct estimators in the same way as described in Section \ref{sec:estimation}. To establish asymptotic properties of estimators, we first provide the following conditions.
\begin{enumerate}
\item[(c1)] $\exists \varepsilon>0, \quad s.t. \quad \Pr[\varepsilon\leqslant \kappa\leqslant1-\varepsilon]=
\Pr[\varepsilon\leqslant \widehat \kappa\leqslant1-\varepsilon]=\Pr[\varepsilon\leqslant q(X)\leqslant1-\varepsilon]=
\Pr[\varepsilon\leqslant \widehat q(X)\leqslant1-\varepsilon]=
\Pr[\varepsilon\leqslant \tau(X)]=
\Pr[\varepsilon\leqslant \widehat \tau(X)]=
\Pr\{\varepsilon\leqslant \pi'(X)\leqslant1-\varepsilon\}=
\Pr\{\varepsilon\leqslant \widehat \pi'(X)\leqslant1-\varepsilon\}=1$,
\item[(c2)] $ \E[Y^{2}]<\infty$,
\item[(c3)] $\left\lVert \boldsymbol{\widehat \eta_3}-\boldsymbol{\eta_3} \right\rVert =o_p(1)$ 
\end{enumerate}
The following theorem give the asymptotic properties of the estimators; the proof is given in \ref{appendix:inference3}.
\begin{theorem}\label{thm:inference3}
If conditions \emph{(c1)} through \emph{(c3)} hold, then $\widehat \beta_3$ is consistent with rate of convergence $O_p(R_{3, n}^{\beta}+n^{-1/2})$, where  
$
    R_{2, n}^{\beta}=\left\lVert \widehat\mu'_{0, 0}(X, W)-\mu'_{0, 0}(X, W)\right\rVert
     \left\lVert \widehat\pi'(X, W)-\pi'(X, W)\right\rVert.
$
When $R_{3, n}^{\beta}= o_p(n^{-1/2})$, $\widehat \beta_3$ is asymptotically normal and non-parametric efficient. In addition, 
$\widehat \alpha_3, \widehat \phi_3, \widehat \psi_3$ are consistent with rate of convergence $O_p(R_{3, n}^{\alpha}+n^{-1/2})$, where  
$
    R_{3, n}^{\alpha}=\left\lVert \widehat \mu_{1, 1}(X)/\widehat \mu_{1, 0}(X)-\mu_{1, 1}(X)/ \mu_{1, 0}(X)\right\rVert
    (
    || \widehat M(X)-M(X) ||+
    \left\lVert \widehat \mu_{1, 0}(X)-\mu_{1, 0}(X) \right\rVert+
    || \widehat \tau(X)-\tau(X) ||
    )+R_{2, n}^{\beta}.
$
When $R_{3, n}^{\alpha}= o_p(n^{-1/2})$, $\widehat \alpha_3, \widehat \phi_3, \widehat \psi_3$ are asymptotically normal and non-parametric efficient.
\end{theorem}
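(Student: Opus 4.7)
The plan is to mirror the von Mises expansion strategy used in the proofs of Theorems~\ref{thm:inference1} and \ref{thm:inference2}. For each of $\widehat{\beta}_{3}$ and $\widehat{\alpha}_{3}$, I will decompose the centered estimator into (i) a stochastic equicontinuity / empirical process term, (ii) a second-order ``drift'' / bias term, and (iii) a central-limit term. That is, writing $B_{3}(O;\boldsymbol{\widetilde{\eta}_{3}'})$ and $A_{3}(O;\boldsymbol{\widetilde{\eta}_{3}})$ for the uncentered plug-in versions of the influence functions from Lemma~\ref{thm:IF3}, I will split
\begin{equation*}
\widehat{\beta}_{3}-\beta_{3} = (\mathbb{P}_{n}-\mathbb{P})\{B_{3}(O;\boldsymbol{\widehat{\eta}_{3}'})-B_{3}(O;\boldsymbol{\eta_{3}'})\} + \mathbb{P}\{B_{3}(O;\boldsymbol{\widehat{\eta}_{3}'})-B_{3}(O;\boldsymbol{\eta_{3}'})\} + (\mathbb{P}_{n}-\mathbb{P})B_{3}(O;\boldsymbol{\eta_{3}'}),
\end{equation*}
and analogously for $\widehat{\alpha}_{3}$ using $A_{3}$. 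Condition (c1) ensures all denominators are bounded away from zero, and (c2) ensures the relevant second moments are finite, so the third (CLT) term is $O_{p}(n^{-1/2})$. Under cross-fitting and (c3), the first term is $o_{p}(n^{-1/2})$ by Lemma~2 of \citet{kennedy2020sharp}, since $\|B_{3}(O;\boldsymbol{\widehat{\eta}_{3}'})-B_{3}(O;\boldsymbol{\eta_{3}'})\|$ is controlled by the triangle inequality by $\|\widehat{\mu}_{0,0}'(X,W)-\mu_{0,0}'(X,W)\|+\|\widehat{\pi}'(X,W)-\pi'(X,W)\|+\|\widehat{\kappa}-\kappa\|=o_{p}(1)$.

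For $\widehat{\beta}_{3}$, the drift term is essentially the $\widehat{\beta}_{2}$ calculation of Appendix~\ref{appendix:inference2} carried out with $(X,W)$ in place of $X$. Using the law of iterated expectation to condition on $(X,W,S)$ and then on $(X,W)$, the $\{1-A\}/\widehat{\pi}'(X,W)$ weighting cancels the ``$\mu_{0,0}'$'' term up to a product-of-errors remainder:
\begin{equation*}
\mathbb{P}\{B_{3}(O;\boldsymbol{\widehat{\eta}_{3}'})-B_{3}(O;\boldsymbol{\eta_{3}'})\} = (\widehat{\kappa}-\kappa)\beta_{3} + \widehat{\kappa}\,\mathbb{E}\left[\{1-\pi'(X,W)/\widehat{\pi}'(X,W)\}\{\mu_{0,0}'(X,W)-\widehat{\mu}_{0,0}'(X,W)\}\right],
\end{equation*}
which by Cauchy--Schwarz is $O_{p}(R_{3,n}^{\beta})$; the $(\widehat{\kappa}-\kappa)$ contribution combines with the CLT term to give a $o_{p}(n^{-1/2})$ remainder by an argument identical to the one used for $\widehat{\kappa}^{-1}-\kappa^{-1}$ in Appendices~\ref{appendix: kappa} and \ref{appendix:inference1}.

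The main work, and the main obstacle, is the bias term for $\widehat{\alpha}_{3}$. The new feature relative to Theorem~\ref{thm:inference2} is the nested regression $M(X)=\mathbb{E}[\mu_{0,0}'(X,W)\mid X,S=0]$ appearing in the trial-indexed piece of $A_{3}$, while the target-indexed piece still uses $\mu_{0,0}'(X,W)$. I will handle this by splitting the bias into two groups: (a) the ``target-population'' piece involving $\mu_{0,0}'(X,W)$ and $\pi'(X,W)$, which after iterated expectation conditional on $(X,W)$ reduces exactly to a term of order $R_{3,n}^{\beta}$; and (b) the ``trial-population'' piece involving $\mu_{1,1},\mu_{1,0},\tau,q$ and $M(X)$. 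For the latter, I will mimic the algebra in the proof of Theorem~\ref{thm:inference1}, adding and subtracting cross terms so that the iterated-expectation identity $\mathbb{E}[\mu_{0,0}'(X,W)\mid X,S=0]=M(X)$ collapses the $\mu_{0,0}'(X,W)$ factors coming from the target-population piece and lets them pair with $\widehat{M}(X)$ coming from the trial-population piece. The key identity is
\begin{equation*}
\mathbb{E}\!\left[(1-S)\tfrac{\mu_{1,1}(X)}{\mu_{1,0}(X)}\{\widehat{\mu}_{0,0}'(X,W)-\mu_{0,0}'(X,W)\}\right] = \mathbb{E}\!\left[(1-S)\tfrac{\mu_{1,1}(X)}{\mu_{1,0}(X)}\{\widehat{M}(X)-M(X)\}\right] + \text{cross term},
\end{equation*}
so that after regrouping the remainder decomposes as a product of $\|\widehat{\mu}_{1,1}/\widehat{\mu}_{1,0}-\mu_{1,1}/\mu_{1,0}\|$ against $\|\widehat{M}-M\|+\|\widehat{\mu}_{1,0}-\mu_{1,0}\|+\|\widehat{\tau}-\tau\|$, plus the $R_{3,n}^{\beta}$ piece inherited from (a). This delivers $\mathbb{P}\{A_{3}(O;\boldsymbol{\widehat{\eta}_{3}})-A_{3}(O;\boldsymbol{\eta_{3}})\}=O_{p}(R_{3,n}^{\alpha})$ up to a $(\widehat{\kappa}-\kappa)\alpha_{3}/\kappa$ term that again cancels against the CLT piece.

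Given the resulting asymptotically linear expansions $\widehat{\beta}_{3}-\beta_{3}=\mathbb{P}_{n}\{B_{3}(O;\boldsymbol{\eta_{3}'})\}+O_{p}(R_{3,n}^{\beta})+o_{p}(n^{-1/2})$ and $\widehat{\alpha}_{3}-\alpha_{3}=\mathbb{P}_{n}\{A_{3}(O;\boldsymbol{\eta_{3}})\}+O_{p}(R_{3,n}^{\alpha})+o_{p}(n^{-1/2})$, the consistency claims and, whenever the corresponding remainders are $o_{p}(n^{-1/2})$, the asymptotic normality and nonparametric efficiency of $\widehat{\beta}_{3}$ and $\widehat{\alpha}_{3}$ are immediate from the CLT together with Lemma~\ref{thm:IF3}. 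The statements for $\widehat{\phi}_{3}$ and $\widehat{\psi}_{3}$ then follow by the delta method / continuous mapping applied to $(\widehat{\alpha}_{3},\widehat{\beta}_{3})$, using $\beta_{3}\neq 0$ for the ratio and noting that the resulting influence functions coincide with $\Phi_{3}$ and $\Psi_{3}$ from Lemma~\ref{thm:IF3}, so that when $R_{3,n}^{\alpha}=o_{p}(n^{-1/2})$ both $\widehat{\phi}_{3}$ and $\widehat{\psi}_{3}$ inherit asymptotic normality and efficiency.
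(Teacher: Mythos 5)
Your proposal is correct and follows essentially the same route as the paper: the same three-term decomposition into an empirical-process term (killed by cross-fitting and Lemma~2 of \citet{kennedy2020sharp} under (c3)), a second-order drift term bounded by the product remainders via iterated expectations, and a CLT term, with the $\widehat\kappa$ contribution cancelling to $o_p(n^{-1/2})$ exactly as in the earlier theorems. Your key observation---that the nested regression is handled by integrating $\mu'_{0,0}(X,W)$ over $W$ given $X$ so that it collapses to $M(X)$ and pairs with the trial-population terms, leaving the $\{\widehat\mu'_{0,0},\widehat\pi'\}$ product as the $R^{\beta}_{3,n}$ piece---is precisely the device the paper's proof uses.
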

Other alternative representations of $R_{3, n}^{\alpha}$ is also given in \ref{appendix:inference3}, which admits the same conclusion for robustness and efficiency when assumption \ref{assumption} holds.
\clearpage

\section{Proof of Theorem \ref{thm:identification3}}\label{appendix:identification3}

\begin{proof}
\begin{equation*}
    \begin{split}
        \E(Y^1 | S = 0) &= \E \big[\E(Y^1 | X, S = 0) \big | S = 0 \big] \\ 
         &= \E \left[ \dfrac{\E(Y^1 | X, S = 0)}{\E(Y^0 | X, S = 0)} \E(Y^0 | X, S = 0) \Big | S = 0 \right] \\ 
         &= \E \left[ \dfrac{\E(Y^1 | X, S = 1)}{\E(Y^0 | X, S = 1)} \E(Y^0 | X, S = 0) \Big | S = 0 \right] \\ 
          &= \E \left[ \dfrac{\E(Y^1 | X, S = 1, A = 1)}{\E(Y^0 | X, S = 1, A = 0)} \E(Y^0 | X, S = 0) \Big | S = 0 \right] \\ 
         &= \E \left[ \dfrac{\E(Y^1 | X, S = 1, A = 1)}{\E(Y^0 | X, S = 1, A = 0)} \E\big\{ \E(Y^0 | X, W, S = 0) \big | X, S = 0 \big\} \Big | S = 0 \right]\\
         &= \E \left[ \dfrac{\E(Y^1 | X, S = 1, A = 1)}{\E(Y^0 | X, S = 1, A = 0)} \E\big\{ \E(Y^0 | X, W, S = 0, A = 0) \big | X, S = 0 \big\} \Big | S = 0 \right] \\
         &= \E \left[\dfrac{\E(Y | X, S = 1, A = 1)}{\E(Y | X, S = 1, A = 0)} \E\big\{ \E(Y | X, W, S = 0, A = 0) \big | X, S = 0 \big\} \Big | S = 0 \right]\\
    &=\E \left[\E\left[\dfrac{\E(Y | X, S = 1, A = 1)}{\E(Y | X, S = 1, A = 0)} \E(Y | X, W, S = 0, A = 0) \big | X, S = 0 \right] \Big | S = 0 \right]\\
    &=\E\left[\dfrac{\E(Y | X, S = 1, A = 1)}{\E(Y | X, S = 1, A = 0)} \E(Y | X, W, S = 0, A = 0) \Big | S = 0 \right]\\
    \end{split}
\end{equation*}
Here, the first step follows from the law of total expectation; the second from multiplying and dividing by $\E(Y^0 | X, S = 0)$, which is allowed by the positivity component of condition (A4); the third from the exchangeability component of condition (A4); the fourth from condition (A2); the fifth from the law of total expectation; the sixth by condition (C1); the seventh from condition (A1).

Furthermore, using the law of total expectation and conditions (A1), (C1), and (C2) we can write, 
\begin{equation*}
    \begin{split}
        \E(Y^0 | S = 0) &= \E \big\{  \E(Y^0 | X, W,  S = 0)  \big | S = 0 \big\} \\
        &= \E \big\{  \E(Y^0 | X, W,  S = 0, A = 0)  \big | S = 0 \big\} \\
        &= \E \big\{  \E(Y | X, W, S = 0, A = 0)  \big | S = 0 \big\}.
    \end{split}
\end{equation*}
\noindent
The target parameter $\beta_3$ can also be written as
$\E \left[ \dfrac{1-A}{\Pr[A=0|X, W, S=0]}Y| S = 0 \right]$
The derivation of the weighting estimators follows the same logic as \ref{appendix:identification1}.

We can then identify the causal mean ratio in the target population by $\phi_{3}\equiv\alpha_3/\beta_{3}$ and the causal mean difference in the target population by $\psi_{3}\equiv\alpha_3-\beta_{3}$.
\end{proof}

\begin{prop}
    The target parameter $\alpha_3$ can also be written as 
    \begin{align*}
&\dfrac{1}{\Pr(S=0)}\E \left[ \dfrac{\E(Y | X, S = 1, A = 1)}{\E(Y | X, S = 1, A = 0)}\dfrac{1-A}{\Pr[A=0|X, W, S=0]} Y \right], \\
   &\dfrac{1}{\Pr(S=0)}\E \left[S\dfrac{\Pr[S=0|X]}{\Pr[S=1|X]}\dfrac{A}{\Pr[A=1|X, S=1]}\dfrac{\E \left[ \E[Y | X, W, S = 0, A = 0] \big | X, S = 0 \right]}{\E[Y | X, S = 1, A = 0]}Y  \right], \\
   &\dfrac{1}{\Pr(S=0)}\E \Big[S\dfrac{\Pr[S=0|X]}{\Pr[S=1|X]}\dfrac{1-A}{1-\Pr[A=1|X, S=1]}\dfrac{\E(Y | X, S = 1, A = 1)}{\E(Y | X, S = 1, A = 0)}\\&\hspace{3.5cm}\dfrac{\E \left[ \E[Y | X, W, S = 0, A = 0] \big | X, S = 0 \right]}{\E[Y | X, S = 1, A = 0]}Y  \Big].
\end{align*}
\end{prop}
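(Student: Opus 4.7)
The plan is to derive each of the three alternative representations by starting from the base identification formula established in Theorem \ref{thm:identification3},
\[
\alpha_3 = \E\!\left[\frac{\mu_{1,1}(X)}{\mu_{1,0}(X)}\, M(X) \,\Big|\, S=0\right],
\]
and then systematically swapping one or more outcome regressions for their inverse-probability-weighted equivalents. The two workhorse identities are the treatment-weighting representations $\mu_{1,a}(X) = \E\!\left[\frac{I(A=a)Y}{\Pr[A=a\mid X,S=1]} \,\big|\, X, S=1\right]$ (valid by conditions (A1)--(A3)), the nested outcome representation $M(X) = \E\!\left[\frac{(1-A)Y}{\pi'(X,W)} \,\big|\, X, S=0\right]$ (valid by conditions (A1), (C1), (C2)), and the sampling-weight identity $\E[f(X)\mid S=0] = \Pr[S=0]^{-1}\E[(1-S)f(X)] = \Pr[S=0]^{-1}\E[\Pr[S=0\mid X]\, f(X)]$.

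For the first representation, I would keep the ratio $\mu_{1,1}(X)/\mu_{1,0}(X)$ intact and rewrite $M(X)$ via the nested regression identity above. Using iterated expectations conditional on $(X,S=0)$ first, then on $S=0$, the factor $M(X)$ is replaced by $\E\!\left[\frac{1-A}{\pi'(X,W)}Y \,\big|\, X,S=0\right]$, and pulling the outer conditioning through yields the first formula. For the second representation, I would instead keep $M(X)$ intact but rewrite $\mu_{1,1}(X)$ using the treatment-weighting identity with $a=1$, while simultaneously converting $\E[\cdot\mid S=0]$ to $\Pr[S=0]^{-1}\E[\Pr[S=0\mid X]\,\cdot]$; factoring $\Pr[S=0\mid X]/\Pr[S=1\mid X] = \tau(X)$ and then using $\Pr[S=1\mid X] \,\E[g(X,A,Y)\mid X,S=1] = \E[S\,g(X,A,Y)\mid X]$ (the law of total expectation conditional on $X$) produces the required weighted expression. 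For the third representation, I would apply the same sampling-weight conversion as in the second, but this time use the $a=0$ treatment-weighting identity to replace the \emph{outer} factor $\mu_{1,0}(X)$ that is implicit in the product $\mu_{1,0}(X)\cdot [M(X)/\mu_{1,0}(X)] = M(X)$; that is, I multiply and divide by $\mu_{1,0}(X)$ and rewrite one of the two copies via $\frac{(1-A)Y}{1-q(X)}$ under $S=1$.

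The main obstacle is purely bookkeeping: in representations 2 and 3, two substitutions occur simultaneously (one for transport weighting, one for treatment weighting), and one must apply iterated expectations in the correct order, conditioning on $X$ first so that the $S$-indicator absorbs the sampling weight and the $A$-indicator absorbs the treatment weight. The positivity conditions (A3) and (C2) are needed to ensure the denominators $q(X)$, $1-q(X)$, and $\pi'(X,W)$ are bounded away from zero where needed, and (A5) ensures $\Pr[S=1\mid X]>0$ so the sampling weights are well-defined on the support of $\{S=0\}$. Once the substitutions are made and the nested expectations collapsed by the tower property, each expression reduces back to $\E[\mu_{1,1}(X)\mu_{1,0}(X)^{-1} M(X)\mid S=0]$, completing the proof.
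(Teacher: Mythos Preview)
Your proposal is correct and follows essentially the same approach as the paper: the paper simply states that the derivation ``follows the same logic as \ref{appendix:identification1},'' i.e., one starts from the base outcome-regression identification $\alpha_3 = \E[\mu_{1,1}(X)\mu_{1,0}(X)^{-1}M(X)\mid S=0]$ and swaps selected regression factors for their inverse-probability-weighted analogues via the tower property, exactly as you describe. Your write-up is in fact more explicit than the paper's own treatment of this particular proposition.
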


The derivation of the weighting estimators follows the same logic as \ref{appendix:identification1}.
\clearpage

\section{Proof of Lemma \ref{thm:IF3}}

\label{appendix:IF3}
\begin{proof}
Recall that
\begin{equation*} 
  \alpha_3=\E\Big\{\dfrac{\mu_{1, 1}(X)}{\mu_{1, 0}(X)}M(X)\Big|S=0\Big\}.
\end{equation*}
We then have
\begin{align*}
     &\dfrac{\partial \alpha_{3, p_t}}{\partial t}\Big|_{t=0}\\
    =&\dfrac{\partial}{\partial t}\E_{p_t}\Bigg\{\dfrac{\mu_{1, 1, p_t}(X)}{\mu_{1, 0, p_t}(X)}M_{p_t}(X)\Big|S=0\Bigg\}\Big|_{t=0}\\
    =&\underbrace{\dfrac{\partial}{\partial t}\E_{p_{0}}\Bigg\{\dfrac{\mu_{1, 1, p_0}(X)}{\mu_{1, 0, p_0}(X)}M_{p_0}(X)\Big|S=0\Bigg\}\Big|_{t=0}}_{(1)}+
    \underbrace{\E_{p_{0}}\Bigg\{\dfrac{\dfrac{\partial}{\partial t}\mu_{1, 1, p_t}(X)\Big|_{t=0}}{\mu_{1, 0, p_0}(X)}M_{p_0}(X)\Big|S=0\Bigg\}}_{(2)}+\\
    &\underbrace{\E_{p_{t}}\Bigg\{\dfrac{\mu_{1, 1, p_0}(X)}{\dfrac{\partial}{\partial t}\mu_{1, 0, p_t}(X)\Big|_{t=0}}M_{p_0}(X)\Big|S=0\Big\}}_{(3)}+
    \underbrace{\E_{p_{0}}\Big\{\dfrac{\mu_{1, 1, p_0}(X)}{\mu_{1, 0, p_0}(X)}\dfrac{\partial}{\partial t}M_{p_t}(X)\Big|_{t=0}\Big|S=0\Bigg\}}_{(4)}.
\end{align*}
Next, we calculate each term by turn. The derivation of terms (1), (2), and (3) are similar to ones in \ref{appendix:IF2}, and (1)+(2)+(3) is
\begin{align*}
    &\E\Bigg\{\kappa\Bigg(
    (1-S)
    \dfrac{\mu_{1, 1}(X)}{\mu_{1, 0}(X)}\Big\{M(X)-\alpha_3 
    \Big\}+
    S\tau(X)\dfrac{M(X)}{\mu_{1, 0}(X)}\\
    &\Bigg[
    \dfrac{A}{q(X)}\big\{Y-\mu_{1, 1}(X)\big\}-\dfrac{1-A}{1-q(X)}\dfrac{\mu_{1, 1}(X)}{\mu_{1, 0}(X)}\big\{Y-\mu_{1, 0}(X)\big\}
    \Bigg]
    \Bigg)g(O)\Bigg\}.
\end{align*}
For the term (4), it is 
\begin{align*}
    &\E_{p_{0}}\Bigg\{\dfrac{\mu_{1, 1, p_0}(X)}{\mu_{1, 0, p_0}(X)}\dfrac{\partial}{\partial t}\E_{p_{t}} \big\{ \mu'_{0, 0, p_0}(X, W)\big | X, S = 0 \big\}\Big|_{t=0}\Big|S=0\Bigg\}\\
    +&\E_{p_{0}}\Bigg\{\dfrac{\mu_{1, 1, p_0}(X)}{\mu_{1, 0, p_0}(X)}\E_{p_{0}} \Big\{\dfrac{\partial}{\partial t} \mu'_{0, 0, p_t}(X, W)\Big|_{t=0}\big | X, S = 0 \Big\}\Big|S=0\Bigg\}\\
    =&\E\Big\{g(O)\kappa(1-S)
    \dfrac{\mu_{1, 1}(X)}{\mu_{1, 0}(X)}\Big[\mu'_{0, 0, p_0}(X, W)-M(X)+\dfrac{1-A}{\pi'(X, W)}\big\{Y-\mu'_{0, 0}(X, W)\big\}\Big]\Big\}.
\end{align*}
After some algebra, we conclude that the influence function of $\alpha_3 $ is $A(O, \boldsymbol{\eta_3})$.

It is trivial to derive the influence function of $\beta_{3}$ given the above proof. By the chain rule, it is easy to derive the influence function of $\phi_{3}$ and $\psi_{3}$. The rest of proof follows from the proof for Lemma \ref{thm:IF2}.
\end{proof}
\clearpage

\section{Proof of Theorem \ref{thm:inference3}}\label{appendix:inference3}

\begin{proof}
For general functions $\boldsymbol{\widetilde\eta_3}$ define 
\begin{align*}
H(\boldsymbol{\widetilde\eta_3}) =  &\kappa\Bigg(
    (1-S)
    \dfrac{\widetilde\mu_{1, 1}(X)}{\widetilde\mu_{1, 0}(X)}\Bigg[\widetilde\mu'_{0, 0}(X, W) +\dfrac{1-A}{1-\widetilde\pi'(X, W)}\big\{Y-\widetilde\mu'_{0, 0}(X, W)\big\}
    \Bigg]\\
    &+
    S\widetilde\tau(X)\dfrac{\widetilde M_{0, 0}(X, W)}{\widetilde\mu_{1, 0}(X)}\Bigg[
    \dfrac{A}{\widetilde q(X)}\big\{Y-\widetilde \mu_{1, 1}(X)\big\}-\dfrac{1-A}{1-\widetilde q(X)}\dfrac{\widetilde \mu_{1, 1}(X)}{\widetilde \mu_{1, 0}(X)}\big\{Y-\widetilde \mu_{1, 0}(X)\big\}
    \Bigg]
    \Bigg).
\end{align*}
We observe that we can decompose $\widehat \alpha_3  - \alpha_3 $ into three parts as below.
\begin{align*}
     &\widehat \alpha_3  - \alpha_3 =\mathbb{P}_n\big\{H(\boldsymbol{\widehat \eta_3})\big\}-\mathbb{P}\big\{H(\boldsymbol{\eta_3})\big\}\\
    =&\underbrace{
    (\mathbb{P}_n-\mathbb{P})\big\{H(\boldsymbol{\widehat \eta_3})-H(\boldsymbol{\eta_3})\big\} }_{1}+
    \underbrace{
    \mathbb{P}\big\{H(\boldsymbol{\widehat \eta_3})-H(\boldsymbol{\eta_3})\big\} }_{2}+
    \underbrace{
    (\mathbb{P}_n-\mathbb{P})H(\boldsymbol{\eta_3})}_{3}.
\end{align*}
Working on term 1, note that
\begin{align*}
    &||H(\boldsymbol{\widehat \eta_3})-H(\boldsymbol{\eta_3})||\\
    = 
    & \Bigg|\Bigg|\widehat\kappa\Bigg(
    (1-S)\Bigg[
    \dfrac{\widehat\mu_{1, 1}(X)}{\widehat\mu_{1, 0}(X)}\widehat \mu'_{0, 0}(X, W)+\dfrac{1-A}{1-\widehat\pi'(X, W)}\dfrac{\widehat\mu_{1, 1}(X)}{\widehat\mu_{1, 0}(X, W)}\big\{Y-\widehat \mu'_{0, 0}(X)\big\}
    \Bigg]\\
    &+
    S\widehat\tau(X)\dfrac{\widehat M(X)}{\widehat\mu_{1, 0}(X)}\Bigg[
    \dfrac{A}{\widehat q(X)}\big\{Y-\widehat\mu_{1, 1}(X)\big\}-\dfrac{1-A}{1-\widehat q(X)}\dfrac{\widehat\mu_{1, 1}(X)}{\widehat\mu_{1, 0}(X)}\big\{Y-\widehat\mu_{1, 0}(X)\big\}
    \Bigg]
    \Bigg)-\\
    &\kappa\Bigg(
    (1-S)\Bigg[
    \dfrac{\mu_{1, 1}(X)}{\mu_{1, 0}(X)}\mu'_{0, 0}(X, W)+\dfrac{1-A}{1-\pi'(X, W)}\dfrac{\mu_{1, 1}(X)}{\mu_{1, 0}(X)}\big\{Y-\mu'_{0, 0}(X, W)\big\}
    \Bigg]\\
    &+
    S\tau(X)\dfrac{M(X)}{\mu_{1, 0}(X)}\Bigg[
    \dfrac{A}{q(X)}\big\{Y-\mu_{1, 1}(X)\big\}-\dfrac{1-A}{1-q(X)}\dfrac{\mu_{1, 1}(X)}{\mu_{1, 0}(X)}\big\{Y-\mu_{1, 0}(X)\big\}
    \Bigg]
    \Bigg)\Bigg|\Bigg|\\
    \leqslant &
    \Bigg|\Bigg|
    \widehat\kappa\Bigg((1-S)
    \Bigg[\Big\{\dfrac{\widehat\mu_{1, 1}(X)}{\widehat\mu_{1, 0}(X)}\widehat\mu'_{0, 0}(X, W)-\dfrac{\mu_{1, 1}(X)}{\mu_{1, 0}(X)}\mu'_{0, 0}(X, W)\Big\}+\\
    &\dfrac{1-A}{1-\widehat \pi'(X, W)}\dfrac{\widehat\mu_{1, 1}(X)}{\widehat\mu_{1, 0}(X)}\{\mu'_{0, 0}(X, W)-\widehat\mu'_{0, 0}(X, W)\}+\\
    &S\widehat\tau(X)\dfrac{\widehat M(X)}{\widehat\mu_{1, 0}(X)}\Big[
    \dfrac{A}{\widehat q(X)}\{\widehat\mu_{1, 1}(X)-\mu_{1, 1}(X)\}-\dfrac{1-A}{1-\widehat q(X)}\dfrac{\widehat\mu_{1, 1}(X)}{\widehat\mu_{0, 0}(X)}\{\widehat\mu_{1, 0}(X)-\mu_{1, 0}(X)\}
    \Big]
    \Bigg]    
    \Bigg)
    \Bigg|\Bigg|\\
    &+\Bigg|\Bigg|
    (\widehat\kappa-\kappa)\Bigg(
    (1-S)\Bigg[
    \dfrac{\mu_{1, 1}(X)}{\mu_{1, 0}(X)}\mu'_{0, 0}(X, W)+\dfrac{1-A}{1-\pi'(X, W)}\dfrac{\mu_{1, 1}(X)}{\mu_{1, 0}(X)}\big\{Y-\mu'_{0, 0}(X, W)\big\}
    \Bigg]+\\
    &
    S\tau(X)\dfrac{M(X)}{\mu_{1, 0}(X)}\Bigg[
    \dfrac{A}{q(X)}\big\{Y-\mu_{1, 1}(X)\big\}-\dfrac{1-A}{1-q(X)}\dfrac{\mu_{1, 1}(X)}{\mu_{1, 0}(X)}\big\{Y-\mu_{1, 0}(X)\big\}
    \Bigg]
    \Bigg)
    \Bigg|\Bigg|+\\
    &\Bigg|\Bigg|
    \widehat \kappa \Bigg((1-A)(1-S)
    \Big[\dfrac{\widehat\mu_{1, 1}(X)}{\widehat\mu_{1, 0}(X)\{1-\widehat\pi'(X, W)\}}-\dfrac{\mu_{1, 1}(X)}{\mu_{1, 0}(X)\{1-\pi'(X, W)\}}
    \Big]\{Y-\mu'_{0, 0}(X, W)\}+\\
    &AS\Big\{\dfrac{\widehat M(X)\widehat\tau(X)}{\widehat\mu_{1, 0}(X)\widehat q(X)}-\dfrac{ M(X)\tau(X)}{\mu_{1, 0}(X) q(X)}\Big\}\{Y-\mu_{1, 1}(X)\}
    +\\
    &(1-A)S\Big[\dfrac{\widehat\mu_{1, 1}(X)\widehat M(X)\widehat\tau(X)}{\widehat\mu_{1, 0}(X)^{2}\{1-\widehat q(X)\}}-\dfrac{\mu_{1, 1}(X) M(X)\tau(X)}{\mu_{1, 0}(X)^{2}\{1- q(X)\}}\Big]\{Y-\mu_{1, 0}(X)\}
    \Bigg)
    \Bigg|\Bigg|        \end{align*}
    \begin{align*}
    \lesssim
    &||\widehat M(X)- M(X)||+||\widehat\mu'_{0, 0}(X, W)-\mu'_{0, 0}(X, W)||+||\widehat\mu_{1, 0}(X)-\mu_{1, 0}(X)||+\\
    &||\widehat\mu_{1, 1}(X)-\mu_{1, 1}(X)||+||\widehat \tau(X)-\tau(X)||+||\widehat q(X)-q(X)||+||\widehat \pi'(X, W)-\pi'(X, W)||+||\widehat\kappa-\kappa||.
\end{align*}
Because $\kappa^{-1}$ is an empirical average of $\Pr(S=0)$, it is trivial to see that
i) $I(\widehat \kappa^{-1}=0)=o_p(n^{-1/2})$, and ii) $\widehat \kappa\xrightarrow{P} \kappa$.
Note that we also assume $||\widehat q(X)-q(X)||=o_p(n^{-1/2})$. Therefore, 
by conditions \emph{(a3)}, we have $||H(\boldsymbol{\widehat \eta_3})-H(\boldsymbol{\eta_3})|| = o_p(1)$, so that by Lemma 2 of \citet{kennedy2020sharp}, $(\mathbb{P}_n-\mathbb{P})\big\{H(\boldsymbol{\widehat \eta_3})-H(\boldsymbol{\eta_3})\big\}=o_p(n^{-1/2})$.\\
Working on term 2, noticing $\E\{\mu'_{0, 0}(X, W)\}=\E\{M(X)\}$, we have
\begin{align*}
     &\E\{H(\boldsymbol{\widehat \eta_3})-H(\boldsymbol{\eta_3})\}\\
    =&(\widehat\kappa-\kappa)\alpha_{1}/\kappa + \widehat\kappa\\
    &\E\Bigg(\Pr(S=0|X)
    \Bigg[
    \dfrac{\widehat\mu_{1, 1}(X)}{\widehat\mu_{1, 0}(X)}\widehat\mu'_{0, 0}(X, W)-\dfrac{\mu_{1, 1}(X)}{\mu_{1, 0}(X)}\mu'_{0, 0}(X, W)+\dfrac{1-\pi'(X, W)}{1-\widehat\pi'(X, W)}\dfrac{\widehat\mu_{1, 1}(X)}{\widehat\mu_{1, 0}(X)}\\
    &\{\mu'_{0, 0}(X, W)-\widehat\mu'_{0, 0}(X, W)\}\Bigg]+
    \Pr(S=1|X)\widehat\tau(X)\dfrac{\widehat M(X)}{\widehat\mu_{1, 0}(X)}
    \Bigg[
    \dfrac{q(X)}{\widehat q(X)}\{\mu_{1, 1}(X)-\widehat\mu_{1, 1}(X)\}+\\
    &\dfrac{1-q(X)}{1-\widehat q(X)}\dfrac{\widehat\mu_{1, 1}(X)}{\widehat\mu_{1, 0}(X)}\{\mu_{1, 0}(X)-\widehat\mu_{1, 0}(X)\}
    \Bigg]
    \Bigg)\\
    =&(\widehat\kappa-\kappa)\alpha_{1}/\kappa + \widehat\kappa\\
    &\E\Bigg(\Pr(S=0|X)
    \Bigg[
    \dfrac{\widehat\mu_{1, 1}(X)}{\widehat\mu_{1, 0}(X)}\{\widehat\mu'_{0, 0}(X, W)
    -\mu'_{0, 0}(X, W)\}+
    \Bigg\{\dfrac{\widehat\mu_{1, 1}(X)}{\widehat\mu_{1, 0}(X)}-\dfrac{\mu_{1, 1}(X)}{\mu_{1, 0}(X)}\Bigg\}\mu'_{0, 0}(X, W)+\\
    &\dfrac{1-\pi'(X, W)}{1-\widehat \pi'(X, W)}\dfrac{\widehat\mu_{1, 1}(X)}{\widehat\mu_{1, 0}(X)}\{\mu'_{0, 0}(X, W)-\widehat\mu'_{0, 0}(X, W)\}+
    \dfrac{\widehat\tau(X) q(X)\widehat M(X)}{\tau(X) \widehat q(X)\widehat\mu_{1, 0}(X)}\{\mu_{1, 1}(X)-\widehat\mu_{1, 1}(X)\}+\\
    &\dfrac{1-q(X)}{1-\widehat q(X)}
    \dfrac{\widehat\tau(X) \widehat M(X)\widehat\mu_{1, 1}(X)}{\tau(X) \widehat\mu_{1, 0}(X)^{2}}\{\widehat\mu_{1, 0}(X)-\mu_{1, 0}(X)\}
    \Bigg]
    \Bigg)            \end{align*}
    \begin{align*}
    =&(\widehat\kappa-\kappa)\alpha_{1}/\kappa + \widehat\kappa\\
    &\E\Bigg(\Pr(S=0|X)
    \Bigg[
    \dfrac{\widehat\mu_{1, 1}(X)}{\widehat\mu_{1, 0}(X)}\dfrac{\pi'(X, W)-\widehat\pi'(X, W)}{1-\pi'(X, W)}\{\widehat\mu'_{0, 0}(X, W)
    -\mu'_{0, 0}(X, W)\}+\dfrac{\mu'_{0, 0}(X, W)}{\mu_{1, 0}(X)}\\
    &\{\widehat\mu_{1, 1}(X)-\mu_{1, 1}(X)\}+
    \dfrac{\mu'_{0, 0}(X, W)}{\mu_{1, 0}(X)}\dfrac{\widehat\mu_{1, 1}(X)}{\widehat\mu_{1, 0}(X)}\{\mu_{1, 0}(X)-\widehat\mu_{1, 0}(X)\}+
    \dfrac{\widehat\tau(X) q(X)\widehat M(X)}{\tau(X) \widehat q(X)\widehat\mu_{1, 0}(X)}\\
    &\{\mu_{1, 1}(X)-\widehat\mu_{1, 1}(X)\}+
    \dfrac{1-q(X)}{1-\widehat q(X)}
    \dfrac{\widehat\tau(X) \widehat M(X)\widehat\mu_{1, 1}(X)}{\tau(X) \widehat\mu_{1, 0}(X)^{2}}\{\widehat\mu_{1, 0}(X)-\mu_{1, 0}(X)\}
    \Bigg]
    \Bigg)\\
    =&(\widehat\kappa-\kappa)\alpha_{1}/\kappa + \widehat\kappa\\
    &\E\Bigg(\Pr(S=0|X)
    \Bigg[
    \dfrac{\widehat\mu_{1, 1}(X)}{\widehat\mu_{1, 0}(X)}\dfrac{\pi'(X, W)-\widehat\pi'(X, W)}{1-\pi'(X, W)}\{\widehat\mu'_{0, 0}(X, W)
    -\mu'_{0, 0}(X, W)\}+\\
    &\Bigg\{
    \dfrac{\mu_{0, 0}(X)}{\mu_{1, 0}(X)}-\dfrac{\widehat\tau(X) q(X)\widehat\mu_{0, 0}(X)}{\tau(X) \widehat q(X)\widehat\mu_{1, 0}(X)}
    \Bigg\}
    \{\widehat\mu_{1, 1}(X)-\mu_{1, 1}(X)\}+\\
    &\dfrac{\widehat\mu_{1, 1}(X)}{\widehat\mu_{1, 0}(X)}
    \Bigg\{
    \dfrac{1-q(X)}{1-\widehat q(X)}\dfrac{\widehat\tau(X) \widehat\mu_{0, 0}(X)}{\tau(X) \widehat\mu_{1, 0}(X)}-\dfrac{\mu_{0, 0}(X)}{\mu_{1, 0}(X)}
    \Bigg\}
    \{\widehat\mu_{1, 0}(X)-\mu_{1, 0}(X)\}
    \Bigg]
    \Bigg)\\
    \lesssim
    &\left\lVert \widehat\mu_{0, 0}(X)-\mu_{0, 0}(X)\right\rVert\cdot
    \left\lVert \widehat\pi'(X, W)-\pi'(X, W)\right\rVert+
    \Big\{
    \left\lVert \widehat\mu_{1, 1}(X)-\mu_{1, 1}(X)\right\rVert+\left\lVert\widehat\mu_{1, 0}(X)-\mu_{1, 0}(X)\right\rVert
    \Big\}\\
    &+\Bigg\{
    \left\lVert \dfrac{\widehat\mu_{0, 0}(X)}{\widehat\mu_{1, 0}(X)}-\dfrac{\mu_{0, 0}(X)}{\mu_{1, 0}(X)}\right\rVert+\left\lVert\widehat q(X)-q(X)\right\rVert+\left\lVert\widehat\tau(X)-\tau(X)\right\rVert
    \Bigg\}.
\end{align*}
Alternatively, we have
\begin{align*}
     &\E\{H(\boldsymbol{\widehat \eta_3})-H(\boldsymbol{\eta_3})\}\\
    =&(\widehat\kappa-\kappa)\alpha_{1}/\kappa + \widehat\kappa\\
    &\E\Bigg(\Pr(S=0|X)
    \Bigg[
    \dfrac{\widehat\mu_{1, 1}(X)}{\widehat\mu_{1, 0}(X)}\dfrac{\pi'(X, W)-\widehat\pi'(X, W)}{1-\pi'(X, W)}\{\widehat\mu_{0, 0}(X)
    -\mu_{0, 0}(X)\}+\Bigg\{\dfrac{\widehat\mu_{1, 1}(X)}{\widehat\mu_{1, 0}(X)}-\dfrac{\mu_{1, 1}(X)}{\mu_{1, 0}(X)}\Bigg\}\\
    &\mu_{0, 0}(X)+\dfrac{\widehat\mu_{1, 1}(X)}{\widehat\mu_{1, 0}(X)}
    \dfrac{\widehat\tau(X) q(X)}{\tau(X) \widehat q(X)}\widehat\mu_{0, 0}(X)-\dfrac{\widehat\mu_{1, 1}(X)}{\widehat\mu_{1, 0}(X)}
    \dfrac{\widehat\tau(X) q(X)}{\tau(X) \widehat q(X)}\widehat\mu_{0, 0}(X)+
                \end{align*}
    \begin{align*}
    &\dfrac{1-q(X)}{1-\widehat q(X)}
    \dfrac{\widehat\tau(X) \widehat\mu_{0, 0}(X)\widehat\mu_{1, 1}(X)}{\tau(X) \widehat\mu_{1, 0}(X)^{2}}\{\widehat\mu_{1, 0}(X)-\mu_{1, 0}(X)\}
    \Bigg]
    \Bigg)\\
    =&(\widehat\kappa-\kappa)\alpha_{1}/\kappa + \widehat\kappa\\
    &\E\Bigg(\Pr(S=0|X)
    \Bigg[
    \dfrac{\widehat\mu_{1, 1}(X)}{\widehat\mu_{1, 0}(X)}\dfrac{\pi'(X, W)-\widehat\pi'(X, W)}{1-\pi'(X, W)}\{\widehat\mu_{0, 0}(X)
    -\mu_{0, 0}(X)\}+\Bigg\{\dfrac{\widehat\mu_{1, 1}(X)}{\widehat\mu_{1, 0}(X)}-\dfrac{\mu_{1, 1}(X)}{\mu_{1, 0}(X)}\Bigg\}\\
    &\mu_{0, 0}(X)+\Bigg\{\dfrac{\mu_{1, 1}(X)}{\mu_{1, 0}(X)}-\dfrac{\widehat\mu_{1, 1}(X)}{\widehat\mu_{1, 0}(X)}\Bigg\}
    \dfrac{\widehat\tau(X) q(X)}{\tau(X) \widehat q(X)}\widehat\mu_{0, 0}(X)+\dfrac{\mu_{1, 1}(X)}{\mu_{1, 0}(X)}\dfrac{\widehat\tau(X) q(X)}{\tau(X) \widehat q(X)}\widehat\mu_{0, 0}(X)\\
    &-\dfrac{\mu_{1, 1}(X)}{\mu_{1, 0}(X)}\dfrac{\widehat\tau(X) q(X)}{\tau(X) \widehat q(X)}\widehat\mu_{0, 0}(X)+\Bigg\{\dfrac{\mu_{1, 1}(X)}{\mu_{1, 0}(X)}-\dfrac{\widehat\mu_{1, 1}(X)}{\widehat\mu_{1, 0}(X)}\Bigg\}\dfrac{1-q(X)}{1-\widehat q(X)}
    \dfrac{\widehat\tau(X) \widehat\mu_{0, 0}(X)}{\tau(X) \widehat\mu_{1, 0}(X)}\\
    &\{\widehat\mu_{1, 0}(X)-\mu_{1, 0}(X)\}+\dfrac{\mu_{1, 1}(X)}{\mu_{1, 0}(X)}\dfrac{1-q(X)}{1-\widehat q(X)}
    \dfrac{\widehat\tau(X) \widehat\mu_{0, 0}(X)}{\tau(X) \widehat\mu_{1, 0}(X)}\{\widehat\mu_{1, 0}(X)-\mu_{1, 0}(X)\}
    \Bigg]
    \Bigg)\\
    =&(\widehat\kappa-\kappa)\alpha_{1}/\kappa + \widehat\kappa\\
    &\E\Bigg(\Pr(S=0|X)
    \Bigg[
    \dfrac{\widehat\mu_{1, 1}(X)}{\widehat\mu_{1, 0}(X)}\dfrac{\pi'(X, W)-\widehat\pi'(X, W)}{1-\pi'(X, W)}\{\widehat\mu_{0, 0}(X)
    -\mu_{0, 0}(X)\}+\\
    &\Bigg\{\dfrac{\widehat\mu_{1, 1}(X)}{\widehat\mu_{1, 0}(X)}-\dfrac{\mu_{1, 1}(X)}{\mu_{1, 0}(X)}\Bigg\}\dfrac{\widehat\tau(X) q(X)}{\tau(X) \widehat q(X)}\{\widehat\mu_{0, 0}(X)-\mu_{0, 0}(X)\}+\\
    &\Bigg\{\dfrac{\widehat\mu_{1, 1}(X)}{\widehat\mu_{1, 0}(X)}-\dfrac{\mu_{1, 1}(X)}{\mu_{1, 0}(X)}\Bigg\}\mu_{0, 0}(X)\Bigg[
    \dfrac{\widehat\tau(X) q(X)}{\tau(X) \widehat q(X)}-1-
    \dfrac{\widehat\tau(X) }{\tau(X)}\dfrac{1-q(X)}{1-\widehat q(X)}\dfrac{\widehat\mu_{0, 0}(X)}{\mu_{0, 0}(X)}\Big\{1-\dfrac{\mu_{1, 0}(X)}{\widehat\mu_{1, 0}(X)}\Big\}\Bigg]\\
    &+\dfrac{\mu_{1, 1}(X)}{\mu_{1, 0}(X)}\dfrac{q(X)}{\widehat q(X)}
    \dfrac{\widehat\tau(X) \widehat\mu_{0, 0}(X)}{\tau(X) \widehat\mu_{1, 0}(X)}\{\widehat\mu_{1, 0}(X)-\mu_{1, 0}(X)\}\Bigg\{
    1-\dfrac{1-q(X)}{1-\widehat q(X)}\dfrac{\widehat q(X)}{q(X)}
    \Bigg\}
    \Bigg]
    \Bigg)\\
    =&(\widehat\kappa-\kappa)\alpha_{1}/\kappa + \widehat\kappa\\
    &\E\Bigg\{\Pr(S=0|X)
    \Bigg(
    \{\widehat\mu_{0, 0}(X)-\mu_{0, 0}(X)\}\Bigg[
    \dfrac{\widehat\mu_{1, 1}(X)}{\widehat\mu_{1, 0}(X)}\Bigg\{1-\dfrac{1-\pi'(X, W)}{1-\widehat\pi'(X, W)}\Bigg\}+\\
    &\dfrac{\widehat\tau(X) }{\tau(X)}\dfrac{q(X)}{\widehat q(X)}\Bigg\{\dfrac{\mu_{1, 1}(X)}{\mu_{1, 0}(X)}-\dfrac{\widehat\mu_{1, 1}(X)}{\widehat\mu_{1, 0}(X)}\Bigg\}
    \Bigg]+\Bigg\{\dfrac{\mu_{1, 1}(X)}{\mu_{1, 0}(X)}-\dfrac{\widehat\mu_{1, 1}(X)}{\widehat\mu_{1, 0}(X)}
    \Bigg\}\mu_{0, 0}(X)            \end{align*}
    \begin{align*}
    &\Bigg[
    \dfrac{\widehat\tau(X) q(X)}{\tau(X) \widehat q(X)}-1-
    \dfrac{\widehat\tau(X) }{\tau(X)}\dfrac{1-q(X)}{1-\widehat q(X)}\dfrac{\widehat\mu_{0, 0}(X)}{\mu_{0, 0}(X)}\Big\{1-\dfrac{\mu_{1, 0}(X)}{\widehat\mu_{1, 0}(X)}\Big\}\Bigg]+\\
    &\dfrac{\mu_{1, 1}(X)}{\mu_{1, 0}(X)}\dfrac{q(X)}{\widehat q(X)}
    \dfrac{\widehat\tau(X) \widehat\mu_{0, 0}(X)}{\tau(X) \widehat\mu_{1, 0}(X)}\{\widehat\mu_{1, 0}(X)-\mu_{1, 0}(X)\}\Bigg\{
    1-\dfrac{1-q(X)}{1-\widehat q(X)}\dfrac{\widehat q(X)}{q(X)}
    \Bigg\}
    \Bigg)
    \Bigg\}\\
    \lesssim&(\widehat\kappa-\kappa)\alpha_{1}/\kappa + \\
    &\left\lVert \widehat\mu_{0, 0}(X)-\mu_{0, 0}(X)\right\rVert\cdot
    \left\lVert \widehat\pi'(X, W)-\pi'(X, W)\right\rVert+\\
    &\left\lVert \dfrac{\widehat\mu_{1, 1}(X)}{\widehat\mu_{1, 0}(X)}-\dfrac{\mu_{1, 1}(X)}{\mu_{1, 0}(X)}\right\rVert
    \Big\{
    \left\lVert \widehat\mu_{0, 0}(X)-\mu_{0, 0}(X)\right\rVert+
    \left\lVert\widehat\mu_{1, 0}(X)-\mu_{1, 0}(X)\right\rVert+\left\lVert\widehat\tau(X)-\tau(X)\right\rVert
    \Big\}\\
    &+\left\lVert\widehat q(X)-q(X)\right\rVert\Bigg\{
    \left\lVert \dfrac{\widehat\mu_{1, 1}(X)}{\widehat\mu_{1, 0}(X)}-\dfrac{\mu_{1, 1}(X)}{\mu_{1, 0}(X)}\right\rVert+\left\lVert\widehat\mu_{1, 0}(X)-\mu_{1, 0}(X)\right\rVert
    \Bigg\}.
\end{align*}
Combine the first term of 2 and 3, we have
\begin{align*}
     &\mathbb{P}\big\{H(\boldsymbol{\widehat \eta_3})-H(\boldsymbol{\eta_3})\big\}+
    (\mathbb{P}_n-\mathbb{P})H(\boldsymbol{\eta_3})\\
    =&\mathbb{P}_n\{A_3(O, \boldsymbol{\eta})\}+\Big\{\dfrac{\widehat\kappa}{\kappa} -1\Big\}\alpha_{3} +
    \Big\{\dfrac{\kappa}{\widehat\kappa} -1\Big\}\alpha_{3}
\end{align*}
Factoring and simplifying the last two terms yields $\alpha_3 \left\{\widehat\kappa- \kappa\right\}\left\{\frac{1}{\kappa} - \frac{1}{\widehat\kappa}\right\}$, which is
$o_{p}(n^{-1/2})$ by the central limit theorem. 
Therefore, combining terms 1, 2 and 3, we conclude
\begin{align*}
     \widehat \alpha_3 -\alpha_3 
    =O_{p}(R_{2, n})+\mathbb{P}_n\{A_3(O, \boldsymbol{\eta})\}+o_p(n^{-1/2}),
\end{align*}
In particular, if $R_{2, n}= o_p(n^{-1/2})$, then 
\begin{align*}
    \sqrt{n}\big(\widehat \alpha_3 -\alpha_3 \big)\rightsquigarrow\mathcal{N}\big[0, \E\{A_3(O, \boldsymbol{\eta_3})^{2}\}\big].
\end{align*}
The other proofs are committed here as they are trivial and similar to the proofs in \ref{appendix:inference2}.
\end{proof}
\clearpage

\section{Simulation details}\label{appendix:Sim}

In the simulation, we set
\begin{align*}
m(X)&=-0.5-0.4X_1-0.4X_2-0.4X_3-0.4X_4-0.4X_5\\
g(X)&=    0.4X_1+0.5X_2-0.5X_3-0.6X_4+0.6X_5\\
r(X)&=    0.5X_1-0.1X_2+0.3X_3+0.2X_4-0.3X_5.\\
\end{align*}
To correctly specify the models, we use logistic regression--regressing $Y$ on $\{X_{1},\dots, X_{5}\}$ using the subset $\{S=1, A=1\}$, $\{S=1, A=0\}$ and $\{S=0\}$ respectively to model $\mu_{1, 1}(X)$, $\mu_{1, 0}(X)$ and $\mu_{0, 0}(X)$. Similarly, we also use logistic regression to estimate $q(X)$ and $\tau(X)$.

To misspecify the model, we still use logistic regression, but we regress the outcome on $X_{2}$ only without the intercept.
\clearpage

\section{Additional simulations}\label{appendix:AdditionalSim}

Figure \ref{fig:Comp_Conv_mu10x} shows the comparison of $\widehat \alpha_{1}$ and $\widehat \alpha_{1}'$ when $\mu_{1,0}(X)$ is misspecified.
\begin{figure}
    \centering
    \includegraphics[width=\textwidth]{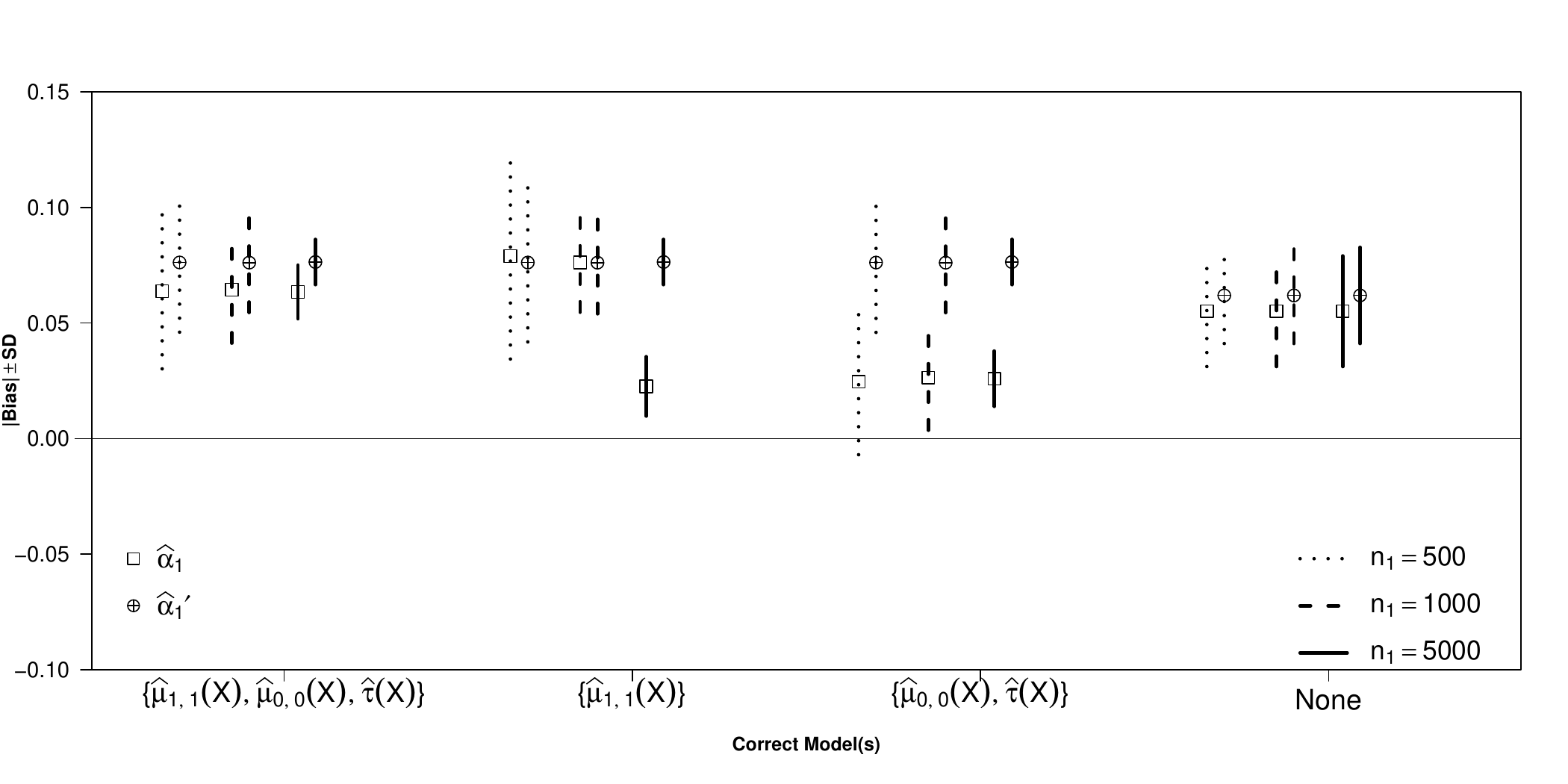}
    \caption{Absolute biases and standard deviations (SD) of the two estimators $\widehat \alpha_{1}$ and $\widehat \alpha_{1}'$ under different correctly specified nuisance functions and sample sizes. In all scenarios, $\widehat \mu_{1,0}(X)$ is misspecified.}  
    \label{fig:Comp_Conv_mu10x}
\end{figure}
\clearpage

\section{Simulation: assessment of rate robustness}\label{appendix:sim_RateRobustness}
To demonstrate the robustness of the proposed methods, we conduct a simple numerical experiment. We focus on estimation of $\alpha_1 = \mathrm{E}\{r_A(X) \mu_0(X) \mid S = 0\}$ defined in Section~\ref{sec:scenario1}. According to Theorem~\ref{thm:inference1}, the proposed influence function-based estimator $\widehat{\alpha}_1$ will be $\sqrt{n}$-consistent and nonparametrically asymptotically efficient when $\big(\left\lVert \widehat \mu_{1, 1}(X)-\mu_{1, 1}(X) \right\rVert+
    \left\lVert \widehat \mu_{1, 0}(X)-\mu_{1, 0}(X) \right\rVert\big)
    \big(
    \left\lVert \widehat r_S(X)-r_S(X) \right\rVert+
    \left\lVert \widehat \tau(X)-\tau(X) \right\rVert = o_p(n^{-1/2})$. 
Whereas simpler plug-in estimators (e.g., based on empirical versions of the identifying formulae given in Theorem~\ref{thm:identification1}) will in general converge only as fast as the component nuisance function estimates, the product asymptotic bias property of the proposed estimator yields faster convergence rates.

In order to illustrate the above advantage, consider the following data generating scenario: baseline confounders $X = (X_1, X_2)$ generated with $X_1 \sim \mathrm{Bernoulli}(0.5)$, independent of $X_2 \sim \mathrm{Uniform}(0,1)$, trial membership indicator $S \mid X_1, X_2 \sim \mathrm{Bernoulli}(t(X))$ with $t(X) = \mathrm{expit}(-0.2 + 0.5X_1 + 1.2 X_2)$, treatment status generated as $A \mid X_1, X_2, S = 0 \equiv 0$, $A \mid X_1, X_2, S = 1 \sim \mathrm{Bernoulli}(q(X))$ with $q(X) = \mathrm{expit}(0.3 + 0.9 X_1 - 0.8 X_2)$, and outcome generated as $Y \mid X_1, X_2, S, A \sim \mathcal{N}(m(X, S, A), 1)$, where
\[m(X, S, A) = \{0.75 (1 - S) + S\}\left\{ 5.2 + X_1 - 1.2 X_2 + A\left(1.2 -0.6 X_1\right)\right\}.\]

To ``estimate'' the nuisance functions, we added noise to perturb the true underlying functions as follows: $\widehat{t}(X) = \mathrm{expit}\left\{\mathrm{logit}(t(X)) + h\epsilon_{t}\right\}$, $\widehat{q}(X) = \mathrm{expit}\left\{\mathrm{logit}(q(X)) + h\epsilon_{q}\right\}$, $\widehat{\mu}_{s,a}(X) = m(X, s, a) + h\epsilon_{\mu, s, a}$, for $(s, a) \in \{(0,0), (1,0), (1,1)\}$, where $\epsilon_{t}, \epsilon_q, \epsilon_{\mu, 0,0}, \epsilon_{\mu, 1,0}, \epsilon_{\mu, 1,1} \overset{\mathrm{iid}}{\sim} \mathcal{N}(n^{-r}, n^{-2r})$, where we set $h = 2.2$ and $r \in (0, 0.5]$ over various scenarios. Next, we set $\widehat{\tau} = \frac{1 - \widehat{t}}{\widehat{t}}$, $\widehat{r}_A = \frac{\widehat{\mu}_{1,1}}{\widehat{\mu}_{1,0}}$, and $\widehat{r}_S = \frac{\widehat{\mu}_{0,0}}{\widehat{\mu}_{1,0}}$. By construction, we guarantee that $\left\lVert \widehat{\tau} - \tau\right\rVert = O_p(n^{-r})$, $\left\lVert \widehat{q} - q\right\rVert = O_p(n^{-r})$, $\left\lVert \widehat{\mu}_{s, a} - \mu_{s,a}\right\rVert = O_p(n^{-r})$, allowing us  to study the performance of the proposed estimator $\widehat{\alpha}_1$ with underlying nuisance errors on the order $O_p(n^{-r})$.

Setting $n \in \{1000,2000,5000\}$ and $r \in \{0.10 + 0.05k: k \in \{0, \ldots, 8\}\}$, we computed nuisance estimates as above, with 5,000 simulation replications for each scenario. In each replication, we computed the proposed influence function-based estimator $\widehat{\alpha}_1$, as well as a plug-in estimator $\frac{\mathbb{P}_n\left[\mathds{1}(S = 0)\widehat{r}_A(X) \widehat{\mu}_{0,0}(X)\right]}{\mathbb{P}_n[\mathds{1}(S = 0)]}$. To summarize results, we compared the root mean-square error (RMSE) of these two estimators: results are displayed in Figure~\ref{fig:sim_rates}.

\begin{figure}
    \centering
    \includegraphics[width = 0.7\linewidth]{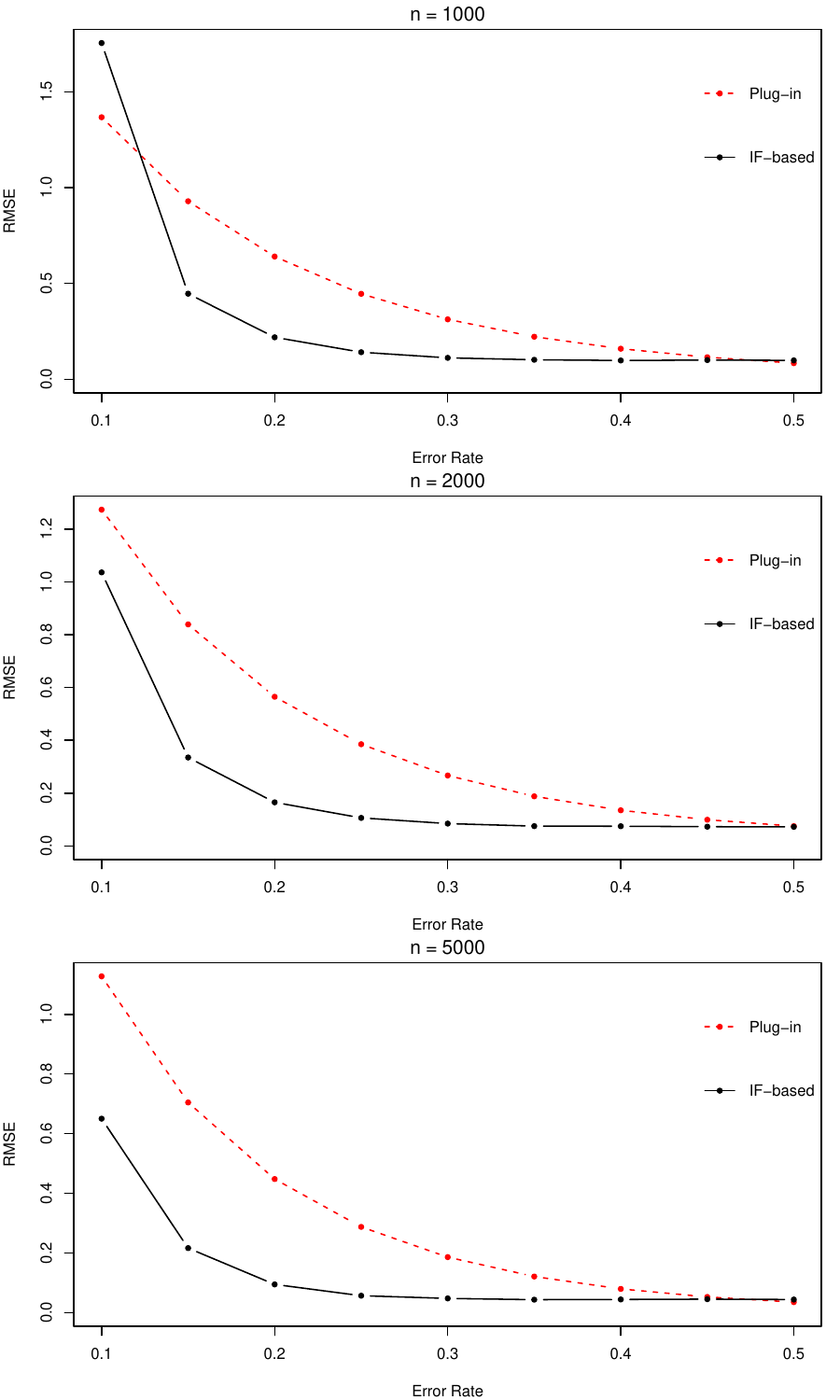}
    \caption{Simulation results for $\widehat {\alpha}_1$ for total sample size $n = 1000, 2000, 5000$, with 5000 iterations per scenario. Varying the error rate $r$ which controls the nuisance error at $O(n^{-r})$, the results include the root mean-square error (RMSE) of the proposed estimator (IF-based) and the estimator that only uses outcome models (plug-in).}
    \label{fig:sim_rates}
\end{figure}

According to the results in Figure~\ref{fig:sim_rates}, the proposed robust estimator outperforms the plug-in estimator, especially when nuisance error is of high order (i.e., the rate parameter $r$ is small). As predicted by Theorem~\ref{thm:inference1}, near-optimal performance of the doubly robust estimator is achieved once $r \geq 0.25$, whereas this is only achieved by the plug-in estimator when $r \approx 0.5$. One would only expect such low nuisance error, however, under correctly specified parametric models for $\widehat{\mu}_{s,a}$, which would be unlikely in any real data application.

\section{Summary statistics of each trial}\label{appendix:application1}

Table \ref{tab:TableOne} shows the baseline characteristics in the meta-trials, stratified by $S$. Results reported as mean (standard deviation) for continuous variables and count (percentage) for binary variables.\\
\begin{table}[h]
    \centering
    \begin{tabular}{ccc}
    \hline
    Source of data, S     &  0 & 1 \\
    \hline\hline
    Number of individuals &   284 & 448\\
    PANSS at 6 week $>$90 & 0.34 (0.48) & 0.30 (0.46)\\
    Paliperidone  ER & 0.68 (0.47) & 0.75 (0.43)\\
    PANSS Baseline & 93.26 (11.57) & 92.98 (12.46)\\
    Age &  41.87 (10.71) & 37.01 (10.63)\\
    Female & 0.29 (0.45) & 0.36 (0.48)\\
    Race, White &129 (45.4) & 219 (48.9)\\
    Race, Black &  151(53.2) & 90 (20.1)\\
    Race, Other & 4 (1.4) & 139 (31.0)\\
    \hline
    \end{tabular}
    \caption{Baseline characteristics in the two trials. Results reported as mean (standard deviation) for continuous variables and count (percentage) for binary variables.}
    \label{tab:TableOne}
\end{table}
\clearpage

\section{Efficient estimation when the target population is a trial}\label{appendix:application_estimator}

In this section, we provide efficient estimators corresponding to a special case of the treatment variation in the target population, as described in Section 3. The special case is the treatment option is the experimental treatment and the control that was assigned to the trial population. That is, both the trial and target population have the same treatment option. In this case, we can make an additional condition, for $a=\{0, 1\}, \E(Y^a|S=0, X)=\E(Y^a|S=0, A=a, X)$. Therefore, $\alpha_2=\E(Y^1|S=0)$ can be identified by $\E(\mu_{0, 1}(X)|S=0)$.

Similar to the previous developed, the influence function of $\alpha_2$, $\beta_2$, $\phi_2$ and $\psi_2$ are
\begin{align*}
    \widetilde A_{2}(O, \boldsymbol{\eta_{2}})=&\kappa(1-S)  \Bigg[
    \mu_{0, 1}(X)-\alpha_{2}  +\dfrac{A}{\pi(X)}\big\{Y-\mu_{0, 1}(X)\big\} 
    \Bigg],\\
    \widetilde B_{2}(O, \boldsymbol{\eta_{2}}) =&\kappa(1-S)  \Bigg[
    \mu_{0, 0}(X)-\beta_{2}  +\dfrac{1-A}{\pi(X)}\big\{Y-\mu_{0, 0}(X)\big\} 
    \Bigg],\\
    \widetilde \phi_{2}(O; \boldsymbol{\eta_{2}})=&\dfrac{1}{\beta_{2}}\{\widetilde A_{2}(O; \boldsymbol{\eta_{2}})-\phi_{2} \widetilde B_{2}(O; \boldsymbol{\eta_{2}})\},\quad\quad
    \widetilde \psi_{2}(O; \boldsymbol{\eta_{2}})=\widetilde A_{2}(O; \boldsymbol{\eta_{2}})-\widetilde B_{2}(O; \boldsymbol{\eta_{2}}),
\end{align*}
respectively.
The corresponding estimators can be obtained by solving the estimating equations, where the estimating equations are the empirical average of the above influence functions.


\end{appendices}
\end{document}